\definecolor{blue}{rgb}{0.1,0.2,0.5}
\definecolor{brown}{rgb}{0.6,0.6,0.2}
\theoremstyle{plain}
\newtheorem{theorem}{Theorem}
\newcommand{\newtheoremwithcrefformat}[2]{%
  \newtheorem{#1}[theorem]{#2}%
  \crefformat{#1}{##2\MakeUppercase#1~##1##3}%
  \Crefformat{#1}{##2\MakeUppercase#1~##1##3}%
}
\newcommand{\newseptheoremwithcrefformat}[2]{%
  \newtheorem{#1}{#2}%
  \crefformat{#1}{##2\MakeUppercase#1~##1##3}%
  \Crefformat{#1}{##2\MakeUppercase#1~##1##3}%
}
\theoremstyle{nonumberplain}
\newtheorem{proof}{Proof}
\newtheorem{clproof}{Proof}
\def\cqedsymbol{\ifmmode$\lrcorner$\else{\unskip\nobreak\hfil
\penalty50\hskip1em\null\nobreak\hfil$\lrcorner$
\parfillskip=0pt\finalhyphendemerits=0\endgraf}\fi} 
\newcommand{\wcol}{\mathrm{wcol}}
\newcommand{\Oof}{\mathcal{O}}
\newcommand{\Oh}{\mathcal{O}}
\newcommand{\Cc}{\mathscr{C}}
\newcommand{\Dd}{\mathscr{D}}
\newcommand{\Ff}{\mathcal{F}}
\renewcommand{\ker}{\mathrm{ker}}
\newcommand{\cl}{\mathrm{cl}}
\newcommand{\cst}{\alpha}
\newcommand{\fwcol}{f_{\wcol}}
\newcommand{\fker}{f_{\ker}}
\newcommand{\fproj}{f_{\mathrm{proj}}}
\newcommand{\fcl}{f_{\cl}}
\newcommand{\fpaths}{f_{\mathrm{pth}}}
\newcommand{\fuqw}{f_{\mathrm{uqw}}}
\newcommand{\fdual}{f_{\mathrm{dual}}}
\newcommand{\Balls}{\mathrm{Balls}}
\newcommand{\N}{\mathbb{N}}
\newcommand{\R}{\mathbb{R}}
\renewcommand{\phi}{\varphi}
\renewcommand{\epsilon}{\varepsilon}
\newcommand{\eps}{\varepsilon}
\newcommand{\minor}{\preccurlyeq}
\newcommand{\dist}{\mathrm{dist}}
\newcommand{\Exp}{\mathbb{E}}
\newcommand{\projnum}{\mu}
\newcommand{\abs}[1]{\ensuremath{\left\lvert#1\right\rvert}}
\newcommand{\clqed}{\renewcommand{\qedsymbol}{\ensuremath{\lrcorner}}}
\renewcommand{\setminus}{-}
\renewcommand{\leq}{\leqslant}
\renewcommand{\geq}{\geqslant}
\journal{European Journal of Combinatorics}
\title{Kernelization and approximation of distance-$r$ independent sets on nowhere dense graphs}
\author{Micha\l{} Pilipczuk\tnoteref{t1}}
\address{University of Warsaw, Poland}
\ead{michal.pilipczuk@mimuw.edu.pl}
\author{Sebastian Siebertz\tnoteref{t1}}
\address{University of Bremen}
\ead{siebertz@uni-bremen.de}
\begin{document}


\begin{frontmatter}

\begin{abstract}
  \noindent For a positive integer $r$, a distance-$r$ independent set
  in an undirected graph~$G$ is a set $I\subseteq V(G)$ of vertices
  pairwise at distance greater than $r$, while a distance-$r$
  dominating set is a set $D\subseteq V(G)$ such that every vertex of
  the graph is within distance at most $r$ from a vertex from $D$.
  We study the duality between the maximum size of a
  distance-$2r$ independent set and the minimum size of a distance-$r$
  dominating set in nowhere dense graph classes, as well as the
  kernelization complexity of the distance-$r$ independent set problem
  on these graph classes. Specifically, we prove that the distance-$r$
  independent set problem admits an almost linear kernel on every
  nowhere dense graph~class.
\end{abstract}

\begin{textblock}{5}(11.13, 13.41)
\includegraphics[width=38px]{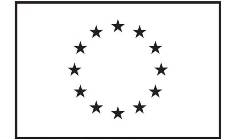}%
\end{textblock}
\end{frontmatter}

\section{Introduction}

\paragraph{Independence and domination} For a graph $G$ and positive
integer $r$, a {\em{distance-$r$ independent set}} in $G$ is a subset
of vertices $I$ whose members are pairwise at distance more than $r$.
On the other hand, a {\em{distance-$r$ dominating set}} in $G$ is a
subset of vertices $D$ such that every vertex of $G$ is at distance at
most $r$ from some member of $D$.  The cases $r=1$ correspond to the
standard notions of an independent and dominating set, respectively.
In this work we will consider combinatorial questions about
distance-$r$ independent and dominating sets, as well as the
computational complexity of the corresponding decision problems
{\sc{Distance-$r$ Independent Set}} and {\sc{\mbox{Distance-$r$} Dominating
    Set}}: given a graph $G$ and integer $k$, decide whether $G$ has a
distance-$r$ independent set of size at least $k$, respectively, a
distance-$r$ dominating set of size at most $k$.

In the following, we denote the minimum size of a
distance-$r$ dominating set in a graph~$G$ by $\gamma_r(G)$ and the
maximum size of a distance-$r$ independent set by
$\alpha_{r}(G)$. Furthermore, if $A\subseteq V(G)$, we write
$\gamma_r(G,A)$ for the minimum size of a {\em{distance-$r$ dominating
    set of $A$}}, i.e., we only require that each vertex of $A$ is at
distance at most $r$ from the dominating set. Similarly, we write
$\alpha_r(G,A)$ for the maximum size of a distance-$r$ independent
subset of $A$.  Observe that for every graph $G$, vertex subset $A$,
and positive integer $r$ we have
$$\alpha_{2r}(G,A)\leq \gamma_r(G,A),$$
because every member of a set that distance-$r$ dominates $A$ can
dominate at most one member of a distance-$2r$ independent subset of
$A$.  The study of a reverse inequality (in the approximate sense) for
certain graph classes is the main combinatorial goal of this work.

Regarding computational complexity, both \textsc{Independent Set} and
\textsc{Dominating Set} are NP-hard~\cite{karp1972reducibility} and
this even holds in very restricted settings, e.g., on planar graphs of
maximum degree~$3$~\cite{garey2002computers,garey1976some}.  Even
worse, under the assumption that $\textsc{P}\neq \textsc{NP}$, for
every $\epsilon>0$, the size of a maximum independent set of an
$n$-vertex graph cannot be approximated in polynomial time within a
factor better than
$\Oof(n^{1-\epsilon})$~\cite{haastad1999clique}. Under the assumption
$\textsc{P}\neq \textsc{NP}$, the domination number of a graph cannot
be approximated in polynomial time within a factor better than
$\Oof(\log n)$~\cite{raz1997sub}. However, it turns out that in
several restricted graph classes the problems can be approximated much
better.  For instance, for fixed $r$ the distance-$r$ variants of both
problems admit a polynomial-time approximation scheme (PTAS) on planar
graphs~\cite{baker1994approximation} and, more generally, in graph
classes with polynomial expansion~\cite{har2017approximation}.  We
will discuss further approximation results later.


\paragraph{Abstract notions of sparsity} In this paper we are going to
study \textsc{\mbox{Distance-$r$} Independent Set} and
\textsc{\mbox{Distance-$r$} Dominating Set} on \emph{nowhere dense}
graph classes.  The notions of {\em{nowhere denseness}} and
{\em{bounded expansion}} are the fundamental definitions of the
sparsity theory introduced by Ne\v{s}et\v{r}il and Ossona de
Mendez~\cite{NesetrilM08,NesetrilM11a}.
Many familiar classes of sparse graphs, like classes of bounded
treewidth, planar graphs, classes of bounded degree, and all classes
that exclude a fixed minor or topological minor have bounded expansion
and are nowhere dense.  In order to facilitate further discussion, we
now recall basic definitions.

Nowhere dense classes and classes of bounded expansion are defined by
imposing restrictions on the graphs that can be found as \emph{bounded
  depth minors} in the class. Formally, for a positive integer $r$, a
graph $H$ with vertex set $\{v_1,\ldots, v_n\}$ is a \emph{depth-$r$
  minor} of a graph~$G$, written $H\minor_r G$, if there are connected
and pairwise vertex disjoint subgraphs $H_1,\ldots, H_n\subseteq G$,
each of radius at most $r$, such that if $v_iv_j\in E(H)$, then there
are $w_i\in V(H_i)$ and $w_j\in V(H_j)$ with $w_iw_j\in E(G)$. Now, a
class $\Cc$ of graphs has bounded expansion if for every positive
integer $r$ and every $H\minor_r G$ for $G\in\Cc$, the edge density
$|E(H)|/|V(H)|$ of $H$ is bounded by some constant
$d(r)$. Furthermore, $\Cc$ is nowhere dense if for every positive
integer $r$ there exists a constant $t(r)$ such that
$K_{t(r)}\not\minor_r G$ for all $G\in\Cc$, where $K_t$ denotes the
complete graph on $t$ vertices.

We call $\Cc$ \emph{effectively nowhere dense}, respectively, of
\emph{effectively bounded expansion}, if the function $t(r)$,
respectively $d(r)$, is computable; such effectiveness is enjoyed by
essentially all natural classes of sparse graphs.  Clearly, every
class of bounded expansion is nowhere dense, but the converse is not
true. For example the class consisting of all graphs $G$ with
$\mathrm{girth}(G)\geq \Delta(G)$ is nowhere dense, however, it does
not have bounded average degree and in particular does not have
bounded expansion, see~\cite{sparsity}.

The duality between independence and domination numbers on classes of
bounded expansion was studied by Dvo\v{r}\'ak~\cite{Dvorak13}, who
proved that for such classes, there is a constant-factor
multiplicative gap between them. More precisely,
Dvo\v{r}\'ak~\cite{Dvorak13} proved that for every class $\Cc$ of
bounded expansion and every positive integer $r$, there exists a
constant $c(r)$ such that every graph $G\in\Cc$ satisfies
\begin{align}\label{eq:1}
\alpha_{2r}(G)\leq \gamma_{r}(G)\leq c(r)\cdot \alpha_{2r}(G).
\end{align}
A by-product of this combinatorial result is a pair of constant-factor
approximation algorithms, for the \textsc{\mbox{Distance-$r$}
  Independent Set} and \textsc{\mbox{Distance-$r$} Dominating Set}
problems on any class of bounded expansion.  One of the goals of this work
is to investigate to what extent the above duality can be lifted to
the more general setting of nowhere dense graph classes.

\paragraph{Fractional parameters} It will be convenient to study the
relation between $\gamma_r(G)$ and $\alpha_{2r}(G)$ through the lenses
of their fractional relaxations.  For a graph $G$ and positive
integer~$r$, consider the following linear programs; here, $N_r(u)$
denotes the set of vertices at distance at most $r$ from $u$ (including $u$ itself).

\begin{align*}
  \gamma_r^\star(G) \coloneqq \min \sum_{v\in V(G)}x_v \quad 
  \text{subject to} \quad  & 
                             \sum_{v\in N_r(u)}x_v \geq 1 
                             \quad \text{for all $u\in V(G)$, and}\\
                           & \hspace{1.2cm} x_v \geq 0 \quad \text{for all $u\in V(G)$}.
\end{align*}
\vspace{-9mm}

\noindent and 
\vspace{2mm}
\begin{align*}
  \alpha_{2r}^\star(G) \coloneqq \max \sum_{v\in V(G)}y_v\quad 
  \text{subject to} \quad & 
                            \sum_{v\in N_r(u)}y_v \leq 1
                            \quad \text{for all $u\in V(G)$, and}\\
                          & \hspace{1.2cm} y_v \geq 0 \quad \text{for all $u\in V(G)$}.
\end{align*}
The two above LPs are dual to each other, and requiring the variables
to be integral yields the values $\gamma_r(G)$ and $\alpha_{2r}(G)$,
respectively. Hence we have
$$\alpha_{2r}(G)\leq \alpha_{2r}^\star(G)=\gamma_r^\star(G)\leq \gamma_r(G).$$

The relationship between $\gamma_r(G)$ and $\alpha_{2r}(G)$ and
their fractional relaxations were discussed by Dvo\v{r}\'ak~\cite{Dvorak19}, and in particular, 
based on the above duality, Dvo\v{r}\'ak~\cite{Dvorak19} improved 
the bounds in Equality~(\ref{eq:1}) for classes of 
bounded expansion and nowhere dense classes.

\paragraph{VC-dimension} Consider a ground set $U$ and a set system
(family) $\mathcal{F}$ consisting of subsets of $U$.  A subset
$X\subseteq U$ is \emph{shattered} by $\mathcal{F}$ if for every
subset $Y\subseteq X$ there exists $F\in \mathcal{F}$ such that
$F\cap X=Y$. The \emph{Vapnik-Chervonenkis dimension}, short
\emph{VC-dimension}, of $\mathcal{F}$ is the maximum size of a set
shattered by $\mathcal{F}$~\cite{chervonenkis1971theory}.  We also
define the notions of a {\em{$2$-shattered}} set and the
{\em{2VC-dimension}} of a set system by restricting subsets
$Y\subseteq X$ considered in the above definition only to subsets of
size exactly $2$. Clearly, the VC-dimension of a set system is upper
bounded by its 2VC-dimension.

A fundamental result about VC-dimension is that in set systems of
bounded VC-dimension the gap between integral and fractional hitting
sets is bounded.  A {\em{hitting set}} of a set system~$\mathcal{F}$
over $U$ is a subset $H\subseteq U$ that intersects every member of
$\mathcal{F}$, while a {\em{fractional hitting set}} is a distribution
of weights from $[0,1]$ among elements of $U$ so that every member of
$\mathcal{F}$ has total weight at least $1$.  Let~$\tau(\mathcal{F})$
and $\tau^\star(\mathcal{F})$ denote the minimum size, respectively
weight, of an integral, respectively fractional, hitting set of
$\mathcal{F}$.

\begin{theorem}[see e.g. \cite{bronnimann1995almost,even2005hitting}]\label{thm:vc-hitting-sets}
  There exists a universal constant $C$ such that for every set system
  $\mathcal{F}$ of VC-dimension at most $d$, we have
$$\tau(\mathcal{F})\leq C\cdot d\cdot \tau^\star(\mathcal{F})\cdot \ln \tau^\star(\mathcal{F}).$$
Moreover, there exists a polynomial-time algorithm that computes a
hitting set of $\mathcal{F}$ of size bounded as above.
\end{theorem}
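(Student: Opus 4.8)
The plan is to obtain the bound as a consequence of the classical $\epsilon$-net theorem of Haussler and Welzl. Fix an optimal fractional hitting set, i.e.\ a weighting $w\colon U\to[0,1]$ with $\sum_{u\in U}w(u)=\tau^\star(\mathcal{F})$ and $\sum_{u\in F}w(u)\geq 1$ for every $F\in\mathcal{F}$; such a $w$ with rational entries can be computed in polynomial time by solving the defining linear program (this will also serve the algorithmic part). Dividing through by $\tau^\star(\mathcal{F})$ turns $w$ into a probability distribution $\mu$ on $U$ with the property that $\mu(F)\geq \epsilon\coloneqq 1/\tau^\star(\mathcal{F})$ for every $F\in\mathcal{F}$.

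Next I would invoke the $\epsilon$-net theorem: for a set system of VC-dimension at most $d$ and any probability distribution $\mu$ on its ground set, a sample of $O((d/\epsilon)\ln(1/\epsilon))$ elements drawn independently from $\mu$ is, with probability bounded away from zero, an $\epsilon$-net, that is, it meets every set of $\mu$-measure at least $\epsilon$. Applied to the distribution and the threshold $\epsilon$ constructed above, this produces a set $S\subseteq U$ with $|S|=O(d\cdot\tau^\star(\mathcal{F})\cdot\ln\tau^\star(\mathcal{F}))$ that intersects every member of $\mathcal{F}$, since each such member has $\mu$-measure at least $\epsilon$; hence $S$ is a hitting set, which already yields the claimed inequality. (The degenerate regime in which $\tau^\star(\mathcal{F})$ is bounded by a constant, so that $\ln\tau^\star(\mathcal{F})$ does not bound the net size from below, causes no trouble: there a hitting set of constant size exists and can be found by brute force, and the universal constant $C$ absorbs it.)

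For the algorithmic statement the only issue is that the $\epsilon$-net argument is probabilistic. There are two standard remedies. One is to derandomize the sampling: the failure probability in the $\epsilon$-net theorem is controlled by a double-sampling argument combined with the Sauer--Shelah bound on the shatter function of a system of bounded VC-dimension, and the resulting pessimistic estimator can be minimized greedily via the method of conditional expectations, yielding a deterministic polynomial-time construction of an $\epsilon$-net of the same asymptotic size --- here we use that $\mathcal{F}$ is given explicitly, so that set membership and $\mu$-measures are computable. The other remedy, which bypasses the linear program altogether, is the iterative reweighting scheme of Br\"onnimann and Goodrich~\cite{bronnimann1995almost}: maintain integer multiplicities on the elements of $U$, repeatedly extract a weighted $\epsilon'$-net of size $O(d\cdot\tau^\star\cdot\ln\tau^\star)$ for $\epsilon'=\Theta(1/\tau^\star)$, and whenever this net misses some $F\in\mathcal{F}$, double the multiplicities of all elements of $F$; a potential argument comparing the growth of the total weight (a factor $1+\epsilon'$ per round) against the weight that is forced to accumulate on any fixed hitting set of near-optimal size bounds the number of rounds polynomially, and the last net produced is the desired hitting set.

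I expect the $\epsilon$-net theorem, and --- for the algorithmic half --- its derandomization, to be the only genuinely nontrivial ingredients; everything else is routine normalization and bookkeeping. If citing the $\epsilon$-net theorem as a black box is acceptable then the write-up is short; otherwise I would include the standard double-sampling proof in full, as that is the one real obstacle.
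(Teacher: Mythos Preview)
The paper does not prove this theorem; it is quoted as a known result with references to \cite{bronnimann1995almost,even2005hitting} and used as a black box. So there is no ``paper's own proof'' to compare against.

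That said, your sketch is correct and is precisely the standard argument underlying the cited references: normalize an optimal fractional hitting set to a probability measure so that every member of $\mathcal{F}$ has mass at least $\epsilon = 1/\tau^\star(\mathcal{F})$, then apply the Haussler--Welzl $\epsilon$-net theorem to obtain a net of size $O((d/\epsilon)\ln(1/\epsilon)) = O(d\cdot\tau^\star\cdot\ln\tau^\star)$, which is automatically a hitting set. For the algorithmic part, both remedies you mention are standard; the Br\"onnimann--Goodrich reweighting scheme is in fact exactly the content of the first cited reference. Your handling of the degenerate small-$\tau^\star$ regime is also fine. If you were to write this up, citing the $\epsilon$-net theorem as a black box is entirely appropriate here, since the paper itself treats the whole statement as a citation.
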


As proved in~\cite{adler2014interpreting}, any nowhere dense class
$\Cc$ of graphs is {\em{stable}} (a model theoretic property that
describes the complexity of definable set systems); see also~\cite{l2017number} for a
combinatorial proof of this fact.  This, in particular, implies the
following assertion: for every positive integer $r$ there exists a constant
$d(r)$ such that for every $G\in \Cc$ the family of distance-$r$ balls
$$\Balls_r(G)\coloneqq \{\{v\colon \dist_G(u,v)\leq r\}\colon u\in V(G)\},$$
treated as a set system over $V(G)$, has VC-dimension at most $d(r)$.
Combining this with \cref{thm:vc-hitting-sets} shows that
$$\gamma_r(G)\leq C\cdot d(r)\cdot \gamma^\star_r(G)\cdot \ln \gamma^\star_r(G)$$
for every graph $G\in \Cc$.
However, both~\cite{adler2014interpreting} and~\cite{l2017number}
only prove the statement about stability, and consequentely do not provide explicit bounds on the constant~$d(r)$ in the above inequality.

Observe that VC-dimension is a hereditary measure, i.e., for any
subset $A\subseteq U$ of the universe, the VC-dimension of the system
$\Ff\cap A\coloneqq \{F\cap A \colon F\in\Ff\}$ is not larger than the
VC-dimension of $\Ff$. Hence, \cref{thm:vc-hitting-sets}
applied to the set system $\Ff\cap A$ yields
$\tau(\mathcal{F}\cap A)\leq C\cdot d\cdot \tau^\star(\mathcal{F}\cap
A)\cdot \ln \tau^\star(\mathcal{F}\cap A)$,
and we can make the same conclusion about the system stemming from the
$r$-neighborhoods of graphs from a nowhere dense class $\Cc$. That is,
if $A\subseteq V(G)$ for $G\in\Cc$, then
$\gamma_r(G,A)\leq C\cdot d(r)\cdot \gamma^\star_r(G,A)\cdot \ln
\gamma^\star_r(G,A)$,
and the algorithm provided by \cref{thm:vc-hitting-sets} can be
applied to compute a vertex subset that distance-$r$ dominates~$A$
with this size guarantee.

\paragraph{Contribution: duality in nowhere dense classes} We first
study the VC-dimension of systems of radius-$r$ balls in graphs from
nowhere dense graph
classes. 
By following the lines of a recent result of Bousquet and
Thomass\'e~\cite{BousquetT15}, we are able to provide explicit bounds
for the VC-dimension, in fact even for the $2$VC-dimension, of the
$r$th powers of graphs for any nowhere dense class. More precisely, we
prove the following theorem.

\begin{restatable}{theorem}{vc}\label{thm:vc2}
  Let $r\in \N$ and let $G$ be a graph.  If $K_t\not\minor_r G$, then
  the $2$VC-dimension of the set system $\Balls_r(G)$ is at most
  $t-1$.
\end{restatable}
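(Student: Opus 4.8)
The plan is to establish the contrapositive: if the $2$VC-dimension of $\Balls_r(G)$ is at least $t$, then $K_t\minor_r G$. Since every subset of a $2$-shattered set is again $2$-shattered, we may fix $X=\{x_1,\dots,x_t\}\subseteq V(G)$ of size exactly $t$ that is $2$-shattered by $\Balls_r(G)$. Unravelling the definition, for every pair $\{a,b\}$ with $a\neq b$ there is a vertex $u_{ab}\in V(G)$ with $N_r(u_{ab})\cap X=\{x_a,x_b\}$; equivalently, $\dist_G(u_{ab},x_a)\le r$ and $\dist_G(u_{ab},x_b)\le r$, while $\dist_G(u_{ab},x_c)>r$ for every $c\notin\{a,b\}$. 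These connector vertices will witness the edges of the desired $K_t$-model.

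For the branch sets I would use Voronoi-type regions around the terminals, in the spirit of the Bousquet--Thomass\'e argument. Fix an arbitrary linear order on $X$ and, for $v\in V(G)$, let $\mathrm{close}(v)$ be the terminal of $X$ minimising the pair $(\dist_G(v,x),x)$ lexicographically (distance first, then the fixed order). Put
\[
  H_i \;=\; \{\, v\in V(G) \;:\; \dist_G(v,x_i)\le r \ \text{and}\ \mathrm{close}(v)=x_i \,\}.
\]
The routine part is to verify that each $G[H_i]$ is connected of radius at most $r$ with centre $x_i$, and that the $H_i$ are pairwise disjoint. Disjointness is immediate since $\mathrm{close}$ assigns each vertex to a single terminal. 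For connectivity and the radius bound one checks that whenever $v\in H_i$ and $w$ lies on a shortest $v$--$x_i$ path, then $\dist_G(w,x_i)\le\dist_G(v,x_i)\le r$ and, by the triangle inequality together with the minimality defining $\mathrm{close}(v)$ (ties included), also $\mathrm{close}(w)=x_i$; hence the whole shortest path stays inside $H_i$.

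The crux is to exhibit, for every pair $\{i,j\}$, an edge of $G$ with one endpoint in $H_i$ and the other in $H_j$. Here the connector $u_{ij}$ enters: because $\dist_G(u_{ij},x_i),\dist_G(u_{ij},x_j)\le r$ while $\dist_G(u_{ij},x_c)>r$ for all other $c$, the terminal $\mathrm{close}(u_{ij})$ must be $x_i$ or $x_j$; say it is $x_i$, so $u_{ij}\in H_i$. Take a shortest path $u_{ij}=q_0,q_1,\dots,q_m=x_j$ with $m=\dist_G(u_{ij},x_j)\le r$. The key observation is that every $q_s$ lies in $H_i\cup H_j$: since the suffix of a shortest path is a shortest path, $\dist_G(q_s,x_j)=m-s\le r$, so $\mathrm{close}(q_s)$ is at distance at most $m-s$ from $q_s$; if $\mathrm{close}(q_s)=x_c$ for some $c\notin\{i,j\}$, then $\dist_G(u_{ij},x_c)\le \dist_G(u_{ij},q_s)+\dist_G(q_s,x_c)\le s+(m-s)=m\le r$, contradicting $\dist_G(u_{ij},x_c)>r$; hence $\mathrm{close}(q_s)\in\{x_i,x_j\}$, and together with $\dist_G(q_s,x_j)\le r$ this forces $q_s\in H_i\cup H_j$. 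As $q_0\in H_i$, $q_m\in H_j$, and $H_i,H_j$ are disjoint, some edge $q_sq_{s+1}$ of the path joins $H_i$ to $H_j$. Consequently $H_1,\dots,H_t$ form a depth-$r$ minor model of $K_t$, so $K_t\minor_r G$, contradicting the hypothesis. I expect the main obstacle to be exactly this last step: ruling out that the path from $u_{ij}$ towards $x_j$ strays into a third region $H_c$. This is precisely where the ``far from all other terminals'' half of the $2$-shattering condition is used, via the triangle-inequality estimate above, and it is also why the tie-breaking in the definition of $\mathrm{close}$ has to be done consistently.
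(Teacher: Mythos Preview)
Your argument is correct and complete; the Voronoi-based branch sets do the job, and your check that the shortest $u_{ij}$--$x_j$ path cannot stray into a third region is exactly the right use of the 2-shattering hypothesis.

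The paper, following Bousquet and Thomass\'e, proceeds differently. Rather than assigning every vertex of $G$ to a Voronoi cell, it works only with the witnesses $v_{ij}$: for each pair $\{i,j\}$ it selects an intermediate vertex $u_{ij}$ on the way from $v_{ij}$ to $\{a_i,a_j\}$ that minimises $\max(\dist(u_{ij},a_i),\dist(u_{ij},a_j))$ subject to $\dist(v_{ij},u_{ij})+\dist(u_{ij},a_\ell)\le r$ for $\ell\in\{i,j\}$, then takes shortest paths $P^i_{ij},P^j_{ij}$ from $u_{ij}$ to the two terminals and assigns the midpoint $u_{ij}$ to whichever side is closer. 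The branch set $X_i$ is the union of all these ``$i$-half-paths'' $Q^i_{ij}$ over $j\neq i$, and the heart of the proof is a case analysis showing that $Q^i_{ij}$ and $Q^{i'}_{i'j'}$ can intersect only when $i=i'$. Your approach replaces this somewhat delicate intersection analysis by a single global tie-breaking rule, and the adjacency between $H_i$ and $H_j$ falls out of one short triangle-inequality estimate; the paper's construction, by contrast, keeps the branch sets explicitly small (just the relevant half-paths rather than full Voronoi cells), which is not needed here but is closer to how the Bousquet--Thomass\'e argument is usually presented.
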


We immediately derive the following; here, $C$ is the constant
provided by \cref{thm:vc-hitting-sets}.

\begin{corollary}\label{cor:lp-relation}
  Let $\Cc$ be a nowhere dense class of graphs such that
  $K_{t(r)}\not\minor_r G$ for all $r\in\N$. Then for every $r\in \N$,
  every $G\in\Cc$ and every $A\subseteq V(G)$ we have
  $$\alpha_{2r}(G,A)\leq \alpha_{2r}^\star(G,A)=\gamma_r^\star(G,A)\leq
  \gamma_r(G,A)\leq C\cdot t(r)\cdot \gamma_r^\star(G,A)\cdot \ln
  \gamma_r^\star(G,A).$$
  Moreover, there exists a polynomial-time algorithm that computes a
  distance-$r$ dominating set of $A$ in $G$ of size bounded as above.
\end{corollary}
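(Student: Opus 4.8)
The plan is to obtain the statement as a direct consequence of Theorem~\ref{thm:vc2} and Theorem~\ref{thm:vc-hitting-sets}: the three relations on the left are the standard LP-relaxation and LP-duality facts, specialized to the set~$A$, while the last inequality and the algorithm come from the hitting-set machinery once one recognizes that distance-$r$ domination of~$A$ is a hitting-set problem whose set system is a subfamily of $\Balls_r(G)$.

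First I would record the LP side. Let $\gamma_r^\star(G,A)$ be the natural variant of the LP defining $\gamma_r^\star(G)$ in which the covering constraints range only over $u\in A$, and let $\alpha_{2r}^\star(G,A)$ be the variant of the LP defining $\alpha_{2r}^\star(G)$ in which the objective sums only over $u\in A$ (the packing constraints still ranging over all of $V(G)$). Using the symmetry $v\in N_r(u)\iff u\in N_r(v)$, these two programs are seen to be dual to each other; both are feasible and bounded, so strong LP duality gives $\alpha_{2r}^\star(G,A)=\gamma_r^\star(G,A)$. The inequality $\gamma_r^\star(G,A)\le\gamma_r(G,A)$ holds because the indicator vector of any distance-$r$ dominating set of~$A$ is an integral feasible solution of the minimization program, and $\alpha_{2r}(G,A)\le\alpha_{2r}^\star(G,A)$ holds because the indicator vector of a distance-$2r$ independent subset~$I$ of~$A$ is feasible for the maximization program --- two distinct members of~$I$ lying in a common ball $N_r(v)$ would be at distance at most~$2r$ by the triangle inequality.

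The main step is the last inequality together with the algorithmic claim, and here I would apply Theorem~\ref{thm:vc-hitting-sets} to the set system $\Ff\coloneqq\{N_r(u)\colon u\in A\}$ over the ground set $V(G)$. One checks three things: a subset of $V(G)$ is a hitting set of~$\Ff$ precisely when it distance-$r$ dominates~$A$, so $\tau(\Ff)=\gamma_r(G,A)$; a weighting of $V(G)$ is a fractional hitting set of~$\Ff$ precisely when it is feasible for the LP defining $\gamma_r^\star(G,A)$, so $\tau^\star(\Ff)=\gamma_r^\star(G,A)$; and, since $A\subseteq V(G)$, the family~$\Ff$ is a subfamily of $\Balls_r(G)$, so --- using that passing to a subfamily never increases the VC-dimension, together with the fact that VC-dimension is at most $2$VC-dimension and with Theorem~\ref{thm:vc2} --- the VC-dimension of~$\Ff$ is at most $t(r)-1\le t(r)$. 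Plugging these into Theorem~\ref{thm:vc-hitting-sets} yields $\gamma_r(G,A)=\tau(\Ff)\le C\cdot t(r)\cdot\tau^\star(\Ff)\cdot\ln\tau^\star(\Ff)=C\cdot t(r)\cdot\gamma_r^\star(G,A)\cdot\ln\gamma_r^\star(G,A)$. For the algorithm, note that~$\Ff$ is computable in polynomial time (one breadth-first search from each vertex of~$A$) and $\gamma_r^\star(G,A)$ is computable by solving the corresponding linear program, so the polynomial-time routine of Theorem~\ref{thm:vc-hitting-sets} applied to~$\Ff$ returns a hitting set of~$\Ff$, that is, a distance-$r$ dominating set of~$A$ in~$G$, of the claimed size.

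I do not anticipate any real obstacle, since this is a corollary; the only points demanding care are bookkeeping. One must invoke the subfamily monotonicity of VC-dimension (restricting the admissible ball centers to~$A$) rather than the ground-set hereditariness mentioned in the preceding discussion, and one must note that the hypothesis of Theorem~\ref{thm:vc-hitting-sets} concerns VC-dimension while Theorem~\ref{thm:vc2} controls the larger $2$VC-dimension. Finally, the degenerate case $A=\emptyset$ (and, depending on the exact reading of the $\ln$-factor in Theorem~\ref{thm:vc-hitting-sets}, the case $\gamma_r^\star(G,A)\le 1$) should be dispatched separately, where all quantities vanish and there is nothing to prove.
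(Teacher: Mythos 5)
Your proof is correct and follows essentially the same route the paper intends: the corollary is presented as an immediate consequence of Theorem~\ref{thm:vc2}, Theorem~\ref{thm:vc-hitting-sets}, and the LP chain, and your argument assembles exactly those ingredients. Your parenthetical caution is well-placed and in fact tidies up a small imprecision in the paper's preliminary discussion: the system whose hitting sets are precisely the distance-$r$ dominating sets of~$A$ is the subfamily $\{N_r(u)\colon u\in A\}$ of $\Balls_r(G)$ over ground set $V(G)$, and the right monotonicity to invoke is indeed subfamily monotonicity, whereas the ground-set restriction $\{F\cap A\colon F\in\Ff\}$ mentioned in the paper would force the hitting set to lie inside~$A$ and so does not directly model $\gamma_r(G,A)$ (the two are nonetheless interchangeable here because the incidence matrix of $\Balls_r(G)$ is symmetric).
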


\cref{cor:lp-relation} gives an upper bound of
$\Oh(\log \gamma^\star_r(G,A))$ on the multiplicative gap between
$\gamma_r(G,A)$ and $\gamma^\star_r(G,A)$.  For a lower bound, we
prove that one cannot expect that this gap can be bounded by a
constant on every nowhere dense class; recall that this is the case
for classes of bounded expansion~\cite{Dvorak13}.

\begin{restatable}{theorem}{gap}\label{thm:non-const-frac-integral}
  There exists a nowhere dense class $\Cc$ of graphs with the property
  that for every $r\in \N$ we have
  $$\sup_{G\in \Cc}\, \frac{\gamma_r(G)}{\gamma_{r}^\star(G)} = +\infty.$$
\end{restatable}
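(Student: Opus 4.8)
The plan is to construct $\Cc$ as a class of bipartite ``incidence-like'' graphs that are locally sparse (ensuring nowhere denseness) but encode, within balls of a fixed radius, a set system whose fractional-to-integral hitting set gap is large. The model to keep in mind is the classical projective-plane or, more flexibly, the biregular incidence construction showing that the $\log$ factor in Theorem~\ref{thm:vc-hitting-sets} is tight for systems of bounded VC-dimension. Concretely, fix $r=1$ first. For each integer $m$ take a ground set of $m$ ``point'' vertices and a collection $\mathcal{F}_m$ of subsets of these points, each of size roughly $\sqrt m$, arranged so that $\tau^\star(\mathcal{F}_m) = O(\sqrt m)$ but $\tau(\mathcal{F}_m) = \Omega(\sqrt m \log m)$; such systems exist and moreover can be taken to have bounded VC-dimension. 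Build a bipartite graph $G_m$ with the points on one side and one vertex per set $F\in\mathcal{F}_m$ on the other side, with $F$ adjacent exactly to the points it contains. Then $N_1(v)$ for a set-vertex $v$ is (essentially) $F\cup\{v\}$, so dominating the point side at distance $1$ is the hitting-set problem for $\mathcal{F}_m$ (one must also cover the set-vertices, but each set-vertex is covered by any of its points, so this only affects lower-order terms). Hence $\gamma_1(G_m)/\gamma_1^\star(G_m) = \Omega(\log m)$, which tends to infinity, while the edge count is $O(m^{3/2}) = O(n)$ with $n=|V(G_m)|$, so the class $\{G_m : m\in\N\}$ has bounded degeneracy — in fact it is a class of bounded expansion if the $\mathcal{F}_m$ are chosen with bounded VC-dimension — and in particular is nowhere dense.

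The subtlety is that we need the supremum to be infinite for \emph{every} fixed $r$ simultaneously, using a single class $\Cc$. Two routes handle this. The first: take $\Cc = \{G_m : m\in\N\}$ as above, and observe that $\gamma_r(G_m) \ge \gamma_{2r-1}(G_m^{1/(2r-1)})$-type inequalities are the wrong direction, so instead subdivide: replace each edge of $G_m$ by a path with $r$ internal vertices (an $r$-subdivision), obtaining $G_m^{(r)}$. Distances between original point-vertices and set-vertices in $G_m^{(r)}$ are multiplied by $r+1$, so a radius-$(r+1)$ ball around a set-vertex is again essentially its defining set $F$, and the distance-$(r+1)$ domination problem restricted to the point-vertices is again the hitting-set problem for $\mathcal{F}_m$; thus $\gamma_{r+1}(G_m^{(r)})/\gamma_{r+1}^\star(G_m^{(r)})$ is large. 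But this still fixes $r$ in the construction. To get all $r$ at once in one class, instead let $\Cc$ contain, for every pair $(m, s)$ with $m, s\in\N$, the graph $G_{m,s}$ obtained from the system $\mathcal{F}_m$ by the $s$-subdivision. Then $\Cc$ is nowhere dense (subdivisions of a nowhere dense class, indexed by a parameter, still exclude $K_t$ as a bounded-depth minor with the appropriate bound — one checks that a depth-$\ell$ minor of $G_{m,s}$ with $s > 2\ell$ is very restricted because the subdivision paths are too long to be contracted into the minor model, while for $s\le 2\ell$ we fall back on the bounded expansion of the un-subdivided family). For any fixed $r$, the subfamily $\{G_{m, r-1} : m\in\N\} \subseteq \Cc$ already witnesses $\sup_{G\in\Cc}\gamma_r(G)/\gamma_r^\star(G) = +\infty$ by the computation above. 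That is the whole point of quantifying $r$ inside the supremum rather than outside.

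The main obstacle I expect is twofold. First, producing and verifying the explicit set system $\mathcal{F}_m$ with $\tau^\star = O(\sqrt m)$ and $\tau = \Omega(\sqrt m \log m)$ and bounded VC-dimension: this is standard (truncated projective planes, or random constructions, or the tightness examples accompanying Theorem~\ref{thm:vc-hitting-sets}) but the bookkeeping of which known construction gives all three properties cleanly needs care; one may instead cite it. Second, and more delicate, is the nowhere-denseness proof for the subdivided class $\Cc$: one must show that for every $\ell$ there is $t(\ell)$ with $K_{t(\ell)} \not\minor_\ell G$ for all $G\in\Cc$. The clean argument is that in $G_{m,s}$ every vertex of degree $\ge 3$ is an original vertex, and any two original vertices are at distance $\ge s+1$; hence a depth-$\ell$ minor model with $s \ge 2\ell+1$ can contain at most one original vertex per branch set, forcing the minor to be a subgraph of a path-like graph and hence $K_3$-minor-$\ell$-free; for the finitely many values $s \le 2\ell$ the graphs $G_{m,s}$ are $s$-subdivisions of the bounded-expansion family $\{G_m\}$, and $s$-subdivisions of a bounded-expansion class form a bounded-expansion class, so $K_{t}$ is excluded as a depth-$\ell$ minor for a uniform $t$. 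Combining the two regimes gives the required $t(\ell)$, completing the proof that $\Cc$ is nowhere dense while having unbounded fractional-to-integral domination gap at every radius.
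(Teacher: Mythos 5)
Your proposal contains a fatal inconsistency. You claim that the incidence graphs $\{G_m\}$ form a class of bounded expansion (provided $\mathcal{F}_m$ has bounded VC-dimension). But the whole point of this theorem is that such a gap \emph{cannot} occur on a class of bounded expansion: Dvo\v{r}\'ak's theorem, recalled in the introduction, gives $\gamma_r(G)\leq c(r)\cdot\alpha_{2r}(G)$ for all $G$ in a bounded expansion class, and since $\alpha_{2r}(G)\leq\gamma_r^\star(G)$ by LP weak duality, this forces $\gamma_r(G)/\gamma_r^\star(G)\leq c(r)$. So if $\{G_m\}$ really has bounded expansion, it cannot witness $\gamma_1/\gamma_1^\star\to\infty$; conversely, if the gap is unbounded then the class is not of bounded expansion, and you give no proof that it is even nowhere dense. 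Bounded degeneracy (a linear number of edges) is not enough --- the class of $1$-subdivisions of all cliques is $2$-degenerate and somewhere dense --- and bounded VC-dimension of $\mathcal{F}_m$ does not prevent large $K_{t,t}$ subgraphs in the incidence graph, since a biclique realizes only one trace and so contributes nothing to shattering. The nowhere-denseness of the base class at radius $1$ is exactly the delicate part, and your proposal does not address it.

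The paper resolves this tension by choosing a base class that is visibly nowhere dense but not of bounded expansion: $\Dd=\{G:\mathrm{girth}(G)\geq\Delta(G)\}$. It populates $\Dd$ with random $d$-regular graphs (with short cycles pruned). The $\Omega(\ln d)$ integral-versus-fractional gap at $r=1$ then comes from the Alon--Wormald lower bound on dominating sets in random regular graphs, while assigning weight $1/d$ to every vertex gives a fractional dominating set of size $n/d$. The lesson is that graphs of unbounded degree are unavoidable. For the lift to general $r$, the paper does not use a plain subdivision but the gadget $\pendantr{G}{r}$, which attaches a new vertex $x$ by a length-$r$ path to every subdivision vertex and a pendant path to $y$ off $x$; a single extra dominator at $x$ then covers all non-original vertices at radius $r$, yielding the clean identities $\gamma_r(\pendantr{G}{r})=\gamma_1(G)+1$ and $\gamma_r^\star(\pendantr{G}{r})=\alpha_{2r}^\star(\pendantr{G}{r})=\alpha_2^\star(G)+1=\gamma_1^\star(G)+1$. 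Your plain-subdivision lift would still need to argue that subdivision vertices are always dominated for free and never help as dominators; the pendant gadget makes this automatic and the calculation exact. But this is a secondary issue compared to the missing (and, as stated, impossible) base case.
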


Finally, we want to investigate the multiplicative gap between
$\gamma_r(G,A)$ and~$\alpha_{2r}(G,A)$.  While the lower bound of
\cref{thm:non-const-frac-integral} asserts that in some nowhere
dense class this gap cannot be bounded by any constant, the upper
bound of \cref{cor:lp-relation} does not provide any upper
bound in terms of $\alpha_{2r}(G,A)$.  To this end, we leverage the
kernelization results for {\sc{Distance-$r$ Dominating Set}} in
nowhere dense classes of~\cite{EickmeyerGKKPRS17} to prove the
following.

\begin{restatable}{theorem}{duality}\label{thm:duality}
  Let $\Cc$ be a nowhere dense class of graphs. There exists a
  function $\fdual\colon \N\times\R\rightarrow \N$ such that for all
  $G\in \Cc$, $A\subseteq V(G)$, $r\in \N$, and $\eps>0$, we have
  $$\gamma_r(G,A)\leq \fdual(r,\eps)\cdot \alpha_{2r+1}(G,A)^{1+\eps}\leq \fdual(r,\eps)\cdot \alpha_{2r}(G,A)^{1+\eps}.$$
  Furthermore, there is a polynomial-time algorithm that given
  $G,A,r,\eps$ as above, computes a distance-$r$ dominating set of $A$
  in $G$ of size bounded as above.
\end{restatable}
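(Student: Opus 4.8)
First, I would note that the algorithmic part of the statement needs no new algorithm: the procedure of Corollary~\ref{cor:lp-relation} already computes, in polynomial time, a distance-$r$ dominating set $D$ of $A$ with $|D|\leq C\cdot t(r)\cdot\gamma_r^\star(G,A)\cdot\ln\gamma_r^\star(G,A)\leq C\cdot t(r)\cdot\gamma_r(G,A)\cdot\ln\gamma_r(G,A)$, using $\gamma_r^\star(G,A)\leq\gamma_r(G,A)$. So once we prove the purely combinatorial inequality $\gamma_r(G,A)\leq g(r,\delta)\cdot\alpha_{2r+1}(G,A)^{1+\delta}$ for every $\delta>0$ and a suitable function $g$, we may substitute it, absorb the logarithm and the constant $C\cdot t(r)\cdot g(r,\delta)$ into a slightly larger exponent (by taking $\delta$ small relative to $\eps$, and bounding $|D|$ by a constant depending on $r,\eps$ when $\alpha_{2r+1}(G,A)$ is below a threshold), and conclude $|D|\leq\fdual(r,\eps)\cdot\alpha_{2r+1}(G,A)^{1+\eps}$. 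The rightmost inequality in the statement, $\alpha_{2r+1}(G,A)\leq\alpha_{2r}(G,A)$, is immediate since distance-$(2r+1)$-independence implies distance-$2r$-independence. Thus everything reduces to the combinatorial inequality.

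To prove it, write $k=\gamma_r(G,A)$. By Corollary~\ref{cor:lp-relation} we have $\gamma_r^\star(G,A)\geq k/\bigl(C\cdot t(r)\cdot\ln k\bigr)$, and since $\gamma_r^\star(G,A)=\alpha_{2r}^\star(G,A)$ this is a fractional distance-$2r$-independent set on $A$ of value $k^{1-o(1)}$ placing weight at most $1$ in every radius-$r$ ball of $G$. The plan is to round this into an integral distance-$(2r+1)$-independent subset of $A$ of size $k^{1-o(1)}$; then $k\leq\alpha_{2r+1}(G,A)^{1+o(1)}$, and a careful choice of $\delta$ yields the inequality above. For such a rounding to give a polynomial (rather than merely positive) bound, I would want the support of the fractional solution — or a constant fraction of its weight — to live inside a graph whose size is polynomial in $k$, so that the subpolynomial bounds on the generalized coloring numbers of nowhere dense graphs, $\wcol_s(H)\leq f(s,\delta)\cdot|V(H)|^\delta$ for $H\in\Cc$, turn into bounds of the form $k^{\Oh(\delta)}$ and a greedy selection in a weak-coloring order produces an independent set of size $\geq k^{1-\Oh(\delta)}$ (here the annotated vertex set of the reduced instance has size at least $k$, since its distance-$r$ domination number is $k$). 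This is precisely what the almost-linear kernelization for the distance-$r$ dominating set problem on nowhere dense classes from~\cite{EickmeyerGKKPRS17} supplies: an equivalent instance of size $\Oh(k^{1+\delta})$ together with a solution-lifting map, so the whole computation can be performed inside it.

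The hard part will be making the rounding genuinely deliver a distance-$(2r+1)$-independent set in $G$ of the claimed size, for two intertwined reasons. First, the scattering engine of nowhere dense classes — uniform quasi-wideness, equivalently the above coloring-number bounds — only yields a large scattered set after deleting a bounded set $S$ of vertices; converting "independent in $G-S$" into "independent in $G$" requires bucketing the candidates according to the bounded number of profiles of distances to $S$ and a parity argument, which is presumably why the statement uses distance $2r+1$ rather than $2r$, together with separate treatment of the candidates lying close to $S$. Second, one must check that the reduced instance provided by~\cite{EickmeyerGKKPRS17} is compatible with the independence side — that a distance-$(2r+1)$-independent set found in the kernel pulls back to one in $G$ with at most bounded loss — which is not part of the black-box kernelization statement and must be traced through its reduction rules. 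Pushing all of this through while keeping every $\delta$-dependent loss under control yields a distance-$(2r+1)$-independent subset of $A$ of size $k^{1-h(\delta)}$ with $h(\delta)\to0$ as $\delta\to0$, and hence $k=\gamma_r(G,A)\leq\alpha_{2r+1}(G,A)^{1/(1-h(\delta))}$; combined with the first paragraph, this proves the theorem.
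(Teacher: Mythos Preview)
Your core plan---pass to the almost-linear kernel of~\cite{EickmeyerGKKPRS17} so that the weak coloring numbers become $k^{\Oh(\delta)}$, extract a large distance-$(2r+1)$ independent set there, and lift back to $G$---is exactly what the paper does. The paper's execution, however, is much shorter: once inside the kernel $G[Y]$ with $\gamma_r(G[Y],B)=k$, it simply invokes Dvo\v{r}\'ak's inequality $\gamma_r(H,B)\leq\wcol_{2r+1}(H)^2\cdot\alpha_{2r+1}(H,B)$ (Lemma~\ref{lem:dv-wcol}) to get $\alpha_{2r+1}(G[Y],B)\geq k^{1-\Oh(\delta)}$ in one line. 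This single citation replaces your whole ``rounding'' programme: the LP/fractional detour via Corollary~\ref{cor:lp-relation} is never used for the combinatorial inequality, and the uniform-quasi-wideness machinery (deleting $S$, bucketing by distance profiles to~$S$, parity) is unnecessary. Note also that your parenthetical justification ``the annotated vertex set has size at least $k$'' is not by itself enough for a greedy argument---a star has arbitrarily large $B$, $\wcol_s=2$, yet $\alpha_2(B)=1$---so it is really $\gamma_r(G[Y],B)=k$ that carries the weight, and that is precisely the hypothesis of Dvo\v{r}\'ak's lemma.

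For the transfer back to $G$, the paper does not trace through the kernel's reduction rules as you suggest; instead it observes that the kernel $Y$ may be freely enlarged and applies the short-path closure (Lemma~\ref{lem:pathsclosure}) with parameter $2r+1$ to $B$, so that distances up to $2r+1$ between vertices of $B$ are exactly preserved in $G[Y]$. This immediately yields $\alpha_{2r+1}(G[Y],B)=\alpha_{2r+1}(G,B)\leq\alpha_{2r+1}(G,A)$ with no further work. Your alternative for the algorithmic output---run the algorithm of Corollary~\ref{cor:lp-relation} and absorb the logarithm into the exponent---is perfectly valid; the paper instead reuses Dvo\v{r}\'ak's constructive procedure inside the kernel together with the kernel's solution-lifting guarantee (Lemma~\ref{lem:kerds}).
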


Thus, the multiplicative gap is $\Oh(\alpha_{2r}(G,A)^{\eps})$ (and
even $\Oh(\alpha_{2r+1}(G,A)^{\eps})$) for any $\eps>0$.
  
\paragraph{Contribution: kernelization} In the second part of the
paper we turn to the parameterized complexity of \textsc{Distance-$r$
  Independent Set} on nowhere dense classes.  Also from the view of
parameterized complexity both \textsc{Independent Set} and
\textsc{Dominating Set} are hard: parameterized by the target size
$k$, \textsc{Independent Set} is $\textsc{W}[1]$-complete and
\textsc{Dominating Set} is
$\textsc{W}[2]$-complete~\cite{downey1995fixed}. Hence both problems
are not likely to be \emph{fixed-parameter tractable}, i.e., solvable
in time $f(k)\cdot n^c$ on instances of input size $n$, where $f(k)$
is a computable function, depending only on the value of the parameter
$k$ and $c$ is a fixed constant.

Again, it turns out that in several restricted graph classes the
problems become easier to handle.  As far as classes of sparse graphs
are concerned, both \textsc{Distance-$r$ Dominating Set} and
\textsc{Distance-$r$ Independent Set} are expressible in first-order
logic (for fixed $r$), and hence fixed-parameter tractable on any
nowhere dense class of graphs by the meta-theorem of Grohe et
al.~\cite{GroheKS17}.  This was earlier proved in the particular case
of \textsc{Distance-$r$ Dominating Set} by Dawar and
Kreutzer~\cite{DawarK09}.

Once fixed-parameter tractability of a problem on a certain class of
graphs is established, we can ask whether we can go even one step
further by showing the existence of a polynomial (or even linear)
kernel. A \emph{kernelization algorithm}, or a \emph{kernel}, is a
polynomial-time preprocessing algorithm that given an instance $(I,k)$
of a parameterized problem outputs another instance $(I', k')$, which
is equivalent to $(I,k)$, and whose total size $|I'|+k'$ is bounded by
$f(k)$ for some computable function $f$, called the \emph{size} of the
kernel. If $f$ is a polynomial (respectively, linear) function, then
the algorithm is called a \emph{polynomial} (respectively,
\emph{linear}) \emph{kernel}. 
It is known that for decidable problems, the existence
of a kernel is equivalent to fixed-parameter tractability, however, 
in general 
the function $f$ can be arbitrarily~large.

Kernelization of \textsc{Dominating Set} and \textsc{Distance-$r$
  Dominating Set} on sparse graphs classes has received a lot of
attention in the
literature~\cite{alber2004polynomial,bodlaender2016meta,DrangeDFKLPPRVS16,
EibenKMPS19,EickmeyerGKKPRS17,FabianskiPST19,fomin2010bidimensionality,
  fomin2012linear,fomin2018kernels,KreutzerRS17}. In particular,
\textsc{Distance-$r$ Dominating Set} admits a linear kernel on any
class of bounded expansion~\cite{DrangeDFKLPPRVS16} and an almost
linear kernel on any nowhere dense class~\cite{EickmeyerGKKPRS17}.
The kernelization complexity of \textsc{Distance-$r$ Independent Set}
on classes of sparse graphs seems less explored; a linear kernel for
the problem is known on any class excluding a fixed apex
minor~\cite{fomin2010bidimensionality}.

We prove that for every positive integer $r$, \textsc{Distance-$r$
  Independent Set} admits an almost linear kernel on every nowhere
dense class of graphs. In fact, we prove the statement for the slightly
more general, annotated variant of the problem, which finds an
application e.g.\ in the model-checking result of Grohe et
al.~\cite{GroheKS17}.


\begin{restatable}{theorem}{kernel}\label{thm:kernel}
  Let $\Cc$ be a fixed nowhere dense class of graphs, let $r$ be a
  fixed positive integer, and let $\eps>0$ be a fixed real.  Then
  there exists a polynomial-time algorithm with the following
  properties. Given a graph $G\in\Cc$, a vertex subset
  $A\subseteq V(G)$, and a positive integer $k$, the algorithm either
  correctly concludes that $\alpha_{r}(G,A)<k$, or finds a subset
  $Y\subseteq V(G)$ of size at most
  $\fker(r,\epsilon)\cdot k^{1+\epsilon}$, for some function $\fker$
  depending only on $\Cc$, and a subset $B\subseteq Y\cap A$ such
  that 
  $\alpha_{r}(G,A)\geq k \Longleftrightarrow \alpha_{r}(G[Y], B)\geq
  k$.
\end{restatable}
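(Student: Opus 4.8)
The plan is to derive the kernel for \textsc{Distance-$r$ Independent Set} by a reduction to the known almost-linear kernel for \textsc{Distance-$r$ Dominating Set} on nowhere dense classes~\cite{EickmeyerGKKPRS17}, using the duality between the two parameters established in Corollary~\ref{cor:lp-relation} and Theorem~\ref{thm:duality} as the glue. Concretely, given $(G,A,k)$, I would first run the approximation/kernelization machinery for \textsc{Distance-$r$ Dominating Set} of $A$ in $G$ to obtain, in polynomial time, a distance-$\lceil r/2 \rceil$ (or, to match the statement, distance-$r$) dominating set $D$ of $A$ whose size is almost-linear in $\gamma_{\lceil r/2\rceil}(G,A)$. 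By Theorem~\ref{thm:duality} applied with parameter $r' = \lfloor r/2 \rfloor$ (so that $2r'+1 \le r$), we have $\gamma_{r'}(G,A) \le \fdual(r',\eps)\cdot \alpha_{r}(G,A)^{1+\eps}$, and combining this with the logarithmic LP-gap from Corollary~\ref{cor:lp-relation} we obtain a dominating set of $A$ of size $\Oh(\alpha_{r}(G,A)^{1+\eps'})$ for any $\eps'>\eps$, after absorbing the logarithmic factor into the exponent. If at this point the computed dominating set already has size below $k^{1+\eps}$ times the appropriate constant, we may proceed; otherwise, since $\gamma_{r'}(G,A)$ is a lower bound up to the $(1+\eps)$ power on something forced to be large, we can safely conclude $\alpha_r(G,A) < k$.

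The second and technically central step is to turn the small dominating set $D$ into a small \emph{vertex set} $Y$ that preserves the answer to the independent set instance. The key observation is that any distance-$r$ independent subset $I$ of $A$ has the property that distinct members of $I$ are distance-$r$ dominated by pairwise disjoint portions of $D$ in a controlled way; more precisely, each $u \in I$ lies within distance $r$ of some $d_u \in D$, and the map $u \mapsto d_u$ is ``almost injective'' in the sense that the preimage of each $d\in D$ has size bounded by a constant depending on $\Cc$ and $r$ (because the vertices of $A$ within distance $r$ of a fixed $d$ that are pairwise at distance $>r$ form a clique-like structure constrained by the nowhere-dense exclusion of $K_t$ as a depth-$r$ minor). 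I would make this precise using a Ramsey-type or direct argument: if too many members of a distance-$r$ independent set were $r$-close to one vertex $d$, one would find a $K_t$ as a shallow minor rooted near $d$. Consequently, to certify an independent set of size $k$ it suffices to keep, around each $d\in D$, a bounded ``profile'' of candidate vertices of $A$ together with enough of the graph to test pairwise distances; assembling these gives $Y$ of size $\Oh(|D|) = \Oh(k^{1+\eps})$.

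The third step is to fix the choice of $B \subseteq Y\cap A$ and verify the equivalence $\alpha_r(G,A)\ge k \Leftrightarrow \alpha_r(G[Y],B)\ge k$. The forward direction requires care: an optimal distance-$r$ independent set $I$ in $G$ need not lie inside $Y$, so I would argue that $I$ can be rerouted to an equally large independent set inside the retained candidates around each dominator, using the bounded-profile structure from Step~2 and the fact that distances relevant to the independence condition are witnessed by short paths that we have taken care to include in $Y$. The backward direction is the easy one, provided $G[Y]$ faithfully reflects the distances among vertices of $B$ up to~$r$ — which we guarantee by including, for every pair of retained candidates, a shortest path of length at most $r$ between them whenever one exists (there are at most $\Oh(|Y|^2)$ such paths, each of length $\le r$, so this inflates $Y$ only by a constant factor, keeping it almost-linear).

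The main obstacle I anticipate is Step~2, and specifically quantifying the ``almost injectivity'' of the dominator assignment in a nowhere dense (rather than bounded-expansion) class: the constant bounding how many pairwise-$r$-far vertices of $A$ can be simultaneously $r$-close to a single dominator must be extracted from the shallow-minor exclusion, and making this uniform while keeping the final bound genuinely almost-linear (not just polynomial) is where the interplay with Theorem~\ref{thm:vc2} and the kernelization of~\cite{EickmeyerGKKPRS17} is most delicate. A secondary subtlety is ensuring that the whole pipeline — dominating-set kernel, then candidate selection, then shortest-path padding — composes so that the size bound is $\fker(r,\eps)\cdot k^{1+\eps}$ with a single clean choice of function $\fker$, which will require choosing the internal $\eps$-parameters of the subroutines slightly smaller than the target $\eps$ and bounding the accumulated slack.
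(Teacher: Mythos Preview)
Your Step~1 matches the paper: compute an approximate distance-$\lfloor r/2\rfloor$ dominating set $D$ of $A$ via Theorem~\ref{thm:duality}, and use its size to dispose of trivial instances. (A small correction: large $|D|$ certifies $\alpha_r(G,A)\geq k$, not $<k$, since Theorem~\ref{thm:duality} upper-bounds $|D|$ by $\fdual\cdot\alpha_r(G,A)^{1+\delta}$.)

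The genuine gap is your Step~2. The mechanism you propose --- bounding how many members of a distance-$r$ independent set can lie in $N_r(d)$ for a fixed dominator $d$ via a shallow-$K_t$-minor argument --- is false. Take a tree consisting of $m$ paths of length $\lceil r/2\rceil$ glued at a centre $d$: the $m$ leaves are pairwise at distance $2\lceil r/2\rceil>r$ yet all lie in $N_{\lceil r/2\rceil}(d)$, and trees exclude $K_3$ as a minor of every depth. If instead you use $d=\lfloor r/2\rfloor$, then the map $u\mapsto d_u$ is exactly injective (two vertices within $\lfloor r/2\rfloor$ of the same $d$ are within $r$ of each other), so there is nothing to bound --- but injectivity tells you nothing about \emph{which} of the possibly unboundedly many $A$-vertices near each dominator to retain so that every optimal independent set can be rerouted onto your candidates. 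Your sketch offers no mechanism for this selection.

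This is exactly where the paper's argument lives, and it is not a refinement of your outline but a different construction. After taking the closure $Z\supseteq D$ (Lemma~\ref{lem:closure}), the paper partitions $A\setminus Z$ by $2r$-projection profile onto $Z$ (few classes, by Lemma~\ref{lem:projection-complexity}), and whenever a class is too large applies uniform quasi-wideness (Theorem~\ref{thm:uqw}) to extract a subset $L'$ that is $4r$-independent in $G-S$, far from $Z$ in $G-S$, and has identical $r$-projection profile onto the small set $S$. The core is an exchange claim showing every vertex of $L'$ is \emph{irrelevant}: if a distance-$r$ independent set $I$ uses some $a\in L'$, one can swap $a$ for another $a'\in L'$, because any conflict with $a'$ must route through $S$ (using that $Z$ $\lfloor r/2\rfloor$-dominates $A$ and that $L'$ is cut off from $Z$ in $G-S$), and then pigeonhole on $|S|\leq s$ against $|L'|\geq s+2$ finishes. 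Iterating this deletion until all classes are small yields $B$. None of the ingredients --- closure, projection profiles, uniform quasi-wideness, the irrelevant-vertex exchange --- appear in your plan.

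A secondary issue: your Step~3 pads $Y$ with a shortest path of length $\leq r$ for \emph{every pair} of retained candidates, adding $\Theta(|B|^2)$ vertices and destroying the almost-linear bound. The paper instead invokes Lemma~\ref{lem:pathsclosure}, itself a nontrivial nowhere-dense tool, which produces a distance-preserving superset of size $\fpaths(r,\epsilon)\cdot|B|^{1+\epsilon}$.
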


\pagebreak
We remark that in case $\Cc$ is effectively nowhere dense, it is easy
to see that the function $\fker$ above is computable and the algorithm
can be made uniform w.r.t. $r$ and $\eps$: there is one algorithm that
takes $r$ and $\eps$ also on input, instead of a different algorithm
for each choice of $r$ and $\eps$. Furthermore, as
in~\cite{EickmeyerGKKPRS17}, it is easy to follow the lines of the
proof to obtain a linear kernel in case $\Cc$ is a class of bounded
expansion.  That is, the size of the obtained set $Y$ is bounded by
$\Oh(k)$, where the constant hidden in the $\Oh(\cdot)$-notation
depends on $\Cc$ and $r$.

It is not difficult to see that for classes closed under taking
subgraphs, this result cannot be extended further.  More precisely,
similarly as in~\cite{DrangeDFKLPPRVS16} for the case of
{\sc{Distance-$r$ Dominating Set}}, we provide the following lower
bound for completeness.

\begin{restatable}{theorem}{hardness}\label{thm:hardness}
  Let $\Cc$ be a class of graphs that is closed under taking subgraphs
  and that is not nowhere dense. Then there exists an integer $r$
  such that \textsc{Distance-$r$ Independent Set} is
  $\textsc{W}[1]$-hard on $\Cc$.
\end{restatable}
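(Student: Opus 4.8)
The plan is to prove \textsc{W}[1]-hardness by a parameterized reduction from \textsc{Multicolored Independent Set}: given a graph $G$ together with a partition of $V(G)$ into color classes $V_1,\dots,V_k$, decide whether some independent set of $G$ meets every class in exactly one vertex. This problem is \textsc{W}[1]-hard parameterized by $k$, and the reduction will be modelled on the one used by Drange et al.~\cite{DrangeDFKLPPRVS16} to prove \textsc{W}[2]-hardness of \textsc{Distance-$r$ Dominating Set} on such classes.

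The starting point is the standard characterization of subgraph-closed somewhere dense classes: since $\Cc$ is closed under subgraphs and is not nowhere dense, there is a constant $p\in\N$ such that $\Cc$ contains the $p$-subdivision of every graph, that is, the graph obtained by subdividing each edge exactly $p$ times; see~\cite{sparsity}. Being subgraph-closed, $\Cc$ then also contains every graph obtained from an arbitrary base graph $B$ by replacing each edge with a path whose length is a positive multiple of $p+1$ (realize the desired pre-subdivision lengths already in $B$ and then take the $p$-subdivision of $B$), as well as every subgraph thereof, so in particular pendant paths of arbitrary lengths may be attached. We fix such a $p$ and let $r$ be a suitable constant depending only on $p$; for concreteness, $r=2(p+1)$ will do.

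Given an instance $(G,V_1,\dots,V_k)$ with $|V_i|=n$ and no edges inside any $V_i$, I would build in polynomial time a graph $\widehat{G}\in\Cc$ --- realized as a subdivision of a base graph, all path lengths being multiples of $p+1$ --- together with a target value $K$ depending only on $k$ and $r$, so that $G$ has a multicolored independent set if and only if $\alpha_r(\widehat{G})\ge K$. The graph $\widehat{G}$ would consist of a \emph{selection gadget} for each color class $V_i$, which through the choice of which of its vertices enter a distance-$r$ independent set encodes the selection of an element of $V_i$, while always contributing a fixed number of vertices to a maximum distance-$r$ independent set; and, for each edge $vv'$ of $G$ with $v\in V_i$ and $v'\in V_{i'}$, a \emph{conflict gadget} wiring the selection gadgets of $V_i$ and $V_{i'}$ so that selecting $v$ and selecting $v'$ cannot happen simultaneously (the two corresponding vertices would then lie within distance $r$). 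The value $K$ is set to the total number of ``mandatory'' gadget vertices plus $k$. In the forward direction, a multicolored independent set of $G$ yields a distance-$r$ independent set of $\widehat{G}$ of size exactly $K$ --- the mandatory vertices together with those realizing the chosen elements --- and this is valid precisely because the chosen elements are pairwise non-adjacent in $G$. Conversely, one argues that every distance-$r$ independent set of $\widehat{G}$ of size $K$ consists of the mandatory vertices plus exactly one ``selection vertex'' per class, and that the elements thus selected form a multicolored independent set of $G$.

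I expect the main obstacle to lie in the converse direction, namely in establishing the upper bound $\alpha_r(\widehat{G})\le K$ when $G$ has no multicolored independent set. Since every edge of the base graph is subdivided at least $p$ times, the gadgets unavoidably contain many internal subdivision vertices, and \emph{a priori} a distance-$r$ independent set could discard the intended ``selection vertices'' altogether and instead pack internal vertices of many conflict gadgets into a large scattered set --- indeed, a naive conflict gadget would already leak such a set of unbounded size. Precluding this forces one to calibrate all path lengths --- which must remain multiples of $p+1$ for $\widehat{G}$ to stay in $\Cc$ --- so that every subdivision vertex lies within distance $r$ of enough of the rest of $\widehat{G}$ that any distance-$r$ independent set of size at least $K$ is pushed into the intended shape. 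This length-sensitive analysis is the delicate core of the reduction; it mirrors the corresponding analysis in~\cite{DrangeDFKLPPRVS16} but must be carried out anew for the independent-set gadgets. Once it is in place, the construction is a polynomial-time parameterized reduction and the theorem follows.
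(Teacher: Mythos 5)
Your proposal correctly identifies the structural lever (subgraph-closed somewhere dense classes contain the exact $p$-subdivision of every graph for some fixed $p$), but the reduction itself is never completed: you describe the intended shape of a gadget construction and then explicitly flag the central difficulty --- that the subdivision paths, which cannot be avoided, can themselves pack a distance-$r$ independent set of unbounded size unless all path lengths are delicately calibrated --- as an unresolved ``delicate core.'' This is not a routine technicality that transfers from the \textsc{Distance-$r$ Dominating Set} hardness proof of Drange et al.: for domination, long paths are a liability (they create vertices that must be covered and so force selections), whereas for scattered sets long paths are an asset (they freely contribute many mutually far-apart vertices), so the two analyses point in opposite directions and the calibration would have to be invented from scratch. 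Without that argument, the converse implication $\alpha_r(\widehat G)\ge K \Rightarrow G$ has a multicolored independent set is unproven, and that is exactly where the content of the theorem lies.

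The paper's proof dissolves this obstacle rather than fighting it. It reduces from plain \textsc{Independent Set} (no multicoloring is needed), and in place of selection/conflict gadgets it uses a single hub: from $G$ build $G^{(3)}$, add a vertex $x$ joined to every subdivision vertex by a path of length $2$, add a pendant $y$ joined to $x$ by a path of length $3$, and take $H$ to be the exact $r$-subdivision of the result. Now every vertex of $H$ outside the original set $O=V(G)$ (other than $y$) is within distance $3r-1$ of $x$, while $\dist_H(y,x)=3r$, so all non-$O$ vertices are pairwise within distance $<6r$; hence a distance-$(6r-1)$ independent set contains at most one vertex outside $O$, and one may take it to be $y$. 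On $O$ distances scale exactly by $3r$, so $\dist_H(u,v)\ge 6r \Leftrightarrow \dist_G(u,v)\ge 2$. This gives $\alpha_{6r-1}(H)=\alpha_1(G)+1$ immediately, with no length calibration: the hub forces \emph{all} the subdivision vertices to be useless for a scattered set, so the leakage you were worried about simply cannot occur.
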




Our proof of \cref{thm:kernel} uses a similar approach
as~\cite{DrangeDFKLPPRVS16, EickmeyerGKKPRS17} for the kernelization
of \textsc{Distance-$r$ Dominating Set}.  We aim to iteratively remove
vertices from~$A$ that are irrelevant for distance-$r$ independent
sets in the following sense. A vertex $v\in A$ is irrelevant if the
following assertion holds: provided $A$ contains a distance-$r$
independent subset of size~$k$, then also $A\setminus \{v\}$ contains
a distance-$r$ independent subset of size $k$.

In order to find such an irrelevant vertex, we start by computing a
good approximation of a distance-$\left\lfloor r/2\right\rfloor$
dominating set~$D$ of $A$. If we do not find a sufficiently small such
set, we can reject the instance, as by \cref{thm:duality} this
implies that there does not exist a large distance-$r$ independent set
in $A$. We now classify the remaining vertices of $A$ with respect to
their interaction with the set $D$ and argue that if $A$ is large we
may find an irrelevant vertex.

We repeat this construction until $A$ becomes small enough (almost
linear in~$k$) and return the resulting set as the set $B$. It now
suffices to add a small set of vertices and edges so that short
distances between the elements of $B$ are exactly preserved. The
result will be the output of the kernelization algorithm.


\paragraph{Organization} We assume familiarity with graph theory and
refer to~\cite{diestel2012graph} for undefined notation.  We provide
basic facts about nowhere dense graph classes in
\cref{sec:prelims} and refer to~\cite{sparsity} for a broader
discussion of the area. We present our results on the VC-dimension of
power graphs in \cref{sec:vc-dim} and the construction of a
nowhere dense class witnessing the non-constant gap between
distance-$r$ domination and distance-$2r$ independence in
\cref{sec:fractionalpacking}.  Finally, we present the
kernelization algorithm for \textsc{Distance-$r$ Independent Set} on
nowhere dense graph classes in \cref{sec:kernel}.
%

\section{Preliminaries}\label{sec:prelims}

We shall need some basic notions and tools for kernelization in
nowhere dense classes used by Eickmeyer et
al.~\cite{EickmeyerGKKPRS17}.  For consistency and completeness of
this paper, we have included these preliminaries also here, and they
are largely taken verbatim from~\cite{EickmeyerGKKPRS17}.

\paragraph{Algorithmic aspects} 
Whenever we say that the running time of some algorithm on a graph $G$
is {\em{polynomial}}, we mean that it is of the form
$\Oof((|V(G)|+|E(G)|)^\cst)$, where $\cst$ is a universal constant
that is independent of $\Cc$, $r$, $\epsilon$, or any other constants
defined in the context. However, the constants hidden in the
$\Oof(\cdot)$-notation may depend on $\Cc$, $r$, and $\epsilon$.

\paragraph{Projections and projection profiles}
Let $G$ be a graph and let $A\subseteq V(G)$ be a subset of
vertices. For vertices $v\in A$ and $u\in V(G)\setminus A$, a path $P$
connecting $u$ and $v$ is called {\em{$A$-avoiding}} if all its
vertices apart from $v$ do not belong to $A$. For a positive integer
$r$, the {\em{$r$-projection}} of any $u\in V(G)\setminus A$ on $A$,
denoted $M^G_r(u,A)$ is the set of all vertices $v\in A$ that can be
connected to $u$ by an $A$-avoiding path of length at most $r$. The
{\em{$r$-projection profile}} of a vertex $u\in V(G)\setminus A$ on
$A$ is a function $\rho^G_r[u,A]$ mapping vertices of $A$ to
$\{1,\ldots,r,\infty\}$, defined as follows: for every $v\in A$, the
value $\rho^G_r[u,A](v)$ is the length of a shortest $A$-avoiding path
connecting $u$ and~$v$, and~$\infty$ in case this length is larger
than $r$. We define
\[\projnum_r(G,A)=|\{\rho_r^G[u,A]\colon u\in V(G)\setminus A\}|\]
to be the number of different $r$-projection profiles realized on $A$. 

One of the main results of~\cite{EickmeyerGKKPRS17} was to show that
in nowhere dense classes the number of realized projection profiles is
small, as stated in the next lemma.

\begin{lemma}[\cite{EickmeyerGKKPRS17}]\label{lem:projection-complexity}
  Let $\Cc$ be a nowhere dense class of graphs. There is a function
  $\fproj\colon \N\times\R\rightarrow\N$ such that for every
  $r\in \N$, $\epsilon>0$, graph $G\in \Cc$, and vertex subset
  $A\subseteq V(G)$, we have
  $\projnum_r(G,A)\leq \fproj(r,\epsilon)\cdot |A|^{1+\epsilon}$.
\end{lemma}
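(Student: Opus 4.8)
The plan is to prove the bound in two phases: a \emph{core} counting argument that controls $\projnum_r(G,A)$ in terms of $|A|$ and a sparsity parameter of $G$ --- concretely, the weak $2r$-coloring number $\wcol_{2r}(G)$ (the minimum over vertex orderings $\sqsubseteq$ of the maximum size of a weakly $2r$-reachable set $\WReach_{2r}[G,\sqsubseteq,u]$) --- and a \emph{self-improvement} phase that exploits the defining feature of nowhere dense classes, namely that weak coloring numbers grow sublinearly, to eliminate the dependence on $|V(G)|$ altogether.

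\textbf{Reduction to a bounded universe.} We may assume $\Cc$ is closed under subgraphs, since its subgraph-closure is still nowhere dense with the same parameters. Given $G\in\Cc$ and $A$, for each realized $r$-projection profile $\pi$ fix a witness $u_\pi\in V(G)\setminus A$ and, for each $v\in A$ with $\pi(v)<\infty$, a shortest $A$-avoiding $u_\pi$--$v$ path $P_{\pi,v}$, of length at most $r$. Let $G'$ be the subgraph of $G$ induced by $A$ together with all the $u_\pi$'s and all the paths $P_{\pi,v}$. Then $G'\in\Cc$, and every $r$-projection profile realized on $A$ in $G$ is still realized on $A$ in $G'$: passing to an induced subgraph can only increase distances, but the retained paths certify the finite values of $\pi$ exactly, while infinite values survive. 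Since distinct functions $A\to\{1,\dots,r,\infty\}$ stay distinct, $\projnum_r(G,A)\leq\projnum_r(G',A)$, whereas $|V(G')|\leq |A|\cdot\bigl(2+r\cdot\projnum_r(G,A)\bigr)$. So it suffices to bound $\projnum_r(G',A)$, knowing that $|V(G')|$ is at most polynomial in $|A|$ and $N\coloneqq\projnum_r(G,A)$.

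\textbf{Core bound.} I would prove that for every graph $H$ and $A\subseteq V(H)$ one has $\projnum_r(H,A)\leq p\bigl(r,\wcol_{2r}(H)\bigr)\cdot|A|$ for a function $p$ that is \emph{polynomial} in its second argument for each fixed $r$. Fix an order $\sqsubseteq$ witnessing $\wcol_{2r}(H)\leq c$. For $u\notin A$ and each $v\in A$ with $\rho_r^H[u,A](v)<\infty$, take a shortest $A$-avoiding $u$--$v$ path and let $m_v$ be its $\sqsubseteq$-minimal vertex. Splitting the path at $m_v$: the $u$--$m_v$ part certifies $m_v\in\WReach_{2r}[H,\sqsubseteq,u]$, and the $m_v$--$v$ part is a short walk to $v$ through vertices $\sqsupseteq m_v$ internally avoiding $A$. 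Hence $\rho_r^H[u,A]$ is determined by the set $\WReach_{2r}[H,\sqsubseteq,u]$ (of size at most $c$), together with, for each pivot $w$ in it, the bounded datum ``$A$-avoiding $\sqsubseteq$-compatible distance from $u$ to $w$'' and the fixed datum ``backward $A$-profile of $w$ among vertices $\sqsupseteq w$'' (independent of $u$). Charging each profile to its multiset of pivots and counting --- amortizing how many profile patterns a single vertex can serve as a pivot for, and using that weakly $2r$-reachable sets over all $u$ are sparse in the aggregate --- gives the claimed bound; the quantitatively delicate point is getting $p$ \emph{polynomial} in $c$ rather than $2^{\mathrm{poly}(c)}$, which the naive encoding ``record the entire interaction pattern on $\WReach_{2r}[u]$'' would produce, so one should instead count through admissibility $\adm_{2r}$ and a more economical encoding of the pivot data.

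\textbf{Closing the loop and the main obstacle.} Apply the core bound to $G'$: $N\leq\projnum_r(G',A)\leq p\bigl(r,\wcol_{2r}(G')\bigr)\cdot|A|$, say $p(r,c)\leq C_r\cdot c^{a_r}$. By the characterization of nowhere dense classes through weak coloring numbers, for every $\delta>0$ there is $n_0$ with $\wcol_{2r}(H)\leq |V(H)|^{\delta}$ for all $H\in\Cc$ with $|V(H)|\geq n_0$; if $|V(G')|<n_0$ then $N<n_0$ and we are done, otherwise plugging $|V(G')|\leq 2r|A|N$ in gives $N\leq C_r(2r)^{\delta a_r}N^{\delta a_r}|A|^{1+\delta a_r}$, hence $N^{1-\delta a_r}\leq C_r'\,|A|^{1+\delta a_r}$. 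Choosing $\delta=\delta(r,\epsilon)$ with $\delta a_r\leq\epsilon/(2+\epsilon)$ makes $(1+\delta a_r)/(1-\delta a_r)\leq 1+\epsilon$, yielding $\projnum_r(G,A)\leq\fproj(r,\epsilon)\cdot|A|^{1+\epsilon}$ with $\fproj(r,\epsilon)$ absorbing $C_r'$ and the finitely many small cases. The crux of the whole argument is the core bound with \emph{sub-exponential} (indeed polynomial) dependence on $\wcol_{2r}$: the self-improvement in the last step closes only under this quantitative guarantee, so the wasteful ``subset-of-$\WReach$'' count must be replaced by the sharper amortized one.
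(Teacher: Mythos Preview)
The present paper does not prove this lemma: it is quoted from~\cite{EickmeyerGKKPRS17} as a black box in the preliminaries, so there is no in-paper argument to compare against. At the level of overall strategy your three-phase plan---pass to a witness subgraph $G'$ of size $\Oof(r\,|A|\cdot\projnum_r(G,A))$, bound $\projnum_r(G',A)$ in terms of $|A|$ and $\wcol_{2r}(G')$, then bootstrap using $\wcol_{2r}(G')\leq\fwcol(2r,\delta)\cdot|V(G')|^\delta$---does match the scheme in~\cite{EickmeyerGKKPRS17}. Your reduction step is sound (taking the induced subgraph on $A$, the witnesses $u_\pi$, and the witnessing paths preserves each profile exactly, so $\projnum_r(G,A)\leq\projnum_r(G',A)$), and the closing calculation is correct provided the core bound is as you claim.

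The gap is the core bound itself. You correctly identify that the bootstrap closes only when the dependence on $c=\wcol_{2r}(G')$ is polynomial, and you correctly note that encoding $\rho_r^{G'}[u,A]$ by the pair $(\WReach_{2r}[u],\text{distance vector on it})$ costs a factor of order $(r+2)^c$, which is fatal. But the remedy you propose---``count through admissibility $\adm_{2r}$ and a more economical encoding, amortizing over pivots, using that weakly reachable sets are sparse in the aggregate''---is not an argument, only a gesture toward where one should go. The aggregate-sparsity observation merely restates $|\WReach_{2r}[u]|\leq c$ for every $u$; it says nothing about the number of \emph{distinct} pivot configurations, and nothing you wrote explains how the exponential blow-up is actually avoided. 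In~\cite{EickmeyerGKKPRS17} this polynomial core bound is precisely the technical heart of the lemma and occupies the bulk of its proof; it is not a routine amortization. Until that step is actually supplied, the self-improvement in your final paragraph cannot be carried out, and the proposal remains a correct outline with its main load-bearing lemma missing.
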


The next lemma states that any vertex subset $X\subseteq V(G)$ can be
``closed'' to a set $X'$ that is not much larger than $X$ such that
all $r$-projections on $X'$ are small.

\begin{lemma}[\cite{DrangeDFKLPPRVS16,EickmeyerGKKPRS17}]\label{lem:closure}
  Let $\Cc$ be a nowhere dense class of graphs. There is a function
  $\fcl\colon \N\times\R\rightarrow \N$ and a polynomial-time
  algorithm that, given $G\in \Cc$, $X\subseteq V(G)$, $r\in \N$, and
  $\epsilon>0$, computes a superset $X'\supseteq X$ of vertices with
  the following properties:
\begin{itemize}
  \item $|X'|\leq \fcl(r,\epsilon)\cdot |X|^{1+\epsilon}$; and
  \item $|M_r^G(u,X')|\leq \fcl(r,\epsilon)\cdot |X|^{\epsilon}$ for
    every $u\in V(G)\setminus X'$.
\end{itemize}
\end{lemma}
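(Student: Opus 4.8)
The plan is to construct $X'$ by iteratively enlarging $X$ so as to eliminate vertices with large $r$-projections, with Lemma~\ref{lem:projection-complexity} serving as the workhorse that limits, at each stage, how many such ``bad'' vertices can exist and hence how much must be added. Since each enlargement step will blow the set up by a factor of the form $n\mapsto \fproj(r,\eps')\cdot n^{1+\eps'}$, I would first fix an auxiliary real $\eps'=\eps'(r,\eps)>0$ small enough that composing the process's number of rounds -- which we will argue is bounded by a quantity $\ell$ depending only on $r$ (and $\Cc$) -- still leaves a total blow-up of the form $n\mapsto \fcl(r,\eps)\cdot n^{1+\eps}$; concretely, choosing $\eps'$ with $(1+\eps')^{\ell}\leq 1+\eps$ suffices. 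Throughout, the basic monotonicity we exploit is that enlarging the set only shrinks projections onto the old part: if $Z\subseteq Z'$, any $Z'$-avoiding path of length at most $r$ ending in $Z$ is also $Z$-avoiding, so $M^G_r(u,Z')\cap Z\subseteq M^G_r(u,Z)$.

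Concretely, set a target threshold $\xi\coloneqq \fcl(r,\eps)\cdot|X|^{\eps}$ (the constant to be pinned down by the calculation) and build a nested sequence $X=X^{(0)}\subseteq X^{(1)}\subseteq\cdots$. Given $X^{(i)}$: if every $u\in V(G)\setminus X^{(i)}$ already satisfies $|M^G_r(u,X^{(i)})|\leq \xi$, stop and output $X'\coloneqq X^{(i)}$; otherwise, for every $r$-projection profile on $X^{(i)}$ that is realized by a vertex whose projection has size more than $\xi$, pick one representative vertex $u$, fix a short witnessing structure connecting $u$ to its projection -- a union of shortest $X^{(i)}$-avoiding $u$--$v$ paths of length at most $r$ over $v\in M^G_r(u,X^{(i)})$ -- and let $X^{(i+1)}$ be $X^{(i)}$ together with all vertices appearing on these witnessing structures. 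Finding the projection profiles and the witnessing paths amounts to breadth-first searches in appropriate auxiliary graphs, so each step, and hence the whole construction, runs in polynomial time.

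For the analysis there are two quantities to control, and their interaction is the main obstacle. First, the growth per round: by Lemma~\ref{lem:projection-complexity} the number of representatives chosen in round $i$ is at most $\projnum_r(G,X^{(i)})\leq \fproj(r,\eps')\cdot|X^{(i)}|^{1+\eps'}$, and one then has to verify -- this is the delicate bookkeeping step -- that the witnessing structures contribute only $\Oh_r\!\left(|X^{(i)}|^{1+\eps'}\right)$ further vertices in total, so that a single round multiplies the size of the set by $\Oh_r\!\left(|X^{(i)}|^{\eps'}\right)$; this is where one must be careful about how many connecting vertices each bad profile is charged. Second, and this is the crux, one must show the number of rounds is bounded by a quantity depending on $r$ (and $\Cc$) alone: the idea is that after each round the witnessing paths of any remaining large projection are forced to be ``deeper'', running through more of the already-added vertices, so that after boundedly many rounds no short avoiding path can still reach more than $\xi$ vertices of the current set; nowhere-denseness is essential here, since without it the process need not stabilise. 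Combining $\ell$ rounds, each multiplying the size by $\Oh_r(|\cdot|^{\eps'})$, with the choice $(1+\eps')^{\ell}\leq 1+\eps$ then yields $|X'|\leq \fcl(r,\eps)\cdot|X|^{1+\eps}$, while the stopping condition gives the required bound on projections onto $X'$. Finally, I would note that when $\Cc$ has bounded expansion the same template specialises to a one-shot construction via weak colouring numbers -- taking $X'=\bigcup_{v\in X}\WReach_r[\sigma,G,v]$ for a suitable vertex ordering $\sigma$, so that $|X'|\leq \wcol_r(G)\cdot|X|$ -- which gives the corresponding linear bound.
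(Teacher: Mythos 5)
The paper does not prove this lemma: it is imported verbatim from \cite{DrangeDFKLPPRVS16,EickmeyerGKKPRS17}. In those works the construction is essentially the one-shot weak-reachability closure that you only mention as an aside at the very end — one fixes a vertex ordering $\sigma$ and takes $X'=\bigcup_{v\in X}\WReach_{cr}[\sigma,G,v]$ for a suitable small constant $c$, so that $|X'|\le\wcol_{cr}(G)\cdot|X|$ and every $v\in M^G_r(u,X')$ is forced, by a short ``splice $P$ into $Q$ and look at the $\sigma$-minimum'' argument, to lie in $\WReach_{cr}[\sigma,G,u]$, giving $|M^G_r(u,X')|\le\wcol_{cr}(G)$. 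This is not cosmetically different from what you propose: it is a single pass, it does not iterate profile-by-profile, and its size control comes from $\wcol$ rather than from $\projnum_r$.

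Your iterative scheme, by contrast, has two gaps, and you flag both but close neither. The first is the per-round size control. In round $i$ you add, for every bad profile, a representative together with the \emph{entire} union of its witnessing $X^{(i)}$-avoiding paths. A single representative's projection can have size up to $|X^{(i)}|$, so its witnessing structure alone may carry $\Theta\bigl(r\,|X^{(i)}|\bigr)$ vertices; multiplied by the up to $\fproj(r,\eps')|X^{(i)}|^{1+\eps'}$ representatives, a round can add $\Theta\bigl(|X^{(i)}|^{2+\eps'}\bigr)$ vertices, not the $\Oh\bigl(|X^{(i)}|^{1+\eps'}\bigr)$ you need. Nothing in the sketch explains why the witnessing structures overlap enough to prevent this; calling it ``delicate bookkeeping'' is not a proof, and as written I believe the bound simply fails. (Dropping the witnessing paths and adding only the representatives $u$ restores the round bound, but then it is no longer clear that the move makes progress: a bad profile on $X^{(i)}$ may continue to be realised on $X^{(i+1)}$ by other vertices.)

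The second and more fundamental gap is the claim that the number of rounds is bounded by a quantity $\ell=\ell(r,\Cc)$. You correctly identify this as ``the crux'' and you correctly observe that the whole calibration $(1+\eps')^{\ell}\le 1+\eps$ collapses without it, but the only justification offered is the heuristic that witnessing paths of surviving large projections must run ``deeper''. That is not an argument, and I do not see why it should be true: adding the representative $u$ to the set removes $u$ from the domain of the second bullet but does nothing obvious to other vertices' projections, and adding witnessing vertices can both shrink and \emph{create} profiles. Indeed, the actual proofs avoid this issue entirely by not iterating in rounds at all — they either build $X'$ in one shot via $\WReach$ (bounded expansion) or, in the nowhere dense extension, bound the total number of single-vertex additions by a separate sparsity argument, not by a per-round blow-up composed a constant number of times. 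As it stands, your proposal does not constitute a proof of the lemma.
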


The next lemma shows that a set can be closed without increasing its
size too much, so that short distances between its elements are
preserved in the subgraph induced by the closure.

\begin{lemma}[\cite{DrangeDFKLPPRVS16, EickmeyerGKKPRS17}]\label{lem:pathsclosure}
  There is a function $\fpaths\colon \N\times\R\rightarrow \N$ and a
  polynomial-time algorithm that on input $G\in \Cc$,
  $X\subseteq V(G)$, $r\in \N$, and $\epsilon>0$ computes a superset
  $X'\supseteq X$ of vertices with the following properties:
\begin{itemize}
\item whenever $\dist_G(u,v)\leq r$ for $u,v\in X$, then
  $\dist_{G[X']}(u,v)=\dist_G(u,v)$; and
\item $|X'|\leq \fpaths(r,\epsilon)\cdot |X|^{1+\epsilon}$.
\end{itemize}
\end{lemma}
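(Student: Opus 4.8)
The plan is to first coarsen $X$ using Lemma~\ref{lem:closure} so that all $r$-projections onto the coarsening become small, and then to reroute every short path between vertices of $X$ through a bounded collection of ``canonical detour'' vertices, one per projection profile, so that the number of vertices we need to add is governed by the (almost linear) number of profiles rather than by the quadratically many pairs from $X$.

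Concretely, I would proceed as follows. First apply Lemma~\ref{lem:closure} to $X$ with parameters $r$ and $\eps'\coloneqq\eps/5$, obtaining $X_1\supseteq X$ with $|X_1|\le\fcl(r,\eps')\cdot|X|^{1+\eps'}$ and $|M^G_r(w,X_1)|\le\fcl(r,\eps')\cdot|X|^{\eps'}$ for every $w\in V(G)\setminus X_1$. Then, for each $w\in V(G)\setminus X_1$, compute its $r$-projection profile $\rho^G_r[w,X_1]$; by Lemma~\ref{lem:projection-complexity} (with $A=X_1$) there are at most $\fproj(r,\eps')\cdot|X_1|^{1+\eps'}$ distinct profiles, and from each realized profile I pick one representative vertex. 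For a representative $\widehat w$ and each $v\in M^G_r(\widehat w,X_1)$, fix a shortest $X_1$-avoiding path $Q(\widehat w,v)$ (its length equals $\rho^G_r[\widehat w,X_1](v)\le r$), and let $\mathrm{reach}(\widehat w)$ be the union of $\{\widehat w\}$ and the vertex sets of all these paths, which has at most $1+r\cdot\fcl(r,\eps')\cdot|X|^{\eps'}$ vertices. The output is $X'\coloneqq X_1\cup\bigcup_{\widehat w}\mathrm{reach}(\widehat w)$, and everything here is computable in polynomial time. Plugging the bounds into $|X'|\le|X_1|+(\text{number of profiles})\cdot\max_{\widehat w}|\mathrm{reach}(\widehat w)|$ and collecting exponents of $|X|$, the dominant term carries exponent $(1+\eps')^2+\eps'=1+3\eps'+\eps'^2\le 1+\eps$ (here one may first replace $\eps$ by $\min(\eps,1)$), so $|X'|\le\fpaths(r,\eps)\cdot|X|^{1+\eps}$ for a suitable $\fpaths$ absorbing $\fcl$, $\fproj$ and the factor $r+1$.

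It then remains to show that $X'$ preserves distances up to $r$ between vertices of $X$; one inequality is free since $G[X']$ is a subgraph of $G$. For the other, fix $u,v\in X$ with $\ell\coloneqq\dist_G(u,v)\le r$, take a shortest $u$--$v$ path $P$ in $G$, and let $a_0=u,a_1,\dots,a_m=v$ be the vertices of $P$ lying in $X_1$ in their order along $P$. I would argue that each segment $P[a_i,a_{i+1}]$, which has some length $\ell_i$ and internal vertices outside $X_1$, can be replaced by a walk of length at most $\ell_i$ inside $G[X']$: if $\ell_i=1$ the edge $a_ia_{i+1}$ already lies in $G[X_1]$; otherwise, the first internal vertex $q$ of the segment is adjacent to $a_i$ (so $\rho^G_r[q,X_1](a_i)=1$) and reaches $a_{i+1}$ by an $X_1$-avoiding path of length $\ell_i-1$ (so $\rho^G_r[q,X_1](a_{i+1})\le\ell_i-1$), hence the representative $\widehat q$ of $q$'s profile is likewise adjacent to $a_i$ and satisfies $|Q(\widehat q,a_{i+1})|\le\ell_i-1$, giving the desired walk $a_i,\widehat q,Q(\widehat q,a_{i+1})$ through $X_1\cup\mathrm{reach}(\widehat q)\subseteq X'$. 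Concatenating these walks over $i=0,\dots,m-1$ produces a $u$--$v$ walk in $G[X']$ of total length $\sum_i\ell_i=\ell$, whence $\dist_{G[X']}(u,v)\le\ell$.

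The crux is the rerouting observation in the last paragraph: that a shortest-path fragment between two consecutive vertices of $X_1$ is controlled, up to length, entirely by the $r$-projection profile (on $X_1$) of its first internal vertex, which is precisely what decouples the size of $X'$ from the number of $X$-pairs. The rest — choosing $\eps'$ so that the exponents close up, and checking that profiles, representatives and canonical paths are all polynomial-time computable — is routine bookkeeping.
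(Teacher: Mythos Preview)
The paper does not give its own proof of this lemma: it is quoted verbatim from~\cite{DrangeDFKLPPRVS16,EickmeyerGKKPRS17} in the preliminaries section and used as a black box thereafter. So there is no in-paper argument to compare against.

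That said, your argument is correct and is essentially the proof given in the cited references. The idea of first applying the closure lemma so that all $r$-projections onto $X_1$ are small, then choosing one representative per $r$-projection profile and attaching for each representative its short $X_1$-avoiding paths to its projection, is exactly the mechanism used there. Your rerouting step is the right one: for a shortest-path segment between consecutive $X_1$-vertices $a_i,a_{i+1}$ with first internal vertex $q$, the profile of $q$ records $\rho^G_r[q,X_1](a_i)=1$ and $\rho^G_r[q,X_1](a_{i+1})\le \ell_i-1$, so the representative $\widehat q$ is adjacent to $a_i$ and has a stored path of length at most $\ell_i-1$ to $a_{i+1}$ inside $X'$. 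One cosmetic slip: the concatenated walk has length \emph{at most} $\sum_i\ell_i=\ell$, not necessarily equal to it; this is harmless since the reverse inequality $\dist_{G[X']}(u,v)\ge\dist_G(u,v)$ is immediate. The size bookkeeping with $\eps'=\eps/5$ is fine.
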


\paragraph{Uniform quasi-wideness}
We will use the characterization of nowhere denseness via the notion
of {\em{uniform quasi-wideness}}, explained next.

\begin{definition}\label{def:uqw}
  A class $\Cc$ is \emph{uniformly quasi-wide} if for every $r\in \N$
  there is a function $N_r\colon \N\times\N\rightarrow \N$ and
  constant $s_r\in \N$ such that for all $r,m\in \N$ and all subsets
  $A\subseteq V(G)$ for $G\in \Cc$ of size $\abs{A}\geq N_r(m)$ there
  is a set $S\subseteq V(G)$ of size $\abs{S}\leq s_r$ and a set
  $B\subseteq A\setminus S$ of size $\abs{B}\geq m$ that is
  $r$-independent in $G-S$.
\end{definition}

It was shown by Ne\v{s}et\v{r}il and Ossona de
Mendez~\cite{NesetrilM11a} that a class~$\Cc$ of graphs is nowhere
dense if and only if it is uniformly quasi-wide.  Recently, Kreutzer
et al.~\cite{KreutzerRS17} proved that in this case the function
$N_r(\cdot)$ can be bounded by a polynomial, with the exponent
depending on $\Cc$ and $r$. More precisely, we will use the following
fact proved later in~\cite{l2017number}; here, the $\Oof_{r,t}(\cdot)$
notation hides factors depending on $r$ and $t$.

\begin{theorem}[\cite{l2017number}]\label{thm:uqw}
  For all $r,t\in \N$ there is a polynomial $N$ with
  $N(m)= \Oof_{r,t}{(m^{{(4t+1)}^{2rt}})}$, such that the following
  holds.  Let $G$ be a graph such that
  $K_t\not\minor_{\lfloor 9r/2\rfloor} G$, and let $A\subseteq V(G)$
  be a vertex subset of size at least $N(m)$, for a given $m$.  Then
  there exists a set $S\subseteq V(G)$ of size $|S|<t$ and a set
  $B\subseteq A\setminus S$ of size $|B|\geq m$ that is
  $r$-independent in $G-S$.  Moreover, given~$G$ and $A$, such sets
  $S$ and $B$ can be computed in time $\Oof_{r,t}(|A|\cdot |E(G)|)$.
\end{theorem}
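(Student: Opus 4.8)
The statement is a quantitative, algorithmic form of uniform quasi-wideness, and I would prove it by a constructive recursion whose only non-trivial ``base'' is a Ramsey argument. The outer routine is greedy: starting from $A^\circ := A$, repeatedly pick an arbitrary $a\in A^\circ$ and inspect $N^G_r(a)\cap A^\circ$, the part of the current set lying in the radius-$r$ ball around $a$. As long as this set is \emph{small} (below a threshold $\vartheta$ to be fixed), move $a$ into the output set $B$ and delete $N^G_r(a)$ from $A^\circ$; the vertices accumulated in $B$ this way are pairwise at distance more than $r$ already in $G$, so after $m$ rounds we are done with $S=\emptyset$, provided $|A|\ge m\vartheta$. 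The interesting case is that for some $a$ the set $A':=N^G_r(a)\cap A^\circ$ is \emph{large}: then a big chunk of $A$ sits inside a single ball of radius $r$, and the whole problem reduces to a ``ball-shrinking'' task that I would solve by an inner recursion on the radius.

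\textbf{Inner recursion (ball shrinking).} The claim I would establish is: there is a polynomial $g$ of degree $\Oof(t)$ (morally $4t+1$) and a polynomial-time procedure such that, given $d\ge 1$, a graph $G$ with $K_t\not\minor_{\lfloor 9d/2\rfloor} G$, a vertex $v$, and $A'\subseteq N^G_d(v)$ with $|A'|$ at least the $\Oof(dt)$-fold composition of $g$ evaluated at $m$, one computes $S$ with $|S|<t$ and $B\subseteq A'\setminus S$ with $|B|\ge m$ that is $d$-independent in $G-S$. To prove it I would pass to a BFS layering $L_0=\{v\},L_1,\dots,L_d$ of the ball and pigeonhole a $1/(d{+}1)$-fraction $A'_1$ of $A'$ into a single layer $L_i$; then, recording for each $a\in A'_1$ a BFS-predecessor $p(a)\in L_{i-1}$, pigeonhole again. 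If one vertex $p$ is the predecessor of a large subfamily, that subfamily lies in $N^G_1(p)$, i.e.\ a radius-$1$ ball, and the ``hub'' $p$ is the natural candidate for $S$: either deleting $p$ genuinely disperses the subfamily out of small balls (progress at no further budget cost) or it does not, in which case a short detour path around $p$ survives, we commit $p$ to $S$ and repeat. If instead no single predecessor dominates, the predecessor set $\{p(a)\}$ is large and sits in $L_{i-1}\subseteq N^G_{d-1}(v)$, so we recurse with radius $d-1$ and, at the end, lift the scattered set found among predecessors back to $A'$ along the recorded BFS edges, losing a bounded additive amount per layer; this additive slack (and the slack spent in the minor construction below) is why the hypothesis is stated with the inflated radius $\lfloor 9d/2\rfloor$ rather than $d$. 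The recursion bottoms out at radius $\le 1$, where $A'$ lies in a closed neighborhood; since $K_t\not\minor_{\lfloor 9d/2\rfloor} G$ in particular forbids $K_t$ even as a subgraph of $G$, Ramsey's theorem with $R(t,m)=\Oof_t(m^{t-1})$ yields an independent (hence $1$-scattered) subset of size $m$ with $S=\emptyset$.

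\textbf{Bookkeeping of the two key parameters.} Two quantities must be controlled. First, $|S|$: every vertex committed to $S$ is a hub through which many pairs of $A'$-elements are mutually at bounded distance, and all hubs lie in $N^G_d(v)$, hence are pairwise within distance $2d$ in $G$; so if $|S|$ ever reached $t$, the $t$ hubs, expanded by their bounded-radius connectors inside the ball and the short paths certifying the hub property, would constitute a depth-$\lfloor 9d/2\rfloor$ model of $K_t$ — a contradiction. This caps $|S|\le t-1$ and also caps the number of hub-deletion steps by $t-1$. Second, the size blow-up: the inner recursion performs $\Oof(rt)$ nested reductions (at most $r$ radius decrements, each possibly spawning up to $t-1$ hub deletions, plus a factor $2$ from the path doublings needed for the minor accounting), each a constant number of pigeonhole/Ramsey steps of polynomial degree $\Oof(t)$; so $|A'|$ must exceed $m$ raised to $(4t+1)^{\Oof(rt)}$, and multiplying by the extra $\vartheta$-factor of the outer greedy loop keeps the exponent at $(4t+1)^{2rt}$ after the usual rounding, giving the claimed $N(m)=\Oof_{r,t}\!\big(m^{(4t+1)^{2rt}}\big)$. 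Every primitive — BFS, selecting the largest layer or the largest predecessor class, greedy independent-set extraction — is an elementary pass over $G$, and running the greedy loop for $\Oof(m)$ rounds and the inner recursion for $\Oof_{r,t}(1)$ levels yields the running time $\Oof_{r,t}(|A|\cdot|E(G)|)$.

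\textbf{Main obstacle.} The delicate point will be the minor extraction: converting a long run of failed hub deletions — equivalently, a run of the inner recursion that never terminates in the ``good'' outcome — into an honest depth-$\lfloor 9d/2\rfloor$ model of $K_t$. One must (a) exhibit the $t$ branch sets as genuinely vertex-disjoint connected subgraphs of radius $\Oof(d)$, which is where the large-$A'$ hypothesis is spent (to leave room for disjoint routing), (b) verify that the required cross-edges between branch sets are present, routed through the common ball $N^G_d(v)$, and (c) check that concatenating a branch-set path with two radius-$2d$ connectors and a short hub-witness path never exceeds $\lfloor 9d/2\rfloor$. Pinning the additive constants in (c) to land exactly at $\lfloor 9d/2\rfloor$, while simultaneously keeping the composition depth at $\Oof(rt)$ and the per-step polynomial degree at $\Oof(t)$ so that the final exponent is precisely $(4t+1)^{2rt}$, is where essentially all the work lies; the Ramsey base case, the outer greedy loop, and the union bounds on sizes are routine.
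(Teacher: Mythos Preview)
The paper does not prove this statement: Theorem~\ref{thm:uqw} is quoted verbatim from~\cite{l2017number} and used as a black box, so there is no in-paper proof to compare your proposal against.

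That said, your sketch is broadly faithful to how the result is actually established in the cited source and its predecessor~\cite{KreutzerRS17}. The two-level structure (an outer greedy loop that either harvests an $r$-independent set directly or localizes a large subset in a single ball, followed by an inner ``ball-shrinking'' recursion on the radius via BFS layering and predecessor pigeonholing), the accumulation of hubs into $S$, and the termination argument that $t$ hubs would yield a shallow $K_t$-model are exactly the right ingredients. You also correctly identify the delicate point: controlling the additive slack in the radius when lifting scattered predecessors back to the original set and when assembling the branch sets of the minor, which is precisely why the depth bound is $\lfloor 9r/2\rfloor$ rather than~$r$. Your accounting for the exponent --- $\Oof(rt)$ nested polynomial reductions of degree $\Oof(t)$ each --- is at the right order of magnitude, though pinning it to exactly $(4t+1)^{2rt}$ requires the specific bookkeeping of~\cite{l2017number} and is not something your outline by itself guarantees; treat that constant as the outcome of a careful implementation of your plan rather than as a consequence you have already derived.
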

\section{2VC-Dimension of nowhere dense classes}\label{sec:vc-dim}

In this section we prove \cref{thm:vc2}, which we repeat for
convenience. We remark again that our proof follows the lines of the
work of Bousquet and Thomass{\'{e}}~\cite{BousquetT15}, who proved
that the 2VC-dimension of the set system of all balls (of all radii) in a graph that
excludes $K_t$ as a minor is bounded by $t-1$.  As noted
in~\cite{BousquetT15}, this result was in turn based on the case of
planar graphs considered by Chepoi et al.~\cite{ChepoiEV07}.

\vc*

\begin{proof}
  Assume there is a set $A=\{a_1,\ldots, a_t\}$ of size $t$ such that
  for all subsets $\{i,j\}\subseteq \{1,\ldots,t\}$ of size $2$ there
  is a vertex $v_{ij}$ with
  $\{u\colon \dist(u,v_{ij})\leq r\}\cap A=\{a_i,a_j\}$.  For each
  subset $\{i,j\}\subseteq \{1,\ldots,t\}$ of size $2$, choose a
  vertex $u_{ij}$ so that:
  \begin{enumerate}
  \item \label{p:i}
    $\dist(v_{ij},u_{ij})+\dist(u_{ij},a_i)\leq r$;
  \item \label{p:j}
    $\dist(v_{ij},u_{ij})+\dist(u_{ij},a_j)\leq r$; and
  \item \label{p:min} subject to Conditions \ref{p:i} and
    \ref{p:j}, $\max(\dist(u_{ij},a_i),\dist(u_{ij},a_j))$ is
    minimized.
  \end{enumerate}
  Observe that such $u_{ij}$ must exist since setting $u_{ij}=v_{ij}$
  satisfies the first two conditions.

  Let $P^i_{ij}$ and $P^j_{ij}$ be arbitrarily chosen shortest paths
  between $u_{ij}$ and~$a_i$, and between $u_{ij}$ and~$a_j$,
  respectively.  We now establish some basic properties of paths
  $P^i_{ij}$ and $P^j_{ij}$ following from the choice of $u_{ij}$.

  \begin{claim}\label{cl:ineq}
    For every vertex $x$ on $P^i_{ij}$ we have
    $\dist(v_{ij},x)+\dist(x,a_i)\leq r$, and for every vertex $y$ on
    $P^j_{ij}$ we have $\dist(v_{ij},y)+\dist(y,a_j)\leq r$.
  \end{claim}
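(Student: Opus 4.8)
The plan is to derive both inequalities directly from the triangle inequality together with the elementary fact that every sub-path of a shortest path is again a shortest path; only Conditions~\ref{p:i} and~\ref{p:j} are needed, and the minimality Condition~\ref{p:min} plays no role here (it will presumably be used later, when controlling how $P^i_{ij}$ and $P^j_{ij}$ interact).

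First I would fix an arbitrary vertex $x$ on $P^i_{ij}$. Since $P^i_{ij}$ is a shortest path between $u_{ij}$ and $a_i$ and $x$ lies on it, the two sub-paths of $P^i_{ij}$ from $u_{ij}$ to $x$ and from $x$ to $a_i$ are themselves shortest, so $\dist(u_{ij},x)+\dist(x,a_i)=\dist(u_{ij},a_i)$. Combining this with the triangle inequality $\dist(v_{ij},x)\le\dist(v_{ij},u_{ij})+\dist(u_{ij},x)$ gives
$$\dist(v_{ij},x)+\dist(x,a_i)\le\dist(v_{ij},u_{ij})+\dist(u_{ij},x)+\dist(x,a_i)=\dist(v_{ij},u_{ij})+\dist(u_{ij},a_i)\le r,$$
where the last inequality is Condition~\ref{p:i}. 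The statement for a vertex $y$ on $P^j_{ij}$ is completely symmetric: one repeats the same computation with $a_i$, $P^i_{ij}$, and Condition~\ref{p:i} replaced by $a_j$, $P^j_{ij}$, and Condition~\ref{p:j}, respectively.

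I do not expect any genuine obstacle here: the entire content of the claim is that the ``budget'' of $r$ that Condition~\ref{p:i} allots to the broken walk from $v_{ij}$ through $u_{ij}$ to $a_i$ is never overspent at an intermediate vertex of its second leg, which becomes clear once one observes that moving from $u_{ij}$ to $x$ along $P^i_{ij}$ decreases the remaining distance to $a_i$ by exactly the amount by which it can increase the distance from $v_{ij}$. The only point that requires a moment's care is to use that $P^i_{ij}$ is a \emph{shortest} path (rather than merely some walk from $u_{ij}$ to $a_i$), so that $\dist(u_{ij},x)+\dist(x,a_i)=\dist(u_{ij},a_i)$ holds as an equality.
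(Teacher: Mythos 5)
Your proof is correct and is essentially identical to the paper's own: the same chain $\dist(v_{ij},x)+\dist(x,a_i)\le\dist(v_{ij},u_{ij})+\dist(u_{ij},x)+\dist(x,a_i)=\dist(v_{ij},u_{ij})+\dist(u_{ij},a_i)\le r$, using the triangle inequality, the fact that $x$ lies on a shortest $u_{ij}$--$a_i$ path, and Condition~\ref{p:i}, with the second statement by symmetry.
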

  \begin{clproof}
    We prove only the first statement for the second is symmetric.  We
    have
    \begin{align*}
      \dist(v_{ij},x)+\dist(x,a_{i})&\leq
                                      \dist(v_{ij},u_{ij})+\dist(u_{ij},x)+\dist(x,a_{i})\\ 
                                    &=\dist(v_{ij},u_{ij})+\dist(u_{ij},a_{i})\leq r,
    \end{align*}
    where the last equality is due to $x$ lying on a shortest path
    between $u_{ij}$ and $a_i$, and the last inequality is by
    Condition~\ref{p:i}.  \clqed\end{clproof}

  \begin{claim}\label{cl:closer}
    Let $x$ be a vertex on $P^i_{ij}$ that is different from
    $u_{ij}$. Then $\dist(x,a_i)<\dist(x,a_j)$.  Symmetrically, if $y$
    lies on $P^j_{ij}$ and is different from $u_{ij}$, then
    $\dist(y,a_i)>\dist(y,a_j)$.  Consequently, paths $P^i_{ij}$ and
    $P^j_{ij}$ share only one vertex, being the endpoint $u_{ij}$.
  \end{claim}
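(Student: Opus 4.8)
The plan is to prove the strict inequality $\dist(x,a_i)<\dist(x,a_j)$ by contradiction, using the minimality built into the choice of $u_{ij}$ (Condition~\ref{p:min}). Suppose towards a contradiction that some vertex $x$ on $P^i_{ij}$ with $x\neq u_{ij}$ satisfies $\dist(x,a_j)\leq \dist(x,a_i)$. The idea is to show that $x$ would be a strictly better candidate than $u_{ij}$ for the object selected by Conditions~\ref{p:i}--\ref{p:min}, which is absurd.

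I would first collect the elementary facts. Since $x$ lies on the shortest path $P^i_{ij}$ from $u_{ij}$ to $a_i$ and $x\neq u_{ij}$, we have $\dist(x,a_i)=\dist(u_{ij},a_i)-\dist(u_{ij},x)<\dist(u_{ij},a_i)$. Claim~\ref{cl:ineq} applied to $x$ gives $\dist(v_{ij},x)+\dist(x,a_i)\leq r$, and combining this with the assumption $\dist(x,a_j)\leq\dist(x,a_i)$ yields $\dist(v_{ij},x)+\dist(x,a_j)\leq r$ as well. Hence $x$ satisfies Conditions~\ref{p:i} and~\ref{p:j}. But then $\max(\dist(x,a_i),\dist(x,a_j))=\dist(x,a_i)<\dist(u_{ij},a_i)\leq\max(\dist(u_{ij},a_i),\dist(u_{ij},a_j))$, contradicting the minimality in Condition~\ref{p:min}. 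The symmetric statement for a vertex $y$ on $P^j_{ij}$ follows by exchanging the roles of $i$ and $j$.

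For the consequence, note that $u_{ij}$ is an endpoint of each of the two shortest paths $P^i_{ij}$ and $P^j_{ij}$, so it lies on both. If some other vertex $z\neq u_{ij}$ were common to both paths, then applying the first part to $P^i_{ij}$ gives $\dist(z,a_i)<\dist(z,a_j)$, while applying the symmetric part to $P^j_{ij}$ gives $\dist(z,a_i)>\dist(z,a_j)$ — a contradiction. Thus $P^i_{ij}$ and $P^j_{ij}$ meet only at $u_{ij}$.

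I do not anticipate a genuine obstacle: the whole argument is a single application of the extremal choice of $u_{ij}$. The only point requiring care is to verify that the competitor $x$ is a \emph{legitimate} competitor, i.e.\ that it satisfies Conditions~\ref{p:i} and~\ref{p:j} before we compare the quantity minimized in Condition~\ref{p:min}; this is exactly what Claim~\ref{cl:ineq} is set up to provide, so the proof goes through smoothly.
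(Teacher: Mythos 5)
Your proof is correct and follows essentially the same route as the paper: you show that any $x\neq u_{ij}$ on $P^i_{ij}$ with $\dist(x,a_j)\leq\dist(x,a_i)$ would, via Claim~\ref{cl:ineq}, satisfy Conditions~\ref{p:i}--\ref{p:j} while strictly beating $u_{ij}$ on the quantity minimized in Condition~\ref{p:min}. The deduction of disjointness (apart from $u_{ij}$) from the two inequalities is also how the paper handles it.
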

  \begin{clproof}
    We prove only the first claim, for the second is symmetric and the
    third directly follows from the first two.  Suppose for
    contradiction that $\dist(x,a_i)\geq \dist(x,a_j)$.  By
    \cref{cl:ineq} we have
    $$\dist(v_{ij},x)+\dist(x,a_i)\leq r.$$
    On the other hand, since $\dist(x,a_i)\geq \dist(x,a_j)$, we have
    $$\dist(v_{ij},x)+\dist(x,a_j)\leq\dist(v_{ij},x)+\dist(x,a_i)\leq r.$$
    We conclude that $x$ satisfies Conditions~\ref{p:i} and
    \ref{p:j} from the definition of $u_{ij}$.  However, since
    $x\neq u_{ij}$ and $x$ lies on a shortest path between $u_{ij}$
    and $a_i$, we have $\dist(x,a_i)<\dist(u_{ij},a_i)$.  Therefore,
    $$\dist(x,a_j)\leq \dist(x,a_i)<\dist(u_{ij},a_i)\leq \max(\dist(u_{ij},a_i),\dist(u_{ij},a_j)).$$
    Thus, the existence of $x$ contradicts Condition \ref{p:min}
    from the definition of $u_{ij}$.  \clqed\end{clproof}

  Now define paths $Q^i_{ij}$ and $Q^j_{ij}$ as follows:
  \begin{itemize}
  \item if $\dist(u_{ij},a_i)\leq \dist(u_{ij},a_j)$, then
    $Q^{i}_{ij}=P^{i}_{ij}$ and $Q^{j}_{ij}=P^{j}_{ij} - u_{ij}$;
  \item if $\dist(u_{ij},a_i)>\dist(u_{ij},a_j)$, then
    $Q^{i}_{ij}=P^{i}_{ij} - u_{ij}$ and $Q^{j}_{ij}=P^{j}_{ij}$;
  \end{itemize}
  Thus, by \cref{cl:closer} we have that paths $Q^{i}_{ij}$ and
  $Q^{j}_{ij}$ are disjoint. Moreover, for every vertex $x$ on
  $Q^{i}_{ij}$ we have $\dist(x,a_i)\leq \dist(x,a_j)$, and for every
  vertex $y$ on~$Q^{j}_{ij}$ we have $\dist(y,a_i)\geq \dist(y,a_j)$.

  \begin{claim}\label{cl:intersect}
    Let $\{i,j\}$ and $\{i',j'\}$ be two different subsets of size $2$
    of $\{1,\ldots,t\}$.  Suppose that paths $Q^i_{ij}$ and
    $Q^{i'}_{i'j'}$ intersect.  Then $i=i'$.
  \end{claim}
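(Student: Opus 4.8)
The plan is a short self-contained argument driven by the triangle inequality, exploiting that each $v_{ij}$ sees, within distance $r$, only the two vertices $a_i$ and $a_j$. Fix a vertex $z$ lying on both $Q^i_{ij}$ and $Q^{i'}_{i'j'}$, and record the four facts we will use. Since $Q^i_{ij}\subseteq P^i_{ij}$, Claim~\ref{cl:ineq} gives $\dist(v_{ij},z)+\dist(z,a_i)\le r$; and by the property of the paths $Q$ noted just before the claim, $\dist(z,a_i)\le\dist(z,a_j)$. Symmetrically, $\dist(v_{i'j'},z)+\dist(z,a_{i'})\le r$ and $\dist(z,a_{i'})\le\dist(z,a_{j'})$.

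Now suppose towards a contradiction that $i\ne i'$. The statement is symmetric under swapping the two pairs $\{i,j\}$ and $\{i',j'\}$, so I may assume $\dist(z,a_{i'})\le\dist(z,a_i)$. Then
\[
\dist(v_{ij},a_{i'})\le\dist(v_{ij},z)+\dist(z,a_{i'})\le\dist(v_{ij},z)+\dist(z,a_i)\le r,
\]
so $a_{i'}$ lies within distance $r$ from $v_{ij}$. Since $\{u\colon\dist(u,v_{ij})\le r\}\cap A=\{a_i,a_j\}$ and the $a_\ell$'s are distinct, we get $i'\in\{i,j\}$, and as $i'\ne i$ this forces $i'=j$.

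It remains to derive a contradiction from $i'=j$. As $\{i,j\}$ and $\{i',j'\}$ are distinct and both contain $j=i'$, we have $i\ne j'$; also $i\ne j=i'$. From $z\in Q^{i'}_{i'j'}$ we have $\dist(v_{i'j'},z)+\dist(z,a_{i'})\le r$, i.e.\ $\dist(v_{i'j'},z)+\dist(z,a_j)\le r$; combining with $\dist(z,a_i)\le\dist(z,a_j)$ (from $z\in Q^i_{ij}$) yields
\[
\dist(v_{i'j'},a_i)\le\dist(v_{i'j'},z)+\dist(z,a_i)\le\dist(v_{i'j'},z)+\dist(z,a_j)\le r.
\]
Hence $a_i$ is within distance $r$ from $v_{i'j'}$, so $i\in\{i',j'\}=\{j,j'\}$, contradicting $i\notin\{j,j'\}$. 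Therefore $i=i'$, as claimed.

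The step I expect to require the most care is this bootstrapping: one pass of the triangle inequality only produces the partial information $i'=j$ (or, under the symmetric choice, $i=j'$), not the full conclusion, and one must feed this back — together with the distinctness of the pairs $\{i,j\}$ and $\{i',j'\}$ — into a second triangle-inequality estimate to close the argument. Everything else is routine bookkeeping with the already-established properties of the paths $Q^i_{ij}$ and $Q^j_{ij}$.
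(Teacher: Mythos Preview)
Your proof is correct, and it takes a genuinely different route from the paper's. The paper splits into three cases according to whether the common vertex $z$ equals $u_{ij}$, equals $u_{i'j'}$, or neither; in the main case it exploits the \emph{strict} inequalities from Claim~\ref{cl:closer} to rule out $i=j'$ in a single pass, but it then has to handle the two corner cases separately with somewhat ad hoc arguments. You instead work uniformly with the \emph{non-strict} inequalities $\dist(z,a_i)\le\dist(z,a_j)$ and $\dist(z,a_{i'})\le\dist(z,a_{j'})$ that hold on all of $Q^i_{ij}$ and $Q^{i'}_{i'j'}$, thereby avoiding any case distinction on whether $z$ coincides with a vertex $u_{\bullet\bullet}$. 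The price is that one triangle-inequality pass only yields $i'\in\{i,j\}$, so you need the second ``bootstrapping'' pass (using $\dist(z,a_i)\le\dist(z,a_j)$ together with $i'=j$ and the distinctness of the two pairs) to reach a contradiction. Both arguments are short; yours is more uniform, the paper's main case is more direct but burdened by the corner cases.
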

  \begin{clproof}
    Let $x$ be a vertex lying both on $Q^i_{ij}$ and
    $Q^{i'}_{i'j'}$. We first consider the corner case when
    $x=u_{ij}$.  Suppose first that
    $\dist(v_{ij},x)\geq \dist(v_{i'j'},x)$. Then by
    \cref{cl:ineq} we have
    $$\dist(v_{i'j'},a_i)\leq \dist(v_{i'j'},x)+\dist(x,a_i)\leq \dist(v_{ij},x)+\dist(x,a_i)\leq r,$$
    and analogously $\dist(v_{i'j'},a_{j})\leq r$. However, we assumed
    that $a_{i'}$ and $a_{j'}$ are the only vertices of~$A$ that are
    at distance at most $r$ from $v_{i'j'}$, hence
    $\{i,j\}=\{i',j'\}$, a contradiction. Suppose then that
    $\dist(v_{ij},x)<\dist(v_{i'j'},x)$.  Then we have
    $$\dist(v_{ij},a_{i'})\leq \dist(v_{ij},x)+\dist(x,a_{i'})<\dist(v_{i'j'},x)+\dist(x,a_{i'})\leq r,$$
    where the last equality follows from \cref{cl:ineq}.  Since
    $a_i$ and $a_j$ are the only vertices of $A$ that are at distance
    at most $r$ from $v_{ij}$, we infer that $i'\in \{i,j\}$.  If
    $i'=i$ then we would be done, so suppose $i'=j$.  Since $x=u_{ij}$
    and $x$ lies on $Q^i_{ij}$, by the definition of $Q^i_{ij}$ we
    have that $\dist(x,a_i)\leq
    \dist(x,a_j)=\dist(x,a_{i'})$. Therefore,
    $$\dist(v_{i'j'},a_i)\leq \dist(v_{i'j'},x)+\dist(x,a_{i})\leq \dist(v_{i'j'},x)+\dist(x,a_{i'})\leq r.$$
    where the last inequality follows from \cref{cl:ineq}.
    Again, we assumed that $a_{i'}$ and $a_{j'}$ are the only vertices
    of $A$ that are at distance at most $r$ from $v_{i'j'}$, so
    $i\in \{i',j'\}$. If $i=i'$ then we are done, and otherwise we
    have $i=j'$.  Together with $i'=j$ this implies
    $\{i,j\}=\{i',j'\}$, a contradiction.

    The second corner case when $x=u_{i'j'}$ leads to a contradiction
    in a symmetric manner.

    We now move to the main case when $x\neq u_{ij}$ and
    $x\neq u_{i'j'}$.  Then by \cref{cl:closer} we have
    $\dist(x,a_i)<\dist(x,a_j)$ and $\dist(x,a_{i'})<\dist(x,a_{j'})$.
    By symmetry, without loss of generality assume that
    $\dist(x,a_i)\leq \dist(x,a_{i'})$.  Observe now that
    $$\dist(v_{i'j'},a_i)\leq \dist(v_{i'j'},x)+\dist(x,a_{i})\leq \dist(v_{i'j'},x)+\dist(x,a_{i'})\leq r,$$
    where the last inequality follows from \cref{cl:ineq}.  As we
    assumed that $a_{i'}$ and $a_{j'}$ are the only vertices of $A$
    that are at distance at most $r$ from $v_{i'j'}$, we have
    $i\in \{i',j'\}$.  However, it cannot happen that $i=j'$, because
    $\dist(x,a_{i'})<\dist(x,a_{j'})$ and
    $\dist(x,a_{i'})\geq \dist(x,a_{i})$. We conclude that $i=i'$.
    \clqed\end{clproof}

  For each $i\in \{1,2,\ldots,t\}$ we define $X_i$ to be the union of
  vertex sets of paths $Q^i_{ij}$ for $j\neq i$.  Each of these paths
  has length at most $r$ and has $a_i$ as an endpoint, hence the
  subgraph induced by $X_i$ is connected and has radius at most $r$.
  By \cref{cl:intersect}, sets $X_i$ are pairwise
  disjoint. Finally, observe that for each
  $\{i,j\}\subseteq \{1,\ldots,t\}$ with $i\neq j$, there is an edge
  between a vertex of $Q^{i}_{ij}$ and a vertex of $Q^{j}_{ij}$.  We
  conclude that $(X_i)_{i=1,\ldots,t}$ is a depth-$r$ minor model of
  $K_t$ in~$G$, a contradiction.
\end{proof}
\section{Domination and independence duality in nowhere dense classes}\label{sec:fractionalpacking}

\paragraph{Gap between $\gamma_r$ and $\gamma^\star_r$} 
We first prove \cref{thm:non-const-frac-integral}, which we
repeat for convenience.

\gap*

We will first prove the following auxiliary lemma, which essentially
encompasses the statement for $r=1$.  Here $\Delta(G)$ denotes the
maximum degree of a vertex in $G$, whereas $\mathrm{girth}(G)$ is the
minimum length of a cycle in $G$.

\begin{lemma}\label{lem:Gd-exists}
  For every sufficiently large $d\in\N$ there exists a graph $G_d$,
  say with~$n$ vertices, satisfying the following properties.
  \begin{enumerate}
  \item\label{p:maxdeg} $\Delta(G_d)\leq d$;
  \item\label{p:girth} $\mathrm{girth}(G_d)\geq d$;
  \item\label{p:intg} $\gamma_1(G_d)\geq \frac{\ln d}{2d}\cdot n$; and
  \item\label{p:frac} $\gamma_1^\star(G_d)\leq \frac{2}{d}\cdot n$.
  \end{enumerate}
\end{lemma}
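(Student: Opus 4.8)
The plan is to obtain $G_d$ from a probabilistic construction, following the classical argument that produces graphs of high girth and high chromatic/domination number. First I would take a random graph $G(n,p)$ on $n$ vertices with edge probability $p = c/n$ for a suitable constant $c = c(d)$ to be tuned; concretely $p$ of order $d/n$ so that the expected degree is around $d$. The three things to control are: the maximum degree, the number of short cycles, and the (fractional) domination number. For the maximum degree, a Chernoff bound shows that with high probability every vertex has degree at most, say, $2cp n = 2c$; by choosing $c$ slightly below $d/2$ we get $\Delta \le d$ after the cleanup step below. For short cycles, the expected number of cycles of length less than $d$ is $\sum_{\ell=3}^{d-1} \binom{n}{\ell}\frac{(\ell-1)!}{2} p^\ell \le \sum_{\ell < d} (np)^\ell = O((np)^{d})$, which is $o(n)$ provided $np$ is a constant and $n$ is large; so with probability at least (say) $1/2$ there are at most $n/2$ such short cycles. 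We then delete one vertex from each short cycle and one endpoint of each high-degree-violating edge; this removes at most $n/2 + o(n)$ vertices, leaving a graph $G_d$ on $\ge n/2$ vertices (rescale $n$ at the end) with girth $\ge d$ and $\Delta(G_d)\le d$.

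The heart of the matter is the lower bound \ref{p:intg} on $\gamma_1(G_d)$ and the upper bound \ref{p:frac} on $\gamma_1^\star(G_d)$. The fractional bound is the easy direction: assigning weight $y_v = 2/d$ to every vertex is a feasible fractional dominating set once every closed neighborhood has size at most $d/2$, i.e.\ once $\Delta(G_d) \le d/2 - 1$; adjusting the constants in the construction so that $\Delta(G_d) \le d/2$ (still $\le d$, so \ref{p:maxdeg} holds) makes $N_1(u)$ have size $\le d/2+1 \le d$ for $d$ large, giving $\sum_{v\in N_1(u)} y_v = \frac{2}{d}\cdot|N_1(u)| \ge 1$ — wait, I need $\le d$ vertices each of weight $2/d$ to sum to $\ge 1$, which needs $|N_1(u)|\ge d/2$, the wrong direction. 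The correct reading: a fractional \emph{dominating} set needs $\sum_{v\in N_1(u)} y_v \ge 1$, so uniform weight $2/d$ works as long as $|N_1(u)| \ge d/2$ for all $u$; thus I instead want the \emph{minimum} degree to be at least $d/2-1$, which the random construction also delivers with high probability (a Chernoff lower-tail bound on degrees) before cleanup, and the cleanup only deletes few vertices so can be arranged to keep min-degree $\ge d/2$ on the surviving part by additionally deleting low-degree vertices. Then $\gamma_1^\star(G_d) \le \sum_v y_v = \frac{2}{d} n$, giving \ref{p:frac}.

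For the integral lower bound \ref{p:intg}, the key observation is that in a graph of girth $\ge d \ge 5$, the closed neighborhoods $N_1(v)$ are ``locally tree-like'': if $D$ is a dominating set then each vertex of $D$ dominates at most $\Delta+1 \le d+1$ vertices, so trivially $\gamma_1(G_d) \ge n/(d+1)$. To get the extra $\ln d$ factor one argues as in the standard lower bound on domination number via the LP duality with \emph{fractional matching}, or more directly: I would instead start from a known explicit or random family of $d$-regular graphs of girth $\Omega(\log n)$ (e.g.\ Ramanujan graphs, or random $d$-regular graphs) in which a short second-moment / first-moment argument shows $\gamma_1 \ge \frac{\ln d}{2d}\,n$ — this holds because a uniformly random set of fewer than $\frac{\ln d}{2d} n$ vertices fails to dominate some vertex in expectation: the probability a fixed vertex is undominated is $(1 - \frac{\ln d}{2d})^{d+1} \ge e^{-(\ln d)(d+1)/(2d)} \ge d^{-3/4} > 0$ for $d$ large, so the expected number of undominated vertices is positive, hence \emph{no} set of that size dominates — more carefully one shows $\gamma_1 > \frac{\ln d}{2d} n$ by a union/counting bound over all small sets, exactly the classical proof that $\gamma_1(G) \ge \frac{\ln(\delta+1)}{\delta+1}(1-o(1))\,n$ for graphs of minimum degree $\delta$. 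I expect this last step — getting the logarithmic factor cleanly while simultaneously keeping girth, max-degree and the fractional bound all under control in a single graph — to be the main technical obstacle, handled by combining the random $d$-regular (or $G(n,p)$) model with alteration and invoking the standard counting lower bound on $\gamma_1$ for the min-degree condition.
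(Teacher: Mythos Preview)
The central gap is in Property~\ref{p:intg}. Your sketch for the lower bound on $\gamma_1$ is incorrect in two ways. First, the argument ``a uniformly random set of size $\frac{\ln d}{2d}n$ leaves a positive expected number of undominated vertices, hence no set of that size dominates'' is a non sequitur: positive expectation does not rule out the event having probability zero for \emph{some} sets. Second, and more seriously, the ``classical proof that $\gamma_1(G)\ge \frac{\ln(\delta+1)}{\delta+1}(1-o(1))\,n$ for graphs of minimum degree $\delta$'' does not exist: the classical bound (Arnautov, Payan, Lov\'asz) goes the other way, $\gamma_1(G)\le \frac{1+\ln(\delta+1)}{\delta+1}\,n$. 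There is no deterministic lower bound of the form $\gamma_1\ge \Omega\!\left(\frac{\ln d}{d}\right)n$ from degree conditions alone; e.g.\ $K_{d,d}$ has minimum degree $d$ and $\gamma_1=2$. The $\ln d$ factor is genuinely a property of the \emph{random} model, and the paper obtains it by quoting the Alon--Wormald result that in a random $d$-regular multigraph the expected number of dominating sets of size below $\frac{c\ln d}{d}\,n$ (any $c<1$) tends to $0$. Without an analogue of this first-moment computation over all small sets in your chosen model, Property~\ref{p:intg} is unproved.

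A secondary issue is the interaction of your cleanup step with Property~\ref{p:frac}. You correctly identify that uniform weights give a feasible fractional dominating set only under a \emph{minimum}-degree condition, but your cleanup deletes vertices, which lowers degrees of neighbours; ``additionally deleting low-degree vertices'' can cascade. The paper sidesteps this by working in the random $d$-regular model and removing one \emph{edge} per short cycle: then $G_0$ is exactly $d$-regular so $\gamma_1^\star(G_0)\le n/d$ with uniform weight $1/d$, and each of the $\le 10\sqrt{n}$ edge removals changes $\gamma_1$ and $\gamma_1^\star$ by at most~$1$, which is negligible. Switching to the configuration model and edge-deletion would fix both issues at once.
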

\begin{proof}
  Let $n$ be a large even integer, to be fixed depending on $d$ later.
  We choose the graph $G$ at random using the following random
  procedure.  Consider a set of $dn$ vertices, divided into $n$
  buckets, each containing $d$ vertices.  Choose a matching $M$ on
  those $dn$ vertices uniformly at random.  Collapse each bucket into
  a single vertex, thus creating a multigraph $G_0$ with $dn/2$ edges:
  every edge of $M$ gives rise to one edge in $G_0$ that connects
  the vertices corresponding to the buckets containing the endpoints
  of the original edge. Note that $G_0$ is $d$-regular and may contain
  multiple edges and loops; the latter ones may arise in case some
  edge of $M$ has both endpoints in the same bucket.  This procedure
  of choosing a random $d$-regular multigraph $G_0$ shall be called
  the {\em{bucket model}}.  Finally, we obtain $G$ from $G_0$ as
  follows: for every cycle $C$ of length at most $d$ in $G_0$, pick an
  arbitrary edge of $C$ and remove it.  Note that, in particular, in
  this manner we remove all the loops and reduce the multiplicity of
  every edge to at most $1$; hence $G$ is a simple graph.

  Since $G_0$ is $d$-regular and $G$ is a subgraph of $G_0$, it is
  clear that $\Delta(G)\leq d$; hence Property~\ref{p:maxdeg} is
  always satisfied.  Property~\ref{p:girth} follows directly from the
  construction.  For Properties~\ref{p:intg} and~\ref{p:frac}, we will
  need that for large enough $n$, we actually remove only a sublinear
  (in $n$) number of edges when constructing $G$ from $G_0$.  This
  follows from well-known estimates on the number of short cycles in a
  random regular graph (see
  e.g. Bollob\'as~\cite{bollobas1980probabilistic} and
  Wormald~\cite{wormald1980some, wormald1981asymptotic}), but we give
  a direct proof for completeness.

\begin{claim}
  Let $X$ be a random variable counting the number of different cycles
  of length at most $d$ in $G_0$.  Then for large enough even $n$
  depending on $d$, we have
$$\Exp X \leq \sqrt{n}.$$
\end{claim}
\begin{clproof}
  For an even integer $k$, let $M(k)$ be the number of matchings on
  $k$ vertices, which is
  $$M(k)=(k-1)\cdot (k-3)\cdot\ldots \cdot 3\cdot 1=\frac{k!}{2^{k/2}\cdot (k/2)!}.$$
  By Stirling's approximation, for large enough $k$ we have
  \begin{equation}\label{eq:Mk}
    (2/e)^{k/2}\cdot (k/2)^{k/2}\leq M(k)\leq 2\cdot (2/e)^{k/2}\cdot (k/2)^{k/2}.
  \end{equation}

  For $\ell\in \{1,\ldots,d\}$, let $X_\ell$ be a random variable
  counting the number of cycles of length $\ell$ in $G_0$.  Then
  $X=X_1+\ldots+X_d$. We may estimate the expected value of $X_\ell$
  as follows:
  \begin{equation}\label{eq:Xl}
    \Exp X_\ell \leq n^\ell d^{2\ell}\cdot \frac{M(nd-2\ell)}{M(nd)}.
  \end{equation}
  Indeed, $M(nd)$ is the cardinality of the probabilistic space in the
  bucket model, the factor $n^\ell d^{2\ell}$ is an upper bound on the
  number of ways vertices and edges of a cycle of length $\ell$ may be
  chosen, while $M(nd-2\ell)$ is the number of ways that the other
  edges, outside of this cycle, are chosen.  Combining \eqref{eq:Xl}
  with \eqref{eq:Mk}, for $n$ large enough we have
\begin{align*}
  \Exp X_\ell & \leq 2\cdot n^\ell\cdot d^{2\ell}\cdot (2/e)^{-\ell}\cdot \frac{(nd/2-\ell)^{nd/2-\ell}}{(nd/2)^{nd/2}} \\
              & =    2\cdot n^\ell\cdot d^{2\ell}\cdot (2/e)^{-\ell}\cdot \left(1-\frac{\ell}{nd/2}\right)^{nd/2}\cdot (nd/2-\ell)^{-\ell} \\
              & \leq 2\cdot n^\ell\cdot d^{2\ell}\cdot (2/e)^{-\ell}\cdot e^{-\ell}\cdot (nd/4)^{-\ell} = 2\cdot (2d)^\ell.
\end{align*}
Thus, we have
$$\Exp X = \sum_{\ell=1}^d \Exp X_\ell\leq \sum_{\ell=1}^d 2\cdot (2d)^{\ell}\leq (2d)^{d+1}.$$
Hence, for $n\geq (2d)^{2(d+1)}$ it holds that $\Exp X\leq \sqrt{n}$.
\clqed\end{clproof}

By Markov's inequality, we conclude that with probability at least
$\frac{9}{10}$, $G_0$ contains at most $10\sqrt{n}$ cycles of length
at most $d$, which in particular implies that in this case $G$
contains at most $10\sqrt{n}$ fewer edges than $G_0$.  Removal of a
single edge can increase the domination number and the fractional
domination number of a multigraph by at most $1$; here, when defining
the fractional domination number of a multigraph we determine
domination of a vertex $u$ by verifying whether the sum of weights of
the second endpoints over all edges incident to $u$ is at least
$1$. Hence we have the following.

\begin{claim}\label{cl:close-dom}
With probability at least $\frac{9}{10}$ we have
$$\gamma_1(G_0)\leq \gamma_1(G)\leq \gamma_1(G_0)+10\sqrt{n}\quad\mathrm{and}\quad\gamma^\star_1(G_0)\leq \gamma^\star_1(G)\leq \gamma^\star_1(G_0)+10\sqrt{n}.$$
\end{claim}

\cref{cl:close-dom} essentially reduces checking
Properties~\ref{p:intg} and~\ref{p:frac} for $G$ to checking them for
$G_0$, as the (fractional) domination numbers are almost the same.  On
one hand, we have
$$\gamma^\star_1(G_0)\leq \frac{n}{d},$$
since putting the weight $\frac{1}{d}$ on every vertex of $G_0$ yields a fractional dominating set of size $\frac{n}{d}$. Since $10\sqrt{n}<\frac{n}{d}$ for large enough $n$,
by \cref{cl:close-dom} we conclude that Property~\ref{p:frac} holds with probability at least $\frac{9}{10}$.

To verify that Property~\ref{p:intg} also holds with high probability, we use the results of Alon and Wormald~\cite{alon2010high}, who studied the expected size of a minimum dominating set in a random $d$-regular graph.

\begin{claim}[stated in the proof of Thm 1.2 of Alon and Wormald \cite{alon2010high}]

\mbox{}\\
Fix large enough $d$ and consider even $n$ tending to infinity. 
Then for any $c<1$, the expected number of dominating sets of size at most $\frac{c\ln d}{d}\cdot n$ in a $d$-regular multigraph on $n$ vertices chosen randomly according to the bucket model tends to~$0$.
\end{claim}

In particular, by Markov's inequality, for large enough $n$ the probability that~$G_0$ contains a dominating set of size at most $\frac{\ln d}{4d}\cdot n$ is at most $\frac{1}{10}$.
By \cref{cl:close-dom} we infer that $G$ contains a dominating set of size at most $\frac{\ln d}{4d}\cdot n+10\sqrt{n}$ with probability at most $\frac{2}{10}$,
and this value is upper bounded by $\frac{\ln d}{2d}\cdot n$ for large enough $n$.

We conclude that having fixed $d$ large enough, Properties~\ref{p:intg} and~\ref{p:frac} hold simultaneously for large enough $n$ with probability at least $\frac{7}{10}$. This concludes the proof.
\end{proof}

For a graph $G$ and $r\in \N$, by $G^{(r)}$ we denote the {\em{exact
    $r$-subdivision}} of $G$, which is a graph obtained from $G$ by
subdviding every edge $r-1$ times, i.e., replacing it with a path of
length $r$.  We now lift the statement of \cref{lem:Gd-exists} to
larger radii by considering the following construction; see
\cref{fig:blowup}.

\newcommand{\pendantr}[2]{{#1}^{\langle #2\rangle}}

\begin{definition}\label{def:Gr}
  Let $G$ be a graph and let $r\in\N$.  Construct a graph
  $\pendantr{G}{r}$ from the exact-$r$ subdivision $G^{(r)}$ of $G$ by
  adding two new vertices~$x$ and $y$, connecting~$x$ to every
  subdivision vertex using a path of length $r$, and connecting $y$ to
  $x$ using a path of length $r$.
\end{definition}

\begin{figure}[h]
                \centering
                \def\svgwidth{0.8\columnwidth}
                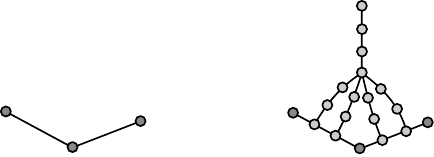
\caption{Construction of $\pendantr{G}{3}$ from $G$, where $G$ is a path on $3$ vertices.}\label{fig:blowup}
\end{figure}

We note the following properties of the above construction. 

\begin{lemma}\label{lem:propGr} The following assertions hold:
  \begin{enumerate}
  \item\label{a:blowup} If $\Cc$ is a nowhere dense graph class, then
    $\{\pendantr{G}{r} \colon G\in\Cc, r\in\N\}$ is nowhere dense as
    well.
  \item\label{a:dom} For every graph $G$ and every $r\in \N$ we have
    $\gamma_r(\pendantr{G}{r})=\gamma_1(G)+1$.
  \item\label{a:ind} For every graph $G$ and every $r\in\N$ we have
    $\alpha_{2r}^\star(\pendantr{G}{r})= \alpha_{2}^\star(G)+1$.
  \end{enumerate}
\end{lemma}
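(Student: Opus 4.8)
The plan is to treat the three assertions of the lemma in turn. The domination identity and the fractional‑packing identity are direct computations with (fractional) dominating sets organised around the two distinguished vertices $x$ and $y$, whereas the nowhere‑denseness claim reduces to standard closure properties of nowhere dense classes. Everything rests on a handful of elementary metric observations about $\pendantr{G}{r}$, which I would record first: (i) any two vertices $u,v\in V(G)$ lie at distance exactly $r\cdot\dist_G(u,v)$ in $\pendantr{G}{r}$, so in particular $N_r(v)\cap V(G)=N_G[v]$ for every $v\in V(G)$; (ii) $\dist_{\pendantr{G}{r}}(v,x)=r+1$ and $\dist_{\pendantr{G}{r}}(v,y)=2r+1$ for every $v\in V(G)$; (iii) every subdivision vertex, every internal vertex of a tendril attached to $x$, and every vertex of the $x$--$y$ path lies within distance $r$ of $x$ in $\pendantr{G}{r}$; and (iv) $N_r(y)$ is exactly the vertex set of the $x$--$y$ path. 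Each of these is immediate from the construction: inside $G^{(r)}$ distances are $r$ times the $G$‑distances, leaving $V(G)$ in the direction of $x$ costs at least one step beyond first reaching a subdivision vertex, which is itself at distance precisely $r$ from $x$, and the tendrils and the $x$--$y$ path all have length exactly $r$.

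For the domination identity, the inequality ``$\leq$'' is witnessed by $D\cup\{x\}$, where $D$ is a minimum dominating set of $G$: by~(i) the set $D$ distance‑$r$ dominates $V(G)$, and by~(iii) the vertex $x$ distance‑$r$ dominates every remaining vertex (including $y$, using (ii) and (iii)). For ``$\geq$'', let $D^\star$ be a minimum distance‑$r$ dominating set of $\pendantr{G}{r}$. By~(iv), $D^\star$ meets the vertex set of the $x$--$y$ path, which is disjoint from $V(G)$. I then project $D^\star$ onto $V(G)$: a vertex of $V(G)$ is left untouched, a subdivision or tendril vertex is replaced by one fixed endpoint of the $G$‑edge it is associated with, and $x$ together with the internal vertices of the $x$--$y$ path are discarded. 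The resulting set $D'$ has at most $|D^\star|-1$ elements, since the $x$--$y$‑path vertex of $D^\star$ contributes nothing. The crux is to verify that $D'$ dominates $G$: if $v\in V(G)$ is distance‑$r$ dominated in $\pendantr{G}{r}$ by a vertex associated with a $G$‑edge $ab$, then a shortest path from $v$ to that vertex enters the edge $ab$ through $a$ or $b$ (it cannot route through $x$ by~(ii)), whence the remaining distance budget forces $\dist_{\pendantr{G}{r}}(v,a)<r$ or $\dist_{\pendantr{G}{r}}(v,b)<r$, which by~(i) is possible only if $v\in\{a,b\}$; in that case both $a$ and $b$ lie in $N_G[v]$, so whichever endpoint we put into $D'$ dominates $v$ in $G$. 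Moreover, by~(ii) no vertex of $V(G)$ is distance‑$r$ dominated by $x$ or by an internal vertex of the $x$--$y$ path. Hence $\gamma_1(G)\leq|D'|\leq\gamma_r(\pendantr{G}{r})-1$, and combined with ``$\leq$'' this gives the claimed equality.

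For the fractional‑packing identity I would run exactly the same argument at the level of the linear programs, invoking the duality $\alpha_{2q}^\star(H)=\gamma_q^\star(H)$ recorded earlier; thus it is enough to prove $\gamma_r^\star(\pendantr{G}{r})=\gamma_1^\star(G)+1$. For ``$\leq$'', extend an optimal fractional dominating set of $G$ by placing weight $1$ on $x$: by~(i) this satisfies the covering constraints at the vertices of $V(G)$, and by~(iii) the weight on $x$ alone satisfies every remaining constraint. For ``$\geq$'', take an optimal fractional distance‑$r$ dominating set $\tilde x$ of $\pendantr{G}{r}$ and push its weight onto $V(G)$ along the projection used above, adding up the weights that land on a common vertex and discarding the weight placed on $x$ and on the internal vertices of the $x$--$y$ path. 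The constraint of the $\gamma^\star_r$‑LP at the vertex $y$ reads $\sum_{v\in N_r(y)}\tilde x_v\geq 1$, so by~(iv) the discarded weight is at least $1$; and the same ``endpoint forcing'' as above shows that for every $u\in V(G)$ the weight accumulating on $N_G[u]$ is at least $\sum_{v\in N_r(u)}\tilde x_v\geq 1$. Hence the pushed‑down weighting is a fractional dominating set of $G$ of total weight at most $\gamma_r^\star(\pendantr{G}{r})-1$, and the equality follows.

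Finally, for the nowhere‑denseness claim I would write $\pendantr{G}{r}=H+x$, where $H=\pendantr{G}{r}-x$ is the disjoint union of a path on $r$ vertices with the graph obtained from the exact‑$r$ subdivision $G^{(r)}$ by attaching a pendant path of length $r-1$ at each subdivision vertex. I then use three facts, each standard (and the last two admitting a short proof via shallow minors, which I would include for completeness): nowhere dense classes are closed under taking subdivisions of arbitrary length, under attaching pendant trees of arbitrary number and size (a pendant tree hosts at most two branch sets of a shallow clique‑minor model that avoid its attachment vertex, and deleting the pendant parts of the remaining branch sets projects the model into the base graph at the cost of only a bounded loss of depth), and under adding a bounded number of new vertices (each new vertex lies in at most one branch set of any shallow clique‑minor model). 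The first two facts yield that the class $\{H:G\in\Cc,\ r\in\N\}$ is nowhere dense; applying the third with the single vertex $x$ then gives that $\{\pendantr{G}{r}:G\in\Cc,\ r\in\N\}$ is nowhere dense. The only genuine difficulty in the whole proof is the ``endpoint forcing'' step shared by the domination and fractional‑packing identities; the nowhere‑denseness claim and the two ``$\leq$'' directions are routine.
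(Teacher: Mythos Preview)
Your argument is correct in substance and close to the paper's, with two genuine differences in presentation and two minor slips.

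For Assertion~2 your projection argument coincides with the paper's: the paper packages it as a map $f\colon V(\pendantr{G}{r})\to V(G)\cup\{x\}$ and observes that $f(D)\setminus\{x\}$ dominates $G$ because $f$ never shrinks the set of original vertices that a given vertex $r$-dominates. Two small points to clean up: your observation~(i) that $\dist_{\pendantr{G}{r}}(u,v)=r\cdot\dist_G(u,v)$ fails once $\dist_G(u,v)\geq 3$ and $r\geq 3$ (the route through $x$ has length $2r+2$), though you only ever use the correct consequence $N_r(v)\cap V(G)=N_G[v]$; and your projection leaves the vertex $y$ unassigned---it should be discarded together with the rest of the $x$--$y$ path, after which the ``$\geq 1$ discarded weight'' argument from the constraint at $y$ goes through verbatim.

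For Assertion~3 you and the paper diverge. You pass to the covering LP via duality and reuse the projection from Assertion~2 to show $\gamma_r^\star(\pendantr{G}{r})=\gamma_1^\star(G)+1$. The paper stays on the packing side: since every vertex outside $V(G)$ lies in $N_r(x)$, the packing constraint at $x$ caps the total weight outside $V(G)$ by $1$, and the restriction of the weighting to $V(G)$ is immediately feasible for $\alpha_2^\star(G)$. The paper's route is shorter and needs no projection; yours has the virtue of unifying Assertions~2 and~3 under one mechanism.

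For Assertion~1 the paper offers only a one-line claim (``if $K_t\not\minor_s G$ then $K_{t+1}\not\minor_s\pendantr{G}{r}$''), while you decompose $\pendantr{G}{r}$ as an apex over a pendant-path extension of $G^{(r)}$ and invoke closure of nowhere denseness under subdivisions, pendant trees, and bounded apex. Your route is longer but makes the reasoning explicit; the paper's single sentence is terse to the point of leaving the reader to reconstruct essentially what you wrote.
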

\begin{proof}
  Assertion~\ref{a:blowup} is trivial: if $K_{t}\not\minor_s G$, then
  $K_{t+1}\not\minor_s \pendantr{G}{r}$, for every $r\in \N$.

  For Assertion~\ref{a:dom}, observe first that if $D\subseteq V(G)$
  is a dominating set in $G$, then $D\cup \{x\}$ (as a subset of
  $V(\pendantr{G}{r})$) is a distance-$r$ dominating set of
  $\pendantr{G}{r}$.  For the reverse inequality, let $O=V(G)$ be the
  set of vertices of $\pendantr{G}{r}$ that were originally in $G$,
  and define a function $f\colon V(\pendantr{G}{r})\to O\cup \{x\}$ as
  follows: every vertex of $O$ is mapped to itself, all vertices of
  the paths from $y$ to $x$ are mapped to $x$, and all vertices on any
  length-$r$ path from $x$ (exclusive) to some subdivision vertex~$w$
  (inclusive), where $w$ lies on the length-$r$ path between some
  $u,v\in O$ in~$G^{(r)}$, are mapped to either $u$ or $v$ (chosen
  arbitrarily).  It is not hard to see that if $D$ is a distance-$r$
  dominating set in $\pendantr{G}{r}$, then so is $f(D)$: the
  set~$D$ needs to contain some vertex from the $x$-to-$y$ path, due
  to the necessity of dominating $y$, hence $x\in f(D)$ and $x$
  already $r$-dominates all the vertices of~$\pendantr{G}{r}$ apart
  from~$O$, and~$f$ maps every vertex to a vertex that can only
  dominate more vertices from~$O$.  Since
  $\{x\}\subseteq f(D)\subseteq O\cup \{x\}$ and $x$ does not
  $r$-dominate any vertex of $O$, it follows that
  $f(D)\setminus \{x\}$ is a dominating set in $G$.

Finally, for Assertion~\ref{a:ind}, let $I^\star$ be a fractional
distance-$2$ independent set of~$G$. Then we obtain a fractional
distance-$2r$ independent set of $\pendantr{G}{r}$ of value one larger
by adding weight $1$ to the vertex $y$. Vice versa, let $I^\star$ be a
fractional distance-$2r$ independent set of $\pendantr{G}{r}$. Observe
that all vertices of $V(\pendantr{G}{r})\setminus O$ can collect a
total weight of at most $1$, as they all lie in the $r$-neighborhood
of the vertex $x$. It follows that $I^\star$ restricted to $O$ is a
fractional distance-$2r$ independent set in $G^{(r)}$, which means
that it is also a fractional distance-$2$ independent set in $G$.
\end{proof}


\begin{proof}[Proof of \cref{thm:non-const-frac-integral}]
  The class $\Dd=\{G \colon \mathrm{girth}(G)\geq \Delta(G)\}$ is
  known to be nowhere dense~\cite{sparsity}.  By
  \cref{lem:Gd-exists}, for every large enough $d\in \N$ there is
  a graph $G_d\in \Dd$ with
  $\frac{\gamma_1(Gd)}{\gamma^\star(G_d)}\geq \frac{\ln d}{4}$.  It
  follows that
\[\sup_{G\in \Dd}\, \frac{\gamma_1(G)}{\gamma_{1}^\star(G)} = +\infty.\]
By \cref{lem:propGr}, Assertion~\ref{a:blowup}, the class
$\Cc\coloneqq \{\pendantr{G}{r} \colon G\in\Dd, r\in\N\}$ is nowhere
dense. By \cref{lem:propGr}, Assertions~\ref{a:dom}
and~\ref{a:ind}, and the fact that
$\gamma_1^\star(G)=\alpha_2^\star(G)$ for all graphs $G$, for every
$r\in\N$ we have
\[\sup_{\pendantr{G}{r}\in \Cc}\, \frac{\gamma_r(\pendantr{G}{r})}{\gamma_{r}^\star(\pendantr{G}{r})}=
\sup_{G\in \Dd}\, \frac{\gamma_1(G)+1}{\gamma_{1}^\star(G)+1}\geq
\sup_{G\in \Dd}\, \frac{\gamma_1(G)}{2\gamma_{1}^\star(G)}=
\frac{1}{2}\sup_{G\in \Dd}\,
\frac{\gamma_1(G)}{\gamma_{1}^\star(G)}=+\infty.\]
This concludes the proof.
\end{proof}

\paragraph{Gap between $\alpha_{2r}$ and $\gamma_r$}
We now prove \cref{thm:duality}, which we repeat for
convenience.

\duality* 

We are going to make use of the following kernelization result of
\cite{EickmeyerGKKPRS17} for distance-$r$ dominating sets.

\begin{lemma}[\cite{EickmeyerGKKPRS17}]\label{lem:kerds-orig}
  Let $\Cc$ be a nowhere dense class of graphs. There exists a
  function $\fker\colon \N\times \R\rightarrow\N$ and a
  polynomial-time algorithm with the following properties. Given a
  graph $G\in\Cc$, a vertex subset $A\subseteq V(G)$, positive
  integers $k,r$ and a real number $\epsilon>0$, the algorithm either
  correctly concludes that $\gamma_{r}(G,A)>k$, or finds a subset
  $Y\subseteq V(G)$ of size at most
  $\fker(r,\epsilon)\cdot k^{1+\epsilon}$ and a subset
  $B\subseteq Y\cap A$ such that
  $$\min(\gamma_{r}(G,A),k) = \min(\gamma_{r}(G[Y], B),k).$$
  Moreover, every minimum size subset of $Y$ that distance-$r$ dominates $B$ in $G[Y]$ also distance-$r$ dominates $A$ in $G$.
\end{lemma}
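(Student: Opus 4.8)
The plan is to follow the kernelization scheme developed for \textsc{Distance-$r$ Dominating Set} in~\cite{DrangeDFKLPPRVS16, EickmeyerGKKPRS17}: shrink the annotated set $A$ to a small \emph{distance-$r$ domination core} and then pad that core with bounded-size gadgets that faithfully preserve short distances. Concretely, I would first aim to produce a set $B\subseteq A$ with $|B|\le\fcore(r,\eps)\cdot k^{1+\eps}$ such that every subset of $V(G)$ of size at most $k$ that distance-$r$ dominates $B$ in $G$ also distance-$r$ dominates $A$ in $G$; this property already forces $\min(\gamma_r(G,B),k)=\min(\gamma_r(G,A),k)$, since $B\subseteq A$ gives one inequality and a minimum distance-$r$ dominating set of $B$ of size at most $k$ is, by the core property, also one of $A$. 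The graph $Y$ is then obtained from $B$ by a few applications of Lemma~\ref{lem:closure} and Lemma~\ref{lem:pathsclosure} (with radius $r$ and small error parameters), adding a bounded-size set of further vertices so that (i) all distances at most $r$ among kept vertices are preserved in $G[Y]$, and (ii) at least one minimum distance-$r$ dominating set of $B$ is realized inside $Y$. Since $\dist_{G[Y]}\ge\dist_G$, any subset of $Y$ that distance-$r$ dominates $B$ in $G[Y]$ does so already in $G$; combining this with (i) and (ii) gives $\min(\gamma_r(G[Y],B),k)=\min(\gamma_r(G,B),k)=\min(\gamma_r(G,A),k)$, and the ``moreover'' clause is precisely the core property, transported along the inclusion $G[Y]\subseteq G$.

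Before shrinking $A$, I would perform a ``reject or approximate'' step. Solve the fractional distance-$r$ domination LP on $A$; if $\gamma^\star_r(G,A)>k$ then $\gamma_r(G,A)\ge\gamma^\star_r(G,A)>k$ and we reject. Otherwise Corollary~\ref{cor:lp-relation} computes, in polynomial time, a set $D_0$ that distance-$r$ dominates $A$ with $|D_0|\le C\cdot t(r)\cdot k\ln k=\Oh(k^{1+\eps})$. From now on every vertex of $A$ lies within distance $r$ of the small set $D_0$, which will act as the scaffold on which cores are found.

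The core is produced by a loop maintaining a set $A'$, initially $A'=A$, under the invariants $A'\subseteq A$, $D_0$ still distance-$r$ dominates $A'$, and every distance-$r$ dominating set of $A'$ in $G$ of size at most $k$ distance-$r$ dominates $A$ in $G$. While $|A'|>\fcore(r,\eps)\cdot k^{1+\eps}$, we run one reduction step. Using Lemma~\ref{lem:closure} with radius $r$ and a small error parameter, compute $X'\supseteq D_0$ with $|X'|=\Oh(k^{1+\eps})$ and $|M^G_r(u,X')|=\Oh(k^{\eps})$ for every $u\notin X'$. Classify each $v\in A'\setminus X'$ by its \emph{type}: the set $M^G_r(v,X')$ together with the $A'$-avoiding distances from $v$ to the vertices of that set. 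By Lemma~\ref{lem:projection-complexity} there are at most $\fproj(r,\eps)\cdot|X'|^{1+\eps}=\Oh(k^{(1+\eps)^2})$ realized $r$-projection profiles on $X'$, hence that many types; choosing $\fcore$ large enough, some type class $S$ exceeds a threshold depending only on $r$ and $\eps$. All vertices of $S$ share a common portal set $P=M^G_r(\cdot,X')\subseteq X'$ with identical $A'$-avoiding distances to $P$. A shortest-path rerouting argument now pins down an irrelevant vertex: for a suitable $v\in S$, given any distance-$r$ dominating set $D$ of $A'\setminus\{v\}$ of size at most $k$ and any $v'\in S\setminus\{v\}$ whose nearest vertex $d\in D$ reaches $v'$ along a shortest path that meets $X'$, let $w$ be the first vertex of that path in $X'$; then $w\in P$, and $\dist_G(d,w)\le r-\rho^G_r[v',X'](w)=r-\rho^G_r[v,X'](w)$, so $\dist_G(d,v)\le\dist_G(d,w)+\dist_G(w,v)\le r$, i.e.\ $D$ also dominates $v$. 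Deleting such a $v$ from $A'$ preserves all invariants (the domination-core invariant because dominating $A'\setminus\{v\}$ then implies dominating $A'$), and the loop continues; on termination we set $B\coloneqq A'$.

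The step I expect to be the real obstacle is ensuring that the large type class $S$ actually contains an irrelevant vertex. The rerouting above only disposes of those $v'\in S$ that are distance-$r$ dominated \emph{through} $X'$; a dominator $d$ could instead reach some $v'\in S$ by a shortest path avoiding $X'$ entirely, in which case the shared projection profile onto $X'$ carries no information about whether $d$ also reaches $v$. Controlling such ``escaping'' domination paths is the heart of~\cite{EickmeyerGKKPRS17}: one must either close $X'$ with respect to a larger radius (and possibly iterate the closure, so that short escaping paths cannot exist), or use the bound $|D|\le k$ together with the large size of $S$ to derive a structural contradiction (here the polynomial uniform-quasi-wideness margin of Theorem~\ref{thm:uqw} and the small projections from Lemma~\ref{lem:closure} are the natural tools). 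A secondary but delicate point is the bookkeeping of radii and error parameters across Lemmas~\ref{lem:closure}, \ref{lem:projection-complexity}, and~\ref{lem:pathsclosure}, so that all the $\Oh(\cdot)$ factors compose into a single bound of the form $\fker(r,\eps)\cdot k^{1+\eps}$.
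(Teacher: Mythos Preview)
The paper does not give a proof of this lemma: it is quoted from~\cite{EickmeyerGKKPRS17}, and the paragraph following the statement only explains why the present formulation (the annotated version, the equality of $\min$'s rather than a mere equivalence, and the ``moreover'' clause) is already implicit in the proof given in~\cite{EickmeyerGKKPRS17}. So there is no in-paper argument to compare your proposal against; what you have written is effectively an attempted reconstruction of the cited proof.

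As such a reconstruction your outline is reasonable, and you have correctly located the crux: the rerouting through the common portal set $P$ only handles dominators whose shortest path to $S$ meets $X'$, and disposing of $X'$-avoiding domination paths is precisely where uniform quasi-wideness (Theorem~\ref{thm:uqw}) must enter, in a manner analogous to the irrelevance argument in the proof of Theorem~\ref{thm:kernel} later in this paper. One concrete gap to close: you state the core property with the qualifier ``of size at most $k$'', but the ``moreover'' clause of the lemma is unconditional --- it concerns \emph{every} subset of $Y$ that distance-$r$ dominates $B$ in $G[Y]$, with no size bound. With your restricted core property the moreover clause does not follow. You should drop the ``of size at most $k$'' from the invariant and verify that the irrelevance argument still goes through without it (the paper's remark that this unconditional property ``is the key property used to prove'' the $\min$-equality indicates that in~\cite{EickmeyerGKKPRS17} it indeed holds without a size restriction); otherwise you must supply a separate argument for the unconditional statement.
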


We remark that the statement of this result given in
\cite{EickmeyerGKKPRS17} is somewhat weaker, but the above formulation
follows readily from the proof.  First, in~\cite{EickmeyerGKKPRS17}
the result is stated for distance-$r$ dominating sets of the whole
graph $G$, i.e., for $A=V(G)$; however, throughout the whole
reasoning, the more general problem of dominating a subset of vertices
is considered and it is straightforward to lift the argument to this
setting.  Second, in~\cite{EickmeyerGKKPRS17}, only assertion
$\gamma_r(G,A)\leq k\Leftrightarrow \gamma_r(G[Y],B)\leq k$ is stated,
whereas the stronger assertion that the distance-$r$ domination
numbers are equal in case they are both bounded by $k$ follows
immediately from the proof.  Finally, the fact that every minimum size subset of~$Y$ that distance-$r$ dominates $B$ in $G[Y]$ also distance-$r$
dominates $A$ in $G$ is the key property used to prove the above.

Let us remark one more property of the set $Y$ provided by
\cref{lem:kerds-orig} that follows implicitly from the proof in
\cite{EickmeyerGKKPRS17}: adding any vertices of $G$ to $Y$ does not
change the asserted properties of $Y$.  We may hence apply
\cref{lem:pathsclosure} with parameter $2r+1$ to the set $B$ in
the graph $G$, thus obtaining its superset $B'$, and add all the
vertices of $B'$ to the set $Y$. Thus, we additionally ensure that
distances up to value $2r+1$ between elements of $B$ are preserved in
$G[Y]$.  We derive the following lemma.

\begin{lemma}\label{lem:kerds}
  Let $\Cc$ be a nowhere dense class of graphs. There exists a
  function $\fker\colon \N\times \R\rightarrow\N$ and a
  polynomial-time algorithm with the following properties. Given a
  graph $G\in\Cc$, a vertex subset $A\subseteq V(G)$, positive
  integers $k,r$ and a real number $\epsilon>0$, the algorithm either
  correctly concludes that $\gamma_{r}(G,A)>k$, or finds a subset
  $Y\subseteq V(G)$ of size at most
  $\fker(r,\epsilon)\cdot k^{1+\epsilon}$ and a subset
  $B\subseteq Y\cap A$ such that
  $\alpha_{2r+1}(G,B)=\alpha_{2r+1}(G[Y], B)$ and
  $$\min(\gamma_{r}(G,A),k) = \min(\gamma_{r}(G[Y], B),k).$$
  Moreover, every minimum size subset of $Y$ that $r$-dominates $B$ in $G[Y]$ also $r$-dominates $A$ in~$G$.
\end{lemma}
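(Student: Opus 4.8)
The plan is to derive Lemma~\ref{lem:kerds} from Lemma~\ref{lem:kerds-orig} essentially by ``repairing'' the set $Y$ so that it also preserves short distances between elements of the chosen set $B$, using Lemma~\ref{lem:pathsclosure} as the tool for this repair. First, I would run the algorithm of Lemma~\ref{lem:kerds-orig} on the given input $G, A, k, r, \eps$; if it concludes that $\gamma_r(G,A)>k$, we output the same conclusion and are done. Otherwise, it returns a set $Y_0\subseteq V(G)$ of size at most $\fker(r,\eps)\cdot k^{1+\eps}$ together with a subset $B\subseteq Y_0\cap A$ satisfying $\min(\gamma_r(G,A),k)=\min(\gamma_r(G[Y_0],B),k)$, and with the property that every subset of $Y_0$ that distance-$r$ dominates $B$ in $G[Y_0]$ also distance-$r$ dominates $A$ in $G$.

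The key idea is the remark quoted just before the lemma: adding arbitrary vertices of $G$ to $Y_0$ does not spoil any of the asserted properties. Indeed, adding vertices to $Y_0$ can only increase $G[Y_0]$, hence can only decrease $\gamma_r(G[Y_0],B)$, so combined with the always-true inequality $\gamma_r(G,A)\geq \gamma_r(G[Y_0\cup Z], B)\geq\dots$ one must check that $\min(\gamma_r(G[Y_0\cup Z],B),k)$ stays equal to $\min(\gamma_r(G,A),k)$; this is precisely what the proof in~\cite{EickmeyerGKKPRS17} yields, and the ``every dominator of $B$ in $G[Y]$ dominates $A$ in $G$'' property is monotone under enlarging $Y$ in the same way. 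So I would apply Lemma~\ref{lem:pathsclosure} with radius parameter $2r+1$ to the set $B$ inside $G$, obtaining a superset $B'\supseteq B$ with $|B'|\leq \fpaths(2r+1,\eps)\cdot |B|^{1+\eps}$ and with $\dist_{G[B']}(u,v)=\dist_G(u,v)$ whenever $\dist_G(u,v)\leq 2r+1$ for $u,v\in B$. Then set $Y\coloneqq Y_0\cup B'$.

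It remains to verify the two new requirements. For the size bound, note $|B|\leq |Y_0|\leq \fker(r,\eps)\cdot k^{1+\eps}$, so $|Y|\leq |Y_0|+|B'|\leq \fker(r,\eps)\cdot k^{1+\eps}+\fpaths(2r+1,\eps)\cdot(\fker(r,\eps)\cdot k^{1+\eps})^{1+\eps}$, which is bounded by $\fker'(r,\eps)\cdot k^{1+\eps'}$ for a suitable new function $\fker'$ and a suitable choice relating $\eps'$ to $\eps$ — here I would either absorb this into the statement by renaming, or, more cleanly, run Lemma~\ref{lem:kerds-orig} with a smaller parameter $\eps_0<\eps$ chosen so that the final exponent is at most $1+\eps$. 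For $\alpha_{2r+1}(G,B)=\alpha_{2r+1}(G[Y],B)$: since $B'\subseteq Y$, distances up to $2r+1$ between elements of $B$ are already exactly preserved in $G[B']$, hence also in the larger graph $G[Y]$ (adding vertices can only shorten distances, but a path of length $\leq 2r+1$ already exists in $G[B']\subseteq G[Y]$, and no distance in an induced subgraph can be smaller than in $G$). Therefore a subset of $B$ is $(2r+1)$-independent in $G$ if and only if it is $(2r+1)$-independent in $G[Y]$, giving equality of the independence numbers restricted to $B$. The remaining assertions — the $\min$-equality for $\gamma_r$ and the ``dominator of $B$ in $G[Y]$ dominates $A$ in $G$'' property — are inherited verbatim from Lemma~\ref{lem:kerds-orig} together with the monotonicity-under-enlargement remark. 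The only mildly delicate point, and the one I would be most careful about, is exactly this monotonicity claim: one must be sure the proof of Lemma~\ref{lem:kerds-orig} in~\cite{EickmeyerGKKPRS17} really does survive the addition of arbitrary vertices to $Y$, rather than just vertices from some controlled region; but as noted in the text this follows implicitly from that proof, so I would simply cite it.
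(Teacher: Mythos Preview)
Your proposal is correct and follows essentially the same route as the paper: apply Lemma~\ref{lem:kerds-orig}, then invoke Lemma~\ref{lem:pathsclosure} with parameter $2r+1$ on $B$ to obtain $B'$, set $Y\coloneqq Y_0\cup B'$, and appeal to the monotonicity-under-enlargement remark to retain the domination properties while the distance preservation from Lemma~\ref{lem:pathsclosure} yields $\alpha_{2r+1}(G,B)=\alpha_{2r+1}(G[Y],B)$. You are in fact more careful than the paper about the exponent blow-up in the size bound (suggesting to run Lemma~\ref{lem:kerds-orig} with a smaller $\eps_0$), which the paper leaves implicit.
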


\pagebreak
We now apply Dvo\v{r}\'ak's duality theorem~\cite{Dvorak13} for
bounded expansion classes, which is in fact stated in terms of a
parameter called the \emph{weak-$r$ coloring number}, denoted
$\wcol_r(\cdot)$. 
We refer to~\cite{Dvorak13} for a formal definition
of this parameter and only note the following lemma, which
characterizes nowhere dense classes in terms of weak coloring numbers.

\begin{lemma}[Zhu~\cite{zhu2009colouring}]\label{lem:zhu-wcol}
  A class of graphs $\Cc$ is nowhere dense if and only if there exists
  a function $\fwcol\colon \N\times\R\rightarrow \N$ such that for all
  $r\in\N$ and all $\epsilon>0$ and all $n$-vertex graphs
  $H\subseteq G$ for $G\in\Cc$ we have
  $\wcol_r(H)\leq \fwcol(r,\epsilon)\cdot n^\epsilon$. 
\end{lemma}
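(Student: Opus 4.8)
The plan is to prove the two implications separately, and to route the hard direction through the more tractable \emph{$r$-admissibility} $\adm_r$. Recall that, for a linear order $\wcolorder$ of $V(H)$, a vertex $u$ is weakly $r$-reachable from $v$ if $u\wcolorder v$ and some path of length at most $r$ from $v$ to $u$ has all of its vertices $\wcolorder$-above $u$ (inclusively); $\WReach_r[H,\wcolorder,v]$ collects these, and $\wcol_r(H)=\min_{\wcolorder}\max_v\abs{\WReach_r[H,\wcolorder,v]}$. I would first record the standard relation~\cite{zhu2009colouring,sparsity} that $\wcol_r(H)$ is bounded by a polynomial in $\adm_r(H)$ whose degree depends only on $r$ (obtained by iterating an admissibility order $r$ times). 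Consequently, for the forward implication it suffices to show that for every $\delta>0$ there is $g(r,\delta)$ with $\adm_r(H)\le g(r,\delta)\cdot n^{\delta}$ for every $n$-vertex $H\subseteq G\in\Cc$: then $\wcol_r(H)\le g(r,\delta)^{O(r)}\cdot n^{O(r)\delta}$, and choosing $\delta$ proportional to $\eps/r$ produces a suitable $\fwcol(r,\eps)$.

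For the forward direction, assume $\Cc$ is nowhere dense and invoke two facts from the sparsity toolbox~\cite{sparsity}. First, nowhere denseness is equivalent to the grad being \emph{subpolynomial}: for all $s\in\N$, $\delta>0$ there is $c(s,\delta)$ with $\nabla_s(G)\le c(s,\delta)\cdot\abs{V(G)}^{\delta}$ for every $G\in\Cc$; the nontrivial inequality is that a class whose depth-$s$ minors have super-polynomial edge density contains $K_t$ as a depth-$O(s)$ minor for every $t$, contradicting the definition of nowhere denseness at radius $O(s)$. Second, for every graph $G$ and every $r$, $\adm_r(G)$ is at most a polynomial (indeed linear) function of $\nabla_{O(r)}(G)$: an ordering witnessing small $r$-admissibility is built greedily, at each step removing a vertex from which only few internally disjoint paths of length $\le r$ reach the already-removed vertices, and such a vertex exists whenever the shallow minors of the current graph are sparse. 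Since $\nabla_s$ does not increase under taking subgraphs, combining the two facts gives, for every $n$-vertex $H\subseteq G\in\Cc$, $\adm_r(H)\le O(r)\cdot\nabla_{O(r)}(H)\le O(r)\cdot c(O(r),\delta)\cdot n^{\delta}$, and the reduction above concludes this direction.

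For the backward direction, suppose a function $\fwcol$ as in the statement exists, and suppose for contradiction that $\Cc$ is not nowhere dense; then neither is its closure under subgraphs, so by the theorem of Ne\v{s}et\v{r}il and Ossona de Mendez~\cite{sparsity} there is a fixed $p$ such that for every $t\in\N$ some subdivision $H_t$ of $K_t$, in which each edge is replaced by a path with at most $p$ edges, occurs as a subgraph of a member of $\Cc$; such $H_t$ has $\abs{V(H_t)}\le O(p\,t^2)$. Fix any linear order $\wcolorder$ of $V(H_t)$, write $v_1,\dots,v_t$ for the principal vertices and $P_{ij}$ for the internally disjoint path joining $v_i$ and $v_j$. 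If for some $i$ the vertex $v_i$ is not the $\wcolorder$-smallest vertex of any $P_{ij}$ with $j\ne i$, then for each $j\ne i$ the $\wcolorder$-smallest vertex $w_j$ of $P_{ij}$ differs from $v_i$, the $w_j$ are pairwise distinct (the paths $P_{ij}$ share only $v_i$), and the subpath of $P_{ij}$ from $v_i$ to $w_j$ has length $\le p$ with every vertex $\wcolorder$-above $w_j$; hence $w_j\in\WReach_p[H_t,\wcolorder,v_i]$ and $\abs{\WReach_p[H_t,\wcolorder,v_i]}\ge t-1$. Otherwise every $i$ admits $j(i)\ne i$ with $v_i$ the $\wcolorder$-smallest vertex of $P_{i,j(i)}$, hence $v_i\wcolorder v_{j(i)}$ and $v_i\neq v_{j(i)}$; iterating $i\mapsto j(i)$ yields a strictly $\wcolorder$-increasing sequence among the $t$ principal vertices that must repeat one, a contradiction. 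Thus $\wcol_p(H_t)\ge t-1\ge\Omega(\sqrt{\abs{V(H_t)}/p})$, which for $\eps=1/3$ contradicts $\wcol_p(H_t)\le\fwcol(p,1/3)\cdot\abs{V(H_t)}^{1/3}$ once $t$ is large. Hence $\Cc$ is nowhere dense.

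The substantive step is the second fact used in the forward direction: that subpolynomial grad forces vertex orderings of subpolynomial $r$-admissibility. This is essentially Zhu's argument~\cite{zhu2009colouring} and is the only place where real work is required; the colouring-number reduction, the subpolynomial-grad characterisation of nowhere denseness, and the entire backward direction are routine applications of the definitions together with the elementary ordering argument above.
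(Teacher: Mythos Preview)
The paper does not prove this lemma: it is stated with attribution to Zhu~\cite{zhu2009colouring} and used as a black box in the proof of Theorem~\ref{thm:duality}. There is therefore no paper proof to compare your proposal against.

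That said, your outline is a reasonable sketch of the standard argument from the sparsity literature. The backward direction is clean and correct: the ordering argument on a $\le p$-subdivision of $K_t$ indeed forces $\wcol_p(H_t)\ge t-1$, and the pigeonhole/strict-increase contradiction in Case~2 is fine (in fact it suffices to look at the $\wcolorder$-maximum principal vertex). For the forward direction your strategy---route through $\adm_r$ and the subpolynomial-grad characterisation of nowhere denseness---is the right one, but two statements deserve care. First, the bound of $\adm_r$ in terms of $\nabla_{O(r)}$ is polynomial rather than linear in general (Dvo\v{r}\'ak's bound is cubic in $\nabla_{r-1}$); this does not affect the argument, since any fixed polynomial degree is absorbed by adjusting~$\delta$. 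Second, the characterisation you invoke must be the hereditary one: for all $s,\delta$ there is $c(s,\delta)$ with $\nabla_s(H)\le c(s,\delta)\,\abs{V(H)}^{\delta}$ for every $H$ that is a subgraph of some $G\in\Cc$. Your phrasing only asserts this for $G\in\Cc$ and then appeals to monotonicity of $\nabla_s$ under subgraphs, but that gives a bound in $\abs{V(G)}^{\delta}$, not $\abs{V(H)}^{\delta}$; you need the stronger (and standard) hereditary form, which is what actually holds for nowhere dense classes.
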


The key ingredient of our proof is the following result of
Dvo\v{r}\'ak~\cite{Dvorak13}, which relates domination and
independence numbers in graphs with bounded weak coloring numbers.

\begin{lemma}[Dvo\v{r}\'ak~\cite{Dvorak13}, see also~\cite{DrangeDFKLPPRVS16}]\label{lem:dv-wcol}
For every graph $G$ and vertex subset $A\subseteq V(G)$,
\[\gamma_r(G,A)\leq \wcol_{2r+1}(G)^2 \cdot \alpha_{2r+1}(G,A).\]
\end{lemma}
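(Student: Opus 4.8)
The plan is to fix a linear order $\sigma$ of $V(G)$ witnessing $\wcol_{2r+1}(G)=c$, that is, with $|\WReach_{2r+1}[G,\sigma,v]|\le c$ for every $v\in V(G)$, and to construct simultaneously a distance-$r$ dominating set $D$ of $A$ together with a distance-$(2r+1)$ independent subset $I\subseteq A$ satisfying $|D|\le c^2\cdot |I|$. Since any distance-$(2r+1)$ independent subset of $A$ has size at most $\alpha_{2r+1}(G,A)$, this immediately yields $\gamma_r(G,A)\le |D|\le c^2\cdot\alpha_{2r+1}(G,A)=\wcol_{2r+1}(G)^2\cdot\alpha_{2r+1}(G,A)$. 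This is a purely combinatorial statement (the quantity $c$ is a graph invariant), so no algorithmic bookkeeping is needed; we follow the argument of Dvo\v{r}\'ak~\cite{Dvorak13}, see also~\cite{DrangeDFKLPPRVS16}.

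First I would record the standard ``meeting point'' fact: if $\dist_G(x,y)\le 2r+1$ then $\WReach_{2r+1}[G,\sigma,x]\cap\WReach_{2r+1}[G,\sigma,y]\neq\emptyset$; indeed, the $\sigma$-minimum vertex $w$ on a shortest $x$--$y$ path $P$ lies on $P$, is $\sigma$-minimal on each of the two subpaths of $P$ it splits into, and each subpath has length at most $2r+1$, so $w$ belongs to both weakly reachable sets, and moreover $\dist_G(x,w)+\dist_G(w,y)=\dist_G(x,y)\le 2r+1$ because $P$ is shortest. Next I would build $I$ greedily: repeatedly add to $I$ any vertex of $A$ at distance more than $2r+1$ from all previously added ones, until no such vertex remains. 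The resulting $I$ is a maximal distance-$(2r+1)$ independent subset of $A$, hence $|I|\le\alpha_{2r+1}(G,A)$, and by maximality every $u\in A$ satisfies $\dist_G(u,v)\le 2r+1$ for some $v\in I$. It therefore suffices to show that for each $v\in I$ the ball $B_v:=\{u\in A:\dist_G(u,v)\le 2r+1\}$ can be distance-$r$ dominated by at most $c^2$ vertices, and then set $D:=\bigcup_{v\in I}D_v$.

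For the ball sub-problem, the dominating set $D_v$ is assembled in two layers, mirroring Dvo\v{r}\'ak's argument. Take $u\in B_v$. If $\dist_G(u,v)\le r$, then $v$ itself dominates $u$. Otherwise, let $w$ be the $\sigma$-minimum on a suitably chosen shortest $v$--$u$ path; by the meeting-point fact $w\in\WReach_{2r+1}[G,\sigma,v]$, so there are at most $c$ possible values of $w$, and $\dist_G(v,w)+\dist_G(w,u)\le 2r+1$. If $\dist_G(w,u)\le r$ then $w$ dominates $u$, and we place all (at most $c$) such first-layer witnesses into $D_v$. If instead $\dist_G(w,u)\ge r+1$, then $\dist_G(v,w)\le r$, i.e.\ $w$ is itself $r$-dominated by $v$; we then ``descend'' and repeat the analysis for the pair $(w,u)$, obtaining a vertex $w'$ that is the $\sigma$-minimum of a shortest $w$--$u$ path, hence $w'\in\WReach_{2r+1}[G,\sigma,w]$, giving at most $c$ choices for each of the at most $c$ choices of $w$, i.e.\ at most $c^2$ second-layer witnesses in total; again either $w'$ already dominates $u$, or we continue. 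The intended conclusion is that all dominators can be taken from the set $\WReach_{2r+1}[G,\sigma,v]\cup\bigcup_{w\in\WReach_{2r+1}[G,\sigma,v]}\WReach_{2r+1}[G,\sigma,w]$, whose size is at most $c^2$.

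The main obstacle is precisely the correctness of this descent: one must choose the shortest paths and order the descent so that each step passes to a strictly $\sigma$-smaller ``center'' whose weakly reachable set still captures the needed witnesses, and then argue, using only $|\WReach_{2r+1}[G,\sigma,\cdot]|\le c$, that two layers already suffice to $r$-dominate all of $B_v$. This bookkeeping is the technical heart of Dvo\v{r}\'ak's theorem, and for a fully rigorous treatment I would invoke the argument of~\cite{Dvorak13}; an essentially equivalent reformulation is to prove the ``local'' statement $\gamma_r(G,B_v)\le c^2$ directly, or to prove $\gamma_r(G,A)\le c^2\cdot\gamma_{2r+1}(G,A)$ and then use $\gamma_{2r+1}(G,A)\le\alpha_{2r+1}(G,A)$, which follows since a maximal distance-$(2r+1)$ independent subset of $A$ is a distance-$(2r+1)$ dominating set of $A$.
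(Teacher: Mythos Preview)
The paper does not prove this lemma; it is quoted as a black-box result from Dvo\v{r}\'ak~\cite{Dvorak13} and~\cite{DrangeDFKLPPRVS16}, so there is no in-paper argument to compare against. Your proposal therefore goes beyond what the paper itself attempts.

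That said, the sketch has a genuine gap exactly where you flag it, and in fact the ``two-layer'' claim is false as stated. Take $G$ to be the path $v=p_0p_1\cdots p_{2r+1}=u$ with $\sigma$ the left-to-right order. Then $\WReach_{2r+1}[G,\sigma,p_0]=\{p_0\}$, hence the set
\[
\WReach_{2r+1}[G,\sigma,v]\ \cup\ \bigcup_{w\in\WReach_{2r+1}[G,\sigma,v]}\WReach_{2r+1}[G,\sigma,w]
\]
equals $\{v\}$, which does not $r$-dominate $u\in B_v$. Your descent also stalls here: the $\sigma$-minimum on the unique shortest $v$--$u$ path is $v$ itself, so $w=v$ and no progress is made. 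More generally, weak reachability points \emph{backwards} in $\sigma$; iterating $\WReach_{2r+1}$ from a fixed center $v$ can only ever reach vertices $\le_\sigma v$, so if $v$ happens to be $\sigma$-minimal in its own $(2r+1)$-ball, no number of layers produces a dominator for the far end of that ball. The alternative reformulation $\gamma_r(G,A)\le c^2\cdot\gamma_{2r+1}(G,A)$ runs into the identical local obstruction.

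Dvo\v{r}\'ak's actual argument is organized differently: rather than fixing $I$ first and then trying to $r$-dominate each ball from its center, one builds $D$ and $I$ simultaneously by a greedy sweep, and the factor $c^2$ arises from a charging argument over weak-reachability pairs, not from iterating $\WReach_{2r+1}[\,\cdot\,]$ outward from a single anchor. Since the paper treats the lemma as a citation, the cleanest fix is to do the same and drop the incomplete sketch.
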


We are ready to give the proof of \cref{thm:duality}.

\begin{proof}[Proof of \cref{thm:duality}]
  Given $G\in\Cc$ and $A\subseteq V(G)$, let
  $\gamma\coloneqq \gamma_r(G,A)$. According to 
\cref{cor:lp-relation} we can approximate a distance-$r$ dominating
set of $A$ of size bounded by $k\coloneqq C\cdot t(r)\cdot \gamma_r^\star(G,A)\cdot \ln
  \gamma_r^\star(G,A)$ in polynomial time, where $C$ is the 
  constant and $t$ is 
  the function from the corollary.

  We define $\delta>0$ as a constant
  depending on $\epsilon$, which will be determined at the end of the
  proof. Apply \cref{lem:kerds} to find a subset
  $Y\subseteq V(G)$ of size at most
  $\fker(r,\delta)\cdot k^{1+\delta}$ and a subset
  $B\subseteq Y\cap A$ such that $\gamma_{r}(G[Y], B)=\gamma$ and
  $\alpha_{2r+1}(G,B)=\alpha_{2r+1}(G[Y], B)$. As $G[Y]$ is a subgraph
  of $G$, according to \cref{lem:zhu-wcol} we have
 \begin{align*}
   \wcol_{2r+1}& (G[Y])\leq \\ &\fwcol(2r+1,\delta)\cdot (\fker(r,\delta)\cdot k^{1+\delta})^\delta\leq \fwcol(2r+1,\delta)\cdot \fker(r,\delta)^\delta\cdot k^{2\delta}.
 \end{align*} By \cref{lem:dv-wcol}, we have 
 \begin{align*}
   \gamma = \gamma_r(G,A) & = \gamma_r(G[Y],B)\leq \wcol_{2r+1}(G[Y])^2\cdot \alpha_{2r+1}(G[Y],B)\\
                    & = \wcol_{2r+1}(G[Y])^2\cdot \alpha_{2r+1}(G,B)\leq \wcol_{2r+1}(G[Y])^2\cdot 
                      \alpha_{2r+1}(G,A)\\
                    & \leq (\fwcol(2r+1,\delta)\cdot\fker(r,\delta)^\delta\cdot k^{2\delta})^2\cdot 
                      \alpha_{2r+1}(G,A).
 \end{align*}
Hence
\begin{align*}
  \gamma^{1-4\delta} & \leq \fwcol(2r+1,\delta)^2\cdot\fker(r,\delta)^{2\delta}\cdot 
                  \alpha_{2r+1}(G,A),
\end{align*}
which means that
\begin{align*}
  \gamma & \leq \big(\fwcol(2r+1,\delta)^2\cdot\fker(r,\delta)^{2\delta}\cdot 
      \alpha_{2r+1}(G,A)\big)^{1/(1-4\delta)}\\
    & \leq \big(\fwcol(2r+1,\delta)^2\cdot\fker(r,\delta)^{2\delta}\cdot 
      \alpha_{2r+1}(G,A)\big)^{1+8\delta}.
\end{align*}

\pagebreak
Hence the computed distance-$r$ dominating set has size 
$k\leq C\cdot t(r)\cdot \gamma_r^\star(G,A)\cdot \ln
  \gamma_r^\star(G,A)\leq C\cdot t(r)\cdot \gamma \cdot \ln
  \gamma\leq C\cdot t(r)\cdot \frac{1}{\delta}\cdot \gamma^{1+\delta}$, as $\ln \gamma\leq \frac{1}{\delta}\cdot\gamma^{\delta}$. For 
  $0<\delta<1$ we have $(1+8\delta)(1+\delta)\leq 1+17\delta$ 
  and we can conclude by setting $\delta=\epsilon/17$ and
$\fdual(r,\epsilon)=\big(C\cdot t(r)\cdot \frac{1}{\delta}\cdot \fwcol(2r+1,\delta)^2\cdot
\fker(r,\delta)^{2\delta}\big)^{1+17\delta}$.
%
\end{proof}

We want to stress again that the sets $B$ and $Y$ 
in the construction of \cref{lem:kerds} when applied with 
$k\geq \gamma_{r}(G,A)$ satisfy $\gamma_{r}(G,A) = 
\gamma_{r}(G[Y], B)$. Moreover, every minimum size 
subset of $Y$ that $r$-dominates $B$ in $G[Y]$ also 
$r$-dominates $A$ in~$G$. In the proof of \cref{thm:duality}
we used the algorithm of \cref{cor:lp-relation} to find a 
distance-$r$ dominating set of $A$ in $G$, as approximations
to distance-$r$ dominating sets of $B$ in $G[Y]$ can 
in general not be lifted to distance-$r$ dominating sets 
of $A$ in $G$.
\section{Kernelization}\label{sec:kernel}

We now come to the proof of \cref{thm:kernel}, which we repeat
for convenience.

\kernel*

For the rest of this section let us fix a nowhere dense class $\Cc$,
$k,r\in \N$ and $\epsilon>0$, as well as a graph $G\in\Cc$ and
$A\subseteq V(G)$. We will carry out the construction with a real
$\delta>0$ that will be defined as a function of $\epsilon$ at the end
of the proof. Let $d\coloneqq \left\lfloor r/2\right\rfloor$.

\smallskip Using the algorithm of \cref{thm:duality}, we first
compute an approximate \mbox{distance-$d$} dominating set $D$ for $A$.
We have $|D|\leq \fdual(d,\delta)\cdot \alpha_r(G,A)^{1+\delta}$,
hence if $|D|>\fdual(d,\delta)\cdot k^{1+\delta}$, then we may
immediately conclude that $\alpha_r(G,A)>k$ and terminate the
algorithm.  Therefore, from now on we may assume that
$|D|\leq \fdual(d,\delta)\cdot k^{1+\delta}$.

\smallskip Now, using the algorithm of \cref{lem:closure} (with
parameters $2r$ and $\delta$) we compute a vertex subset $Z$ with the
following properties:
\begin{itemize}
\item $D\subseteq Z$, in particular, $Z$ is a distance-$d$ dominating
  set of $A$;
\item $|Z|\leq \fcl(2r,\delta)\cdot |D|^{1+\delta}$; and
\item $|M_{2r}^G(u,Z)|\leq \fcl(2r,\delta)\cdot |D|^{\delta}$ for each
  $u\in V(G)\setminus Z$.
\end{itemize}
Here, $\fcl$ is the function provided by \cref{lem:closure} for
the class $\Cc$.

We now classify the elements of $A\setminus Z$ with respect to their
$2r$-projections profiles onto $Z$. More precisely, we define the
following equivalence relation:
\[u\sim_Z v \quad \Longleftrightarrow \quad
\rho_{2r}^G[u,Z]=\rho_{2r}^G[v,Z].\]

According to \cref{lem:projection-complexity} (applied with
parameters $2r$ and $\delta$), there exists a function~$\fproj$ such
that this equivalence relation has at most
$\fproj(2r,\delta)\cdot |Z|^{1+\delta}$ equivalence classes. Observe
that the empty projection profile (i.e., one that maps all of $Z$ to
$\infty$) is not realized, as $Z$ is a distance-$d$ dominating
set of~$A$.

Examine the sizes of equivalence classes of $\sim_Z$ and suppose that some class~$K$ of $\sim_Z$ has more than 
$$N(4r,\fcl(2r,\delta)\cdot |D|^{\delta}+(r+1)^{s(4r)}\cdot (s(4r)+1)+1)$$ elements, 
where $N$ and $s$ are the functions from \cref{def:uqw},
characterizing $\Cc$ as uniformly quasi-wide; note here that for fixed
$r$, $s(4r)$ is a fixed constant and $N(4r,\cdot)$ is a fixed
polynomial, both hard-codable in the algorithm, so the above number
can be computed.  Then we proceed as follows. In the following, we
simply write $s$ for $s(4r)$.  We apply the algorithm of
\cref{thm:uqw} to find a subset $S\subseteq V(G)$ of size at
most $s$ and a set $L\subseteq K\setminus S$ of size at least
$\fcl(2r,\delta)\cdot |D|^{\delta}+(r+1)^{s}\cdot (s+1)+1$ that is
$4r$-independent in $G-S$.

\begin{claim}
  There are at most $\fcl(2r,\delta)\cdot |D|^{\delta}$ elements in
  $L$ that are at distance at most $2r$ from $Z$ in the graph $G-S$.
\end{claim}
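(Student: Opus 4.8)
The plan is to combine the defining property of the equivalence class $K$ with the $4r$-independence of $L$ in $G-S$. Since all vertices of $K$ share the same $2r$-projection profile onto $Z$, the projection $M_{2r}^G(v,Z)$ is one and the same set, call it $M$, for every $v\in K$; and because $L\subseteq K\subseteq A\setminus Z$, every $v\in L$ lies outside $Z$, so the third property of $Z$ obtained from \Cref{lem:closure} gives $|M|=|M_{2r}^G(v,Z)|\leq\fcl(2r,\delta)\cdot|D|^{\delta}$. The claim will then follow once we exhibit an injection from $\{v\in L\colon \dist_{G-S}(v,Z)\leq 2r\}$ into $M$.

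First I would fix such a vertex $v$, take a shortest $v$-to-$Z$ path in $G-S$, and truncate it at its first vertex $z_v\in Z$. Since $v\notin Z$, this truncated path is $Z$-avoiding, has length at most $2r$, and -- crucially -- uses no vertex of $S$, so it is at the same time a $Z$-avoiding path of length at most $2r$ in $G$. Hence $z_v\in M_{2r}^G(v,Z)=M$.

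Next I would show that the assignment $v\mapsto z_v$ is injective on $\{v\in L\colon \dist_{G-S}(v,Z)\leq 2r\}$: if $z_v=z_{v'}$ for two distinct such vertices $v,v'$, then concatenating their truncated paths yields a walk from $v$ to $v'$ in $G-S$ of length at most $4r$, so $\dist_{G-S}(v,v')\leq 4r$, contradicting the fact that $L$ is $4r$-independent in $G-S$. Consequently $|\{v\in L\colon \dist_{G-S}(v,Z)\leq 2r\}|\leq |M|\leq\fcl(2r,\delta)\cdot|D|^{\delta}$, which is the claim.

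No step is a genuine obstacle; the only points requiring care are that the truncated path retains its ``$Z$-avoiding, length at most $2r$'' status when passing from $G-S$ to $G$ (this is immediate since it is literally the same path, avoiding $S$), and that the injection lands inside the common projection $M$ -- rather than inside all of $Z$ -- which is precisely what makes the bound $\fcl(2r,\delta)\cdot|D|^{\delta}$ come out instead of the useless bound $|Z|$, and is the reason $K$ was defined as a single equivalence class of $\sim_Z$ in the first place.
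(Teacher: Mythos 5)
Your proof is correct and takes essentially the same approach as the paper's: both rely on noting that since $L$ lies in a single equivalence class $K$ of $\sim_Z$, all its members have the same $2r$-projection $M$ onto $Z$ (whose size is controlled by the closure property), and then observing that $4r$-independence of $L$ in $G-S$ forces distinct members of $L$ to reach $Z$ through distinct vertices of $M$. The paper simply states the pigeonhole/injectivity step ("no two elements of $L$ can be connected\ldots to the same element of $Z$") rather than spelling out the injection $v\mapsto z_v$ as you do, but the argument is identical in substance.
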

\begin{clproof}
  No two elements of $L$ can be connected by a path of length at
  most~$2r$ in $G-S$ to the same element of $Z$, as by assumption, the
  elements of $L$ are $4r$-independent in $G-S$.  However, every
  element of $Z$ that is at distance at most~$2r$ in $G-S$ from some
  element of $L$ must belong to the common $2r$-projection of vertices
  of $L$ onto $Z$.  This projection has size at most
  $\fcl(2r,\delta)\cdot |D|^{\delta}$, so there can be at most this
  many vertices in $L$ that are at distance at most $2r$ from $Z$ in
  $G-S$.  \clqed\end{clproof}

Hence, at least $(r+1)^{s}\cdot (s+1)+1$ elements of $L$ cannot be
connected to~$Z$ by a path of length at most~$2r$ in $G-S$.  We
classify these elements with respect to their $r$-projections onto
$S$, that is, we define the following equivalence relation:
\[u\sim_S v \quad \Longleftrightarrow \quad \rho_{r}^G[u,S]=\rho_{r}^G[v,S].\]

As $S$ has size at most $s$ and $\rho_r^G[u,S]$ is a function mapping
from $S$ to $\{1,\ldots, r,\infty\}$, this  relation has at
most $(r+1)^{s}$ equivalence classes. Hence, there exists at least one
equivalence class $L'$ of $\sim_S$ that has at least $s+2$ elements.

\begin{claim}
  Every element of $L'$ is irrelevant, i.e., for every $a\in L'$ there
  exists a distance-$r$ independent subset of $A$ of size $k$ if and
  only if there exists a distance-$r$ independent subset of
  $A\setminus\{a\}$ of size $k$
\end{claim}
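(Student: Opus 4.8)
The plan is to show that the large set $L'$ of elements with identical $r$-projection onto $S$ and no short connection to $Z$ in $G - S$ behaves so uniformly that any one of them can be swapped out of a distance-$r$ independent set. Since $|L'| \geq s+2$, the key observation will be that removing $S$ only ``uses up'' at most $|S| \leq s$ vertices of any independent set, so there is always room to maneuver among the remaining $\geq 2$ elements of $L'$. First, I would fix a distance-$r$ independent subset $I \subseteq A$ with $|I| = k$ and an arbitrary $a \in L'$, and distinguish two cases. If $a \notin I$, there is nothing to do. So assume $a \in I$; the goal is to find $a' \in L' \setminus \{a\}$ such that $I' := (I \setminus \{a\}) \cup \{a'\}$ is still distance-$r$ independent in $G$.

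The main step is choosing $a'$ correctly. I would argue that $L' \setminus I$ contains a ``good'' candidate as follows. Every vertex $v \in I \setminus \{a\}$ can be within distance $r$ (in $G$) of at most one element of $L'$ through a path that avoids $S$ — indeed, since the elements of $L'$ are $4r$-independent in $G - S$, two of them cannot both be within distance $r$ of a common vertex via $S$-avoiding paths, as that would give a path of length $\le 2r \le 4r$ between them in $G - S$. So the only way a single $v$ could threaten two distinct elements $a_1, a_2 \in L'$ is via paths of length $\le r$ that pass through $S$. Here is where $|L'| \ge s+2$ and the shared $r$-projection profile onto $S$ come in: if $v$ reaches some $a_1 \in L'$ by a length-$\le r$ path through a vertex $w \in S$, then since all elements of $L'$ have the same $r$-projection profile on $S$, the distance from $w$ to every element of $L'$ is the same (and at most $r$), so $v$ is within distance $r$ of \emph{all} of $L'$ whenever it is within distance $r$ of one via $S$. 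Thus the ``bad'' vertices of $I$ — those within distance $r$ of any element of $L'$ — either (a) reach at most one element of $L'$, via an $S$-avoiding path, or (b) reach all of $L'$, necessarily through $S$, in which case by a pigeonhole/counting argument there can be at most $|S| \le s$ such vertices (each such vertex, together with the appropriate vertex of $S$ on its witnessing path and the fact that $a$ itself is at distance $\le r$ from that $S$-vertex, can be charged injectively to an element of $S$ — one needs to be a little careful, but the bound is $s$). Excluding $a$ itself and these at most $s$ ``global'' threats, at most $s$ further elements of $L'$ are individually blocked by type-(a) vertices, for a total of at most $1 + s$ forbidden elements; since $|L'| \ge s + 2$, there remains at least one $a' \in L'$ that is not within distance $r$ (in $G$) of any vertex of $(I \setminus \{a\})$ and is distinct from $a$.

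Having chosen such an $a'$, I would verify that $I' = (I \setminus \{a\}) \cup \{a'\}$ is distance-$r$ independent: pairs not involving $a'$ inherit independence from $I$; pairs $\{a', v\}$ with $v \in I \setminus \{a\}$ are fine by the choice of $a'$. Then $|I'| = k$ and $I' \subseteq A \setminus \{a\}$ (since $a' \ne a$ and $a' \in L' \subseteq A$), giving the forward direction; the backward direction is immediate as $A \setminus \{a\} \subseteq A$. The main obstacle I anticipate is getting the counting in case (b) exactly right — in particular, pinning down why the number of vertices of $I$ that are globally within distance $r$ of all of $L'$ (via $S$) is bounded by a quantity strictly smaller than $|L'| - 1$, and handling the interaction between such ``global'' threats and the element $a$; this likely requires the precise bound $|L'| \ge s+2$ and a careful injective charging of these threats to the $\le s$ vertices of $S$, using that distances from $S$-vertices to elements of $L'$ are governed entirely by the common $r$-projection profile $\rho_r^G[\cdot, S]$.
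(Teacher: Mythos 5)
Your setup and the decomposition into two types of threats is sensible, and several of your observations match the paper: each type-(a) threat (reaching $L'$ via an $S$-avoiding path) can block at most one element of $L'$, because $L'$ is $4r$-independent in $G-S$; and a type-(b) threat (reaching some element of $L'$ through $S$) reaches \emph{all} of $L'$ because of the shared projection profile $\rho_r^G[\cdot,S]$. However, there is a genuine gap in your counting.

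First, your treatment of type-(b) is unnecessarily convoluted and should be clean: a type-(b) threat $v$ is within distance $r$ of all of $L'$, in particular of $a$ itself. Since $a\in I$ and $v\in I\setminus\{a\}$, this immediately contradicts that $I$ is distance-$r$ independent. So type-(b) threats simply do not exist; there is no need for a charging argument bounding them by $s$. This is precisely how the paper disposes of this case.

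Second, and this is the real gap: you never justify the claim that at most $s$ elements of $L'$ are blocked by type-(a) threats. A priori, $I$ could contain many distinct type-(a) vertices, each via an $S$-avoiding path to a different element of $L'$, blocking arbitrarily many of them. The paper's key geometric ingredient, which your proposal does not mention at all, is the role of the set $Z$: elements of $L'$ were specifically chosen to be at distance $>2r$ from $Z$ in $G-S$, while every blocking vertex $b$ (being in $A$) is within distance $d$ of some $z\in Z$. Composing the $S$-avoiding $a'$-to-$b$ path of length $\le r$ with the $b$-to-$z$ path of length $\le d$ gives a walk of length $\le r+d\le 2r$ from $a'$ to $Z$; since this walk must meet $S$ and its first part avoids $S$, the vertex of $S$ lies on the $b$-to-$z$ segment, so $b$ is within distance $d$ of $S$. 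Then the distinct type-(a) blocking vertices (there are as many as blocked $L'$-elements) pigeonhole over $|S|\le s$, forcing two of them within distance $2d\le r$ of each other, contradicting the independence of $I$. Without this $Z$-based argument your bound of $s$ is an unproved assertion, and the proof does not close.
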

\begin{clproof}
  Let $I\subseteq A$ be a distance-$r$ independent set that contains
  an element $a\in L'$. We show that we can replace $a$ with some
  other element $a'\in L'$ to obtain a distance-$r$ independent set
  $I'$ of the same size.

  For this, we show that if we cannot find such an element $a'\in L'$,
  then $I$ was not a distance-$r$ independent set.  So assume that the
  $r$-neighborhood (in $G$) of every $a'\in L'\setminus\{a\}$ contains
  an element $b\in I\setminus\{a\}$ that prevents choosing~$a'$ into the
  distance-$r$ independent set.  For each $a'\in L'$ fix such an
  element $b(a')$ of $I\setminus\{a\}$.

  First assume that some path of length at most $r$ between $a'$ and
  $b(a')$ contains an element of $S$ and let $t$ be the one closest to
  $a'$. As $a\sim_S a'$, the vertex $a$ is at the same distance from
  $t$ as $a'$, which implies that $a$ is at distance at most~$r$
  from~$b(a')$. However, by assumption $a\in I$ and 
  $b(a')\in I\setminus\{a\}$,
  which implies that~$I$ is not a distance-$r$ independent set. Hence
  in the following assume that no path of length at most $r$ between
  $a'$ and $b(a')$ contains an element of $S$, for any
  $a'\in L'\setminus \{a\}$. Note that this implies $b(a')\neq b(a'')$
  for $a'\neq a''$, as the elements of $L'$ are $4r$-independent in 
  $G-S$. 

  We show that this assumption implies that every element
  $b(a')\in I\setminus\{a\}$, for $a'\in L'\setminus \{a\}$, can be connected in $G$
  by a path of length at most $d$ to a vertex of~$S$. To see this,
  recall that every element of $L'$ can be connected in $G$ by a path
  of length at most $d$ to some $z\in Z$, as $Z$ is a distance-$d$
  dominating set of~$A$ in~$G$.  As $a'$ and $b(a')$ are at distance
  at most $r$, the distance between~$a'$ and $z$ is at most
  $r+d\leq 2r$.  However, the elements of $L'$ are at distance more
  than~$2r$ from $Z$ in $G-S$. Hence, as we assumed that no path of
  length at most~$r$ between $a'$ and $b(a')$ contains an element of
  $S$, $b(a')$ must be within distance at most $d$ to some $t\in S$.

  Now, since $|S|\leq s$ and $|L'|\geq s+2$, at least $2$ elements
  $b(a')$ and $b(a'')$ in $I\setminus \{a\}$ associated with $a'$ and $a''$ in
  $L'-\{a\}$ (we noted above that $b(a')\neq b(a'')$ for $a'\neq a''$) 
  must be at distance at most $d$ from the same element of
  $S$. Hence, their distance in $G$ is at most $2d\leq r$, which
  implies that $I$ is not a distance-$r$ independent set, a
  contradiction. This finishes the proof of the claim.
  \clqed\end{clproof}

Hence, if $A$ is large, we can safely remove any element $a'\in L'$
from $A$.  We repeat the whole procedure as long as this is possible,
i.e.\ until we do not find an equivalence class $K$ of size
$N(\fcl(2r,\delta)\cdot |D|^{\delta}+(r+1)^{4s}\cdot (s+1)+1)$
anymore.  Then, we return the current set $A$ as the set $B$ whose
existence we claimed in the theorem. Obviously, the classification
procedure can be carried out in polynomial time and has to be repeated
at most $|A|$ times.  Hence, we can compute the set~$B$ in polynomial
time.

According to \cref{lem:pathsclosure} there exists a function
$\fpaths$ and a polynomial-time algorithm that computes for the set
$B$ a superset $Y$ such that
\begin{itemize}
\item whenever $\dist_G(u,v)\leq r$ for $u,v\in B$, then
  $\dist_{G[Y]}(u,v)=\dist_G(u,v)$; and
\item $|Y|\leq \fpaths(r,\epsilon)\cdot |B|^{1+\epsilon}$.
\end{itemize}

We compute such a set $Y$. The following observation is immediate.

\begin{claim}
  The sets $A,B$ and $Y$ satisfy the following property.  There exists
  an $r$-independent subset $I\subseteq A$ of size $k$ in $G$ if and
  only if there exists an $r$-independent subset $I'\subseteq B$ of
  size $k$ in $G[Y]$.
\end{claim}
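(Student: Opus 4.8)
The plan is to combine the two structural facts already in hand. First, the reduction loop removes, at each iteration, a vertex that the preceding irrelevance claim shows to be irrelevant for the set currently being processed; telescoping this over the whole sequence of sets $A = A^{(0)} \supseteq A^{(1)} \supseteq \cdots \supseteq A^{(m)} = B$ yields that $G$ contains an $r$-independent subset of $A$ of size $k$ if and only if it contains an $r$-independent subset of $B$ of size $k$. Second, Lemma~\ref{lem:pathsclosure} guarantees that $Y \supseteq B$ preserves exactly all distances at most $r$ between vertices of $B$, while trivially $G[Y]$ is a subgraph of $G$, so distances in $G[Y]$ are never smaller than the corresponding distances in $G$.

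For the forward direction, given an $r$-independent $I \subseteq A$ with $|I| = k$ in $G$, the telescoped irrelevance argument produces an $r$-independent $I' \subseteq B$ with $|I'| = k$ in $G$; since $I' \subseteq B \subseteq Y$ and $G[Y]$ is a subgraph of $G$, every pair of distinct vertices of $I'$ that is at distance more than $r$ in $G$ remains at distance more than $r$ in $G[Y]$, so $I'$ is $r$-independent in $G[Y]$, as required. For the reverse direction, given an $r$-independent $I' \subseteq B$ with $|I'| = k$ in $G[Y]$, it suffices to observe that $I'$ is then $r$-independent in $G$ as well (and $B \subseteq A$): for distinct $u, v \in I'$ we have $\dist_{G[Y]}(u,v) > r$, and if $\dist_G(u,v) \leq r$ held, then Lemma~\ref{lem:pathsclosure} would force $\dist_{G[Y]}(u,v) = \dist_G(u,v) \leq r$, a contradiction.

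I expect no substantial obstacle: the claim is essentially bookkeeping. The one point to get right is matching each of the two facts about $Y$ to the correct direction — the subgraph inclusion $G[Y] \subseteq G$ (distances only grow) is what is needed when passing from an independent set in $G$ to one in $G[Y]$, whereas the exact preservation of short distances among elements of $B$ (Lemma~\ref{lem:pathsclosure}) is what is needed when passing back from $G[Y]$ to $G$ — together with checking that the telescoped use of the irrelevance claim is legitimate, which it is because each iteration re-establishes irrelevance with respect to the set currently being processed.
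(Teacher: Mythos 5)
Your proof is correct and is essentially the argument the paper has in mind when it declares the claim ``immediate'': telescope the irrelevance claim across the deletion loop to pass between $A$ and $B$, then use the subgraph inclusion $G[Y]\subseteq G$ for one direction and the exact preservation of distances up to $r$ among pairs in $B$ (Lemma~\ref{lem:pathsclosure}) for the other. You have also correctly flagged the only point that needs care, namely that the irrelevance claim is re-derived at each iteration with respect to the current set, so the telescoping is legitimate.
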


Hence the set $B$ and $Y$ provide us with a kernel, as desired.  It
remains to choose $\delta$ so that we obtain the claimed bounds on the
size of the kernel. We have
\begin{align*}
  |Y| & \leq \fpaths(r,\epsilon)\cdot |B|^{1+\delta}\\
      & \leq \fpaths(r,\epsilon)\cdot \big(\fproj(2r,\delta)\cdot |Z|^{1+\delta}\cdot N(\fcl(2r,\delta)\cdot |D|^{\delta}+(r+1)^{s(4r)}\\
      &\hspace{8.27cm} \cdot(s(4r)+1)+1) \big) ^{1+\delta}.
\end{align*}
Let $p\coloneqq (4t(r)+1)^{2r\cdot t(r)}$, where graphs from $\Cc$
exclude $K_{t(r)}$ as a depth-$r$ minor. Using the bounds of
\cref{thm:uqw}, we can now define a function $\fuqw$ so that
$$N(\fcl(2r,\delta)\cdot |D|^{\delta}+(r+1)^{s(4r)}\cdot (s(4r)+1)+1)\leq \fuqw(r,\delta)\cdot |D|^{\delta\cdot p}.$$ Hence
\begin{align*}
  |Y| & \leq \fpaths(r,\epsilon)\cdot \big(\fproj(2r,\delta)\cdot |Z|^{1+\delta}\cdot \fuqw(r,\delta)\cdot |D|^{\delta\cdot p} \big) ^{1+\delta}.
\end{align*}
Now using $|Z|\leq \fcl(2r,\delta)\cdot |D|^{1+\delta}$, $|D|\leq \fdual(r,\delta)\cdot k^{1+\delta}$ and $\delta^2\leq \delta$, we can define $\fker$ so that 
\[|Y|\leq \fker(r,\delta)\cdot k^{1+23p\delta}.\]
By defining $\delta\coloneqq \epsilon/23p$ we conclude the proof of
\cref{thm:kernel}.

\section{Hardness on somewhere dense classes}

We finally prove \cref{thm:hardness}, which we repeat for
convenience.

\hardness*

We shall use the following well-known characterization of somewhere
dense graph classes; recall that $G^{(r)}$ denotes the exact
$r$-subdivision of a graph $G$.

\begin{lemma}[\cite{NesetrilM11a}]\label{lem:sd-all}
  Let $\Cc$ be somewhere dense graph class that is closed under taking
  subgraphs.  Then there exists $r\in\N$ such that $G^{(r)}\in \Cc$
  for all graphs $G$.
\end{lemma}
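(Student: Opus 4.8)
The plan is to prove this characterization in the direction stated: assuming $\Cc$ is closed under subgraphs and somewhere dense (that is, \emph{not} nowhere dense), produce $p\in\N$ with $G^{(p)}\in\Cc$ for every graph $G$. By the definition of nowhere denseness recalled in the introduction, the negation gives a fixed radius $\rho\in\N$ such that for every $t$ there is a graph $G_t\in\Cc$ with $K_t\minor_\rho G_t$. I would first reduce the target: since $G^{(p)}$ is a subgraph of $K_n^{(p)}$ whenever $|V(G)|=n$, and $\Cc$ is subgraph-closed, it suffices to exhibit a single $p$ with $K_n^{(p)}\in\Cc$ for all $n$. So the whole problem reduces to the following: starting from arbitrarily large \emph{shallow clique minors}, carve out inside the host graphs arbitrarily large \emph{subdivided cliques} $K_m^{(p)}$ with $p$ bounded in terms of $\rho$ only.

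The first real step is to pass from shallow clique minors to shallow clique \emph{topological} minors. Fix a depth-$\rho$ minor model of $K_t$ in $G_t$: pairwise disjoint branch sets $V_1,\dots,V_t$, each inducing a connected subgraph of radius at most $\rho$ with a center $c_i$, and, for every $i\neq j$, a fixed edge $e_{ij}=x_{ij}y_{ij}$ with $x_{ij}\in V_i$ and $y_{ij}\in V_j$. Concatenating a path from $c_i$ to $x_{ij}$ inside $G_t[V_i]$ (length $\le\rho$), the edge $e_{ij}$, and a path from $y_{ij}$ to $c_j$ inside $G_t[V_j]$ (length $\le\rho$) yields, for each pair, a $c_i$–$c_j$ path of length at most $2\rho+1$; the obstruction is that for pairs sharing an index these paths overlap inside the common branch set. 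Disentangling them into pairwise internally disjoint paths while keeping the number of branch vertices unbounded is exactly the classical reduction underlying the equivalence of the $\nabla_\rho$ and $\widetilde{\nabla}_\rho$ (topological grad) parameters; concretely I would invoke the known fact (see~\cite{sparsity}) that a graph having $K_t$ as a depth-$\rho$ minor has $K_m$ as a topological minor in which every branch path has length at most $2\rho+1$, with $m\to\infty$ as $t\to\infty$. I expect this disentangling step to be the main obstacle, and I would cite it rather than reprove it.

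Given such a topological-minor model of $K_m$ inside $G_t\in\Cc$ with all branch paths of length at most $\ell:=2\rho+1$, the remainder is pigeonholing. Colour each pair $\{i,j\}$ by the length in $\{1,\dots,\ell\}$ of its branch path and apply Ramsey's theorem to this $\ell$-colouring of $E(K_m)$: once $m$ is at least the Ramsey number $R_\ell(m')$, there are $m'$ branch vertices all of whose pairwise branch paths have one common length $p$; since the branch paths of a topological-minor model are internally disjoint, the subgraph of $G_t$ they span is exactly $K_{m'}^{(p)}$, so $K_{m'}^{(p)}\in\Cc$ by subgraph-closedness. Letting $t\to\infty$ makes $m$, hence $m'$, arbitrarily large, so for every $m'$ some value $p(m')\in\{1,\dots,\ell\}$ satisfies $K_{m'}^{(p(m'))}\in\Cc$; as $\{1,\dots,\ell\}$ is finite, one value $p$ recurs for arbitrarily large $m'$. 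Finally $K_n^{(p)}$ is a subgraph of $K_{m'}^{(p)}$ for any $m'\ge n$, so $K_n^{(p)}\in\Cc$ for all $n$, whence $G^{(p)}\in\Cc$ for every graph $G$ by the reduction of the first paragraph; here $p$ depends only on $\rho$, hence only on $\Cc$, as required.
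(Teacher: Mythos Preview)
The paper does not prove this lemma at all: it is quoted verbatim from~\cite{NesetrilM11a} as a known characterization and used as a black box in the hardness reduction. So there is no ``paper's own proof'' to compare against.

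Your sketch is correct and is essentially the standard argument behind this characterization. Two minor remarks. First, the step you flag as the main obstacle---passing from arbitrarily large depth-$\rho$ clique minors to arbitrarily large shallow \emph{topological} clique minors---is indeed the nontrivial ingredient; it is exactly the content of the equivalence between the minor-based and topological-minor-based definitions of nowhere denseness proved in~\cite{NesetrilM11a} (and treated in~\cite{sparsity}), so citing it is appropriate. Just be aware that the depth bound one obtains for the topological model need not be $2\rho+1$ on the nose; the known conversion yields paths of length bounded by some function of $\rho$, which is all you use anyway. Second, in your final pigeonhole you should note that once some $p$ works for infinitely many $m'$, it works for \emph{all} $n$ via $K_n^{(p)}\subseteq K_{m'}^{(p)}$ for any $m'\geq n$; you do say this, but it is worth stressing that this monotonicity is what closes the argument.
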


The hardness proof is based on a very simply reduction.  Using
\cref{lem:sd-all} we will give an easy reduction from the
classical {\sc{Independent Set}} problem, which is known to be
$\textsc{W}[1]$-hard.

\begin{lemma}[\cite{downey1995fixed}]
  \textsc{Independent Set} is $\textsc{W}[1]$-hard on the class of all
  graphs.
\end{lemma}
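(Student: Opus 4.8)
The plan is to give the classical argument of Downey and Fellows, routed through the $k$-\textsc{Clique} problem. Recall that in parameterized complexity \textsc{Clique} --- given a graph $G$ and an integer $k$, decide whether $G$ contains a clique on $k$ vertices --- is (together with \textsc{Independent Set} itself) one of the prototypical $\textsc{W}[1]$-complete problems, its $\textsc{W}[1]$-hardness being established by a parameterized reduction from the weighted weft-$1$ circuit (equivalently, weighted $q$-CNF) satisfiability problem that \emph{defines} the class $\textsc{W}[1]$. First I would take the $\textsc{W}[1]$-hardness of \textsc{Clique} as the starting point and reduce it to \textsc{Independent Set}.

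The reduction is the standard complementation. Given an instance $(G,k)$ of \textsc{Clique}, output the instance $(\overline{G},k)$, where $\overline{G}$ is the complement graph on the same vertex set, i.e.\ $uv\in E(\overline{G})$ if and only if $uv\notin E(G)$. A vertex subset $X\subseteq V(G)$ is a clique in $G$ exactly when it is an independent set in $\overline{G}$; hence $(G,k)$ is a yes-instance of \textsc{Clique} if and only if $(\overline{G},k)$ is a yes-instance of \textsc{Independent Set}. This transformation runs in polynomial time and leaves the parameter $k$ untouched, so it is a parameterized reduction, and therefore $\textsc{W}[1]$-hardness of \textsc{Clique} transfers to $\textsc{W}[1]$-hardness of \textsc{Independent Set} on the class of all graphs.

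If instead one wants an argument that does not invoke \textsc{Clique} as a black box, I would reduce directly from \textsc{Multicolored Clique} (or from antimonotone weighted $2$-CNF satisfiability), using the usual selection gadgets: one vertex group per ``coordinate'', adjacency encoding incompatibility between partial choices, and an independence target equal to the number of coordinates. The combinatorial verification that an independent set of the prescribed size corresponds precisely to one consistent choice per coordinate is routine once the gadget is fixed.

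The only genuinely nontrivial content is the $\textsc{W}[1]$-hardness of the \emph{base} problem --- \textsc{Clique}, or the weighted satisfiability problem --- derived from the definition of $\textsc{W}[1]$ via weft-$1$ circuits; this is exactly the theorem of Downey and Fellows~\cite{downey1995fixed} cited above, which I would simply invoke. The step from \textsc{Clique} to \textsc{Independent Set} by complementation is immediate and introduces no additional difficulty, so the main obstacle lies entirely inside the cited result rather than in anything that needs to be reproved here.
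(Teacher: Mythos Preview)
The paper does not prove this lemma at all: it is stated with a citation to Downey and Fellows and used as a black box in the reduction that follows. Your sketch of the complementation reduction from \textsc{Clique} is correct and is indeed the standard argument, but there is no ``paper's own proof'' to compare against here --- the authors simply invoke the cited result.
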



\setcounter{claim}{0}

\begin{figure}[h]
                \centering
                \def\svgwidth{0.8\columnwidth}
                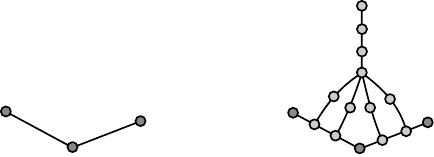
\caption{Construction of $J$ from $G$, where $G$ is a path on $3$ vertices. The graph $H$ is the exact $r$-subdivision of $J$.}\label{fig:reduction}
\end{figure}

\begin{proof}[Proof of \cref{thm:hardness}]
  Let $r\in\N$ be the integer given by \cref{lem:sd-all} for the
  class $\Cc$, that is, $G^{(r)}\in \Cc$ for every graph $G$. Now fix
  an arbitrary graph~$G$.  We construct in polynomial time the
  following graph $H$; see \cref{fig:reduction}. We first
  construct $G^{(3)}$, i.e., we replace in $G$ every edge by a path of
  length $3$. Next, we add two new vertices $x$ and $y$. We connect
  $x$ to each of the previously added subdivision vertices using a
  path of length $2$, and we connect $y$ to $x$ using a path of
  length~$3$. Denote the resulting graph by $J$. Now construct
  $H\coloneqq J^{(r)}$, which by assumption on~$r$ belongs to~$\Cc$.
  We denote the vertices of $H$ that are also vertices of $G$ by $O$
  (for {\em{original}}).

  \pagebreak The following claims summarize the main distance
  properties of $H$.

\begin{claim}\label{obs:distO}
  For all $u,v\in O$ we have
  $\dist_G(u,v)\geq 2\Leftrightarrow\dist_H(u,v)= 6r$.
\end{claim}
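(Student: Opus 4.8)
The plan is to analyze the distance structure of $H$ by tracking how paths in $H$ between vertices of $O$ correspond to walks in $G$. Recall that $H = J^{(r)}$ where $J$ is built from $G^{(3)}$ by attaching the apex gadget $x,y$: every edge of $G$ becomes a path of length $3$ in $G^{(3)}$, and then every edge of $J$ is subdivided $r-1$ times in forming $H$. Thus an edge of $G$ contributes, in $H$, a path of length $3r$ between the two corresponding vertices of $O$, and the subdivision vertices of $G^{(3)}$ are each joined to $x$ by a path of length $2r$ in $H$.

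First I would establish the forward direction. If $\dist_G(u,v)\ge 2$ for $u,v\in O$, then $u\neq v$ and $uv\notin E(G)$. Any $u$–$v$ path in $H$ either stays within the ``subdivided $G^{(3)}$'' part or routes through $x$ (the only vertex of $H$ that connects different edge-gadgets other than via the $O$-vertices themselves). A path staying inside the $G^{(3)}$-part corresponds to a walk in $G$ from $u$ to $v$; since $\dist_G(u,v)\ge 2$ such a walk uses at least two edges of $G$, hence has length at least $2\cdot 3r = 6r$ in $H$. A path through $x$ must reach $x$ from $u$: the closest subdivision vertex to $u$ is at distance $3r/2$... but $3$ is odd, so more carefully, from $u\in O$ the nearest subdivision vertex of $G^{(3)}$ incident to an edge at $u$ is at distance $r$ or $2r$ along the length-$3r$ gadget; taking the $H$-distance from $u$ to $x$, the minimum is $r + 2r = 3r$ (go to the first subdivision vertex, then up to $x$). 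Symmetrically $x$ to $v$ costs at least $3r$, so a path through $x$ has length at least $6r$. Hence $\dist_H(u,v)\ge 6r$. For the reverse direction, if $\dist_G(u,v)\le 1$, then either $u=v$ (distance $0<6r$) or $uv\in E(G)$, and then the single edge-gadget gives a $u$–$v$ path of length $3r<6r$, so $\dist_H(u,v)<6r$, i.e.\ $\dist_H(u,v)\ge 6r$ fails.

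I would organize the argument by first proving a general structural lemma: for $u,v\in O$, $\dist_H(u,v) = \min(3r\cdot\dist_G(u,v),\ \text{(length of a path through }x))$, and the through-$x$ term is at least $6r$ always (in fact it equals $6r$ precisely when $u,v$ lie on a common edge-gadget through its midpoint region, but we only need the lower bound). Combining: $\dist_H(u,v)\ge 6r$ iff $\dist_G(u,v)\ge 2$, since $3r\cdot 1 = 3r < 6r \le 3r\cdot 2$. The main obstacle, though largely bookkeeping, is being careful about the exact distances in the two-layer subdivision: one must pin down the $H$-distance from an $O$-vertex to $x$ and verify no shortcut through $x$ ever beats the ``two $G$-edges'' route, and also confirm that $x$ and $y$ themselves do not create unexpected short connections between $O$-vertices (they cannot, since $y$ is a pendant-like vertex at distance $3r$ from $x$ and everything from $O$ to $y$ must pass through $x$). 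Once these local distance computations are in hand, the equivalence follows immediately by the arithmetic $3r < 6r \le 6r$.
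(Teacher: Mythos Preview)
Your proof is correct and follows the same direct distance-computation approach as the paper, which dispatches the claim in one line by asserting that distances between vertices of $O$ scale by exactly $3r$ when passing from $G$ to $H$. Your more careful split into paths staying inside the subdivided-$G^{(3)}$ part versus paths routed through $x$ is in fact warranted: the paper's literal assertion is not quite true once $\dist_G(u,v)\geq 3$ (the detour through $x$ yields $\dist_H(u,v)=6r$, not $3r\cdot\dist_G(u,v)$), though this does not affect the equivalence since the threshold of interest is $\dist_G=2$.
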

\begin{clproof}
  It is straightforward to see that the distance between any two
  vertices of $O$ is exactly $3r$ if they are adjacent in 
  $G$ and exactly $6r$ (possibly via a path using $x$) if they have 
  distance at least $2$ in $G$. The
  claim follows.  \clqed\end{clproof}

\begin{claim}\label{obs:distR}
  For all $u,v\in V(H)\setminus O$ we have $\dist_H(u,v)<6r$.
\end{claim}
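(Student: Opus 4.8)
The plan is to route every path through the ``hub'' vertex $x$. Concretely, I would first prove the auxiliary bound that $\dist_H(x,z)\le 3r$ for every $z\in V(H)\setminus O$, and moreover $\dist_H(x,z)\le 3r-1$ whenever $z\ne y$. Granting this, the claim is immediate from the triangle inequality: for $u,v\in V(H)\setminus O$, if neither equals $y$ then $\dist_H(u,v)\le \dist_H(u,x)+\dist_H(x,v)\le(3r-1)+(3r-1)=6r-2<6r$; if exactly one of them is $y$, say $v=y$, then $\dist_H(u,v)\le(3r-1)+3r=6r-1<6r$; and if $u=v=y$ there is nothing to prove.

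To establish the auxiliary bound, I would first analyse $J$. Every vertex of $J$ other than the vertices of $O$ and the vertex $y$ is at distance at most $2$ from $x$ in $J$: the subdivision vertices of $G^{(3)}$ are joined to $x$ by the added paths of length $2$, their private neighbours on those paths and the vertex $q$ are at distance $1$, and the vertex $p$ is at distance $2$. Since $H=J^{(r)}$ and an exact $r$-subdivision multiplies distances between vertices of $J$ by $r$ (one can just subdivide a shortest $J$-path), we get $\dist_H(x,s)\le r\cdot\dist_J(x,s)\le 2r$ for every $s\in V(J)\setminus(O\cup\{y\})$, while $\dist_H(x,y)\le r\cdot\dist_J(x,y)=3r$ because the only neighbour of $y$ in $J$ is the vertex $p$ at distance $2$ from $x$.

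Next I would handle the remaining vertices of $H$, namely the subdivision vertices created when passing from $J$ to $J^{(r)}$. Such a vertex $z$ lies as an internal vertex on the length-$r$ path of $H$ representing some edge $ss'$ of $J$, so $\dist_H(z,s)\le r-1$ and $\dist_H(z,s')\le r-1$. Now, every edge of $J$ has at least one endpoint in $V(J)\setminus(O\cup\{y\})$: the only edges incident to $O$ are the edges of $G^{(3)}$ joining an original vertex to a subdivision vertex of $G^{(3)}$, and the only edge incident to $y$ is $yp$. Choosing such an endpoint $s$, we obtain $\dist_H(x,z)\le\dist_H(x,s)+\dist_H(s,z)\le 2r+(r-1)=3r-1$. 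Combining with the previous paragraph, $\dist_H(x,z)\le 3r-1$ for all $z\in V(H)\setminus O$ with $z\ne y$, and $\dist_H(x,y)\le 3r$, which is exactly the auxiliary bound.

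The one place where the construction is tight — and hence the only point requiring care — is the vertex $y$: it genuinely sits at distance $3r$ from $x$, so the crude estimate $\dist_H(u,v)\le 3r+3r$ would only give $\le 6r$, not the strict inequality. The argument must therefore isolate $y$ as the \emph{unique} vertex of $H$ outside $O$ that can ``pay'' the full $3r$, everything else being at distance at most $3r-1$ from $x$; this is precisely what the edge-by-edge analysis of $J$ above secures (the endpoints $p$, the $W$-vertices, etc., all pull their neighbouring subdivision vertices strictly inside radius $3r$). So no real obstacle remains: the proof is a short but slightly fiddly distance computation in the subdivided graph, organised around the triangle inequality through $x$.
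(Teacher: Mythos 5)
Your proof is correct and takes the same route as the paper: bound $\dist_H(x,z)$ for $z\in V(H)\setminus O$ by $3r-1$ (except that $\dist_H(x,y)=3r$), then conclude via the triangle inequality through $x$. The paper's proof is just a one-line compression of the same computation, noting that any $u\in V(H)\setminus(O\cup\{y\})$ is within $(r-1)+2r$ of $x$.
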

\begin{clproof}
  The distance from any $u\in V(H)\setminus (O\cup\{y\})$ to $x$ is at
  most $(r-1)+2r=3r-1$, while the distance from $y$ to $x$ is
  $3r$. Hence, the distance between any $u,v\in V(H)\setminus O$ is
  smaller than $6r$.  \clqed\end{clproof}

We can hence relate the size of a distance-$(6r-1)$ independent set in
$H$ to the size of an independent set in $G$ as follows.

\begin{claim}
We have $\alpha_{6r-1}(H)=\alpha_1(G)+1$. 
\end{claim}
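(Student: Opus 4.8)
The plan is to prove both inequalities $\alpha_{6r-1}(H) \geq \alpha_1(G) + 1$ and $\alpha_{6r-1}(H) \leq \alpha_1(G) + 1$ using Claims~\ref{obs:distO} and~\ref{obs:distR}, which control distances in $H$.

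For the lower bound, I would start with a maximum independent set $I \subseteq V(G) = O$ in $G$, so that any two vertices of $I$ are at distance at least $2$ in $G$. By Claim~\ref{obs:distO}, any two vertices of $I$ are at distance at least $6r$ in $H$, hence at distance more than $6r-1$. It remains to add one more vertex to witness $\alpha_{6r-1}(H) \geq |I| + 1$; the natural candidate is $y$, which is the vertex ``farthest away'' from everything. I would check that $\dist_H(y, u) \geq 6r$ for every $u \in O$: indeed the shortest path from $y$ to any original vertex must pass through $x$ and then through some subdivision vertex before reaching $O$, and one can bound this from below using that $\dist_H(y,x) = 3r$ and the path then continues for at least another $2r + (r-1) + \ldots$ — I would compute this carefully and confirm it is at least $6r$. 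Then $I \cup \{y\}$ is a distance-$(6r-1)$ independent set in $H$ of size $|I| + 1 = \alpha_1(G) + 1$.

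For the upper bound, let $J \subseteq V(H)$ be a distance-$(6r-1)$ independent set. By Claim~\ref{obs:distR}, any two vertices of $V(H) \setminus O$ are at distance strictly less than $6r$, i.e., at most $6r-1$, so $J$ can contain at most one vertex outside $O$. Therefore $J \cap O$ has size at least $|J| - 1$, and $J \cap O$ is a distance-$(6r-1)$ independent set consisting of original vertices. By Claim~\ref{obs:distO}, any two vertices of $J \cap O$ being at distance at least $6r$ (in particular more than $6r-1$) in $H$ forces them to be at distance at least $2$ in $G$, i.e., non-adjacent. Hence $J \cap O$ is an independent set in $G$, so $|J| - 1 \leq |J \cap O| \leq \alpha_1(G)$, giving $|J| \leq \alpha_1(G) + 1$. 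Taking $J$ to be a maximum distance-$(6r-1)$ independent set in $H$ concludes.

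The main obstacle — really the only non-routine point — is pinning down the exact distance from $y$ to the original vertices in $H$ to confirm it is at least $6r$, since Claim~\ref{obs:distR} only bounds distances among non-original vertices from above and Claim~\ref{obs:distO} only speaks about pairs within $O$. I would handle this by tracing through the construction layer by layer: in $J$, the vertex $y$ is at distance $3$ from $x$, $x$ is at distance $2$ from each subdivision vertex of $G^{(3)}$, and each such subdivision vertex is at distance $1$ or $2$ from an original vertex, so $\dist_J(y, O) \geq 3 + 2 + 1 = 6$ at minimum; passing to $H = J^{(r)}$ multiplies all distances by $r$, yielding $\dist_H(y, O) \geq 6r$, which is exactly what is needed. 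If one prefers, this computation can also be folded into a strengthened version of Claim~\ref{obs:distR} but I would keep it inline in the proof of the present claim.
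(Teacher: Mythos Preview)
Your proof is correct and follows essentially the same approach as the paper's: both directions use Claims~\ref{obs:distO} and~\ref{obs:distR} exactly as you describe, and the paper likewise relies on the fact that $y$ is at distance at least $6r$ from every vertex of $O$ (stated there without the explicit computation you supply). Your upper-bound argument is in fact slightly cleaner than the paper's, since you work directly with $J\cap O$ rather than first swapping the one possible non-original vertex for $y$.
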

\begin{clproof}
  Let $I$ be a distance-$(6r-1)$ independent set in $H$. By
  \cref{obs:distR}, $I$ can contain at most one vertex of
  $V(H)\setminus O$. We may assume that this vertex is the vertex $y$,
  because $y$ is at distance $6r$ from every vertex of $O$.
  Now $I'\coloneqq I\setminus \{y\}\subseteq O$ is an independent set
  in $G$ by \cref{obs:distO}.

  Conversely, let $I$ be an independent set in $G$. Then $I\cup \{y\}$
  is a distance-$(6r-1)$ independent set in $H$.  \clqed\end{clproof}

Hence, finding an independent set of size $k$ in an arbitrary graph
$G$ reduces to finding a distance-$(6r-1)$ independent set of size
$k+1$ on a polynomial-time computable graph $H$ belonging to
$\Cc$. This proves that \textsc{Distance-$(6r-1)$ Independent Set}
on~$\Cc$ is $\textsc{W}[1]$-hard.
\end{proof}

\bibliographystyle{abbrv}
\bibliography{ref}


\end{document}